\newtheorem{theorem}{Theorem}[section]
\newtheorem{lemma}[theorem]{Lemma}
\newtheorem{example}[theorem]{Example}
\newtheorem{proposition}[theorem]{Proposition}
\newtheorem{definition}[theorem]{Definition}
\newtheorem{corollary}[theorem]{Corollary}
\newtheorem{remark}[theorem]{Remark}
\newtheorem{assumption}[theorem]{Assumption}
\numberwithin{equation}{section}
\newcommand{\filtration}{\left(\mathcal{F}_t\right)_{t\geq0}}
\newcommand{\F}{\mathcal{F}}
\newcommand{\B}{\mathcal{B}}
\newcommand{\G}{\Gamma}
\newcommand{\Pred}{\mathcal{P}}
\newcommand{\h}{\mathcal{H}}
\newcommand{\filt}{\mathbb{F}}
\newcommand{\essinf}{\mathop{\mbox{ess inf}}}
\newcommand{\esssup}{\mathop{\mbox{ess sup}}}
\newcommand\abs[1]{\left\vert {#1} \right\vert}
\newcommand {\R}{\mathbb {R}}
\newcommand {\N}{\mathbb {N}}
\newcommand {\Q}{\mathbb {Q}}
\date{}
\title[Hedging, good-deal bounds and uncertainty]{Hedging under generalized good-deal bounds\\ and model uncertainty}
\author[D. Becherer]{Dirk Becherer}
\address[D. Becherer]{Institut f\"ur Mathematik, Humboldt-Universit\"at zu Berlin, D-10099 Berlin, Germany}
\email{becherer\,@\,mathematik.hu-berlin.de}
\author[K. Kentia]{Klebert Kentia}
\address[K. Kentia]{Previous: Institut f\"ur Mathematik, Humboldt-Universit\"at zu Berlin, D-10099 Berlin, Germany \linebreak Current: Institut f\"ur Mathematik, Goethe-Universit\"at Frankfurt, D-60054 Frankfurt a.M., Germany}
\email{kentia\,@\,math.uni-frankfurt.de}
\thanks{Support from the German Science Foundation DFG via the Berlin Mathematical School and the Research Training Group 1845 Sto-A is gratefully acknowledged.}
\begin{document}

\keywords{Good-deal bounds, good-deal hedging, model uncertainty, incomplete markets, multiple priors, backward stochastic differential equations}
\subjclass[2010]{60G44, 91G10, 93E20, 91B06}

\begin{abstract}
We  study a notion of good-deal hedging that corresponds to good-deal valuation and is 
described by
 a uniform supermartingale property for the tracking errors of hedging strategies.
For generalized good-deal constraints, defined in  terms of  correspondences for the Girsanov kernels of pricing measures,
 constructive results on good-deal hedges and valuations are derived from backward 
stochastic differential equations, including new examples with explicit formulas. Under model un\-cer\-tainty about the market prices of risk of hedging assets, a robust approach 
leads to a reduction or even elimination of a speculative component in good-deal hedging, which is shown to be equivalent to a global risk-minimization the sense of 
\citet[][]{FollmerSondermann} if uncertainty is sufficiently large. 
\end{abstract}
\maketitle

\section{Introduction}\label{sec:Introduction}

The theory of good-deal bounds offers a valuation approach for contingent 
claims in incomplete markets, where the issuer of derivative contracts cannot eliminate his risk entirely by dynamic hedging but only partially. Good deal bounds are 
defined as economically meaningful valuation bounds by ruling  out not only arbitrage opportunities but also a suitable notion of deals that are 'too good'. The most 
cited work appears to be  \citet{CochraneRequejo}, and for pioneering conceptual contributions we refer also to \citet{CernyHodges,Cerny, JaschkeKuechler}. 
Mathematical theory in continuous time, where earlier contributions have partly relied on heuristic arguments, has been made rigorous and generalized in 
\citet[][]{BjorkSlinko,KloppelSchweizer}.
Inherent to the concept of good-deal valuation bounds is already a certain notion of robustness,
in that it considers
a 'no good-deal' range of
arbitrage-free valuations over a suitable set of risk neutral good-deal pricing measures, instead of using one single 
calibrated pricing measure. Such is conservative for risks that cannot be perfectly hedged, e.g.\  illiquid exotic options, and also with respect to  mark-to-market 
losses that could arise during the regular recalibration of risk-neutral valuation models to the actual market prices of liquid options, if model predictions for price relations differ from reality.  
Good deals have been defined mostly in terms of Sharpe ratios \citep[][]{CochraneRequejo,CernyHodges,Cerny,BjorkSlinko,BayraktarYoung,Delong12,CarassusTemam}
or alternatively by some notion of (quadratic or else) expected utilities \citep[cf.][]{Cerny,KloppelSchweizer,Becherer}. 
Either definition relates to a single objective probability $P$ in terms of market prices of risk. Hence robustness and model uncertainty, being important for finance and decision theory
\citep[cf.][]{Cont, HansenSargent,GilboaS89} in general,
 are relevant problems for good-deal theory. 
This theory has evolved for some time  purely as a pricing theory; in the conclusions of \citet[][]{BjorkSlinko} a quest for a corresponding dual theory of good deal hedging was noted 
as a "challenging open problem". Different notions for good deal hedging have been proposed in \citet[][]{Becherer}, where 
good-deal hedging strategies are defined as minimizers of a-priori  risk measure in spirit of \citet[][]{BarrieuElKaroui}, or \citet[][]{CarassusTemam} who suggest minimum 
variance hedging and demonstrate numerically that both approaches perform comparably well.

The present paper is concerned with approaches to good deal hedging \citep[as in][]{Becherer} under Knightian model uncertainty (ambiguity) about the objective probability, 
with respect to which good deals are defined. To this end, we pose the good-deal valuation and hedging problem in a framework with multiple (uncertain) priors,
 and follow a robust worst-case approach as in \citet{GilboaS89}.
Good deal valuation fits into the theory of dynamic monetary convex, even coherent, risk measures (or monetary utility functionals), for which a rich theory in high generality exists \citep[see e.g.][]{KloeppelS07,BionDiNunno13,DrapeauKupper13}.
On dual representations and time consistency of value functions there 
 seems little to add.  Instead, we contribute  constructive and qualitative results on the (robust, good-deal) hedging strategies.
Hedging strategies arise as minimizers for suitable a-priori coherent risk measures under optimal risk sharing with the market.
Literature on robust utility maximizing {strategies} appears richer than on robust hedging {strategies} (except on superhedging
 \citep[cf.][]{ElKarozuiQuenez95}, which excludes losses but is often too expensive); for  innovative approaches and many more references on the former we refer to 
\citet{DowWerlang,Quenez,Garlappi,Schied08,Seifried10,Biagini}. Results  on robust good deal hedging in the interesting recent work by \citet{BoyarchenkoCerratoCrosbyHodges} are different from ours. They study 
a different uncertainty-penalized preference functional and obtain numerical results in discrete time, whereas we use dynamic coherent risk measures in continuous time and focus mostly on analytical results. 

To derive the optimality equations, we systematically use backward stochastic differential 
equations {(BSDEs)} \citep[see][]{KarouiPengQuenez}, which in some informal way already appear present in \citep[][Sect.II.C.6]{CochraneRequejo}, instead of Markovian dynamical programming methods and PDE \citep[as][]{CochraneRequejo,BjorkSlinko}.
To this end,  Section~\ref{sec:Prelim} formulates an abstract framework of good-deal  constraints, described by predictable correspondences for the Girsanov kernels of the respective pricing 
measures, that is sufficiently general for all later sections. It incorporates the common radial good-deal constraints on Girsanov kernels, defined by (constant) scalar bounds on 
Euclidian norms, that are predominant in the good deal literature
\citep[cf.][]{BjorkSlinko,BayraktarYoung,BondarenkoLongarela,Donnelly,MarroquinMoreno} but
also extensions like ellipsoidal constraints, that still permit explicit analytic generators in the BSDEs of interest, 
being efficient for 
Monte Carlo approximation.  Notably, the generalized constraints  are needed to cover relevant examples  in Section~\ref{sec:GDRestrictionUnderUncertainty} for model uncertainty about  the market 
prices of risk  of the assets that are available  in the (incomplete) market for partial hedging, cf.\ Remarks~\ref{Rem:C0plusTheta}-\ref{Rem:ellipUnc}.
Section~\ref{sec:No-Good-Deal-HedgingApproach} studies good-deal hedging strategies and provides new case studies  with closed-form solutions. 
For an exchange option between tradeable and non-tradeable assets, the good-deal bounds are given by a  Margrabe formula with adjusted 
input parameters. For the stochastic volatility model by Heston, we obtain  semi-explicit formulas under good-deal constraints for pricing measures, which restrict the mean reversion 
level of the stochastic variance process to some interval. This shows, how a conservative valuation for volatility related risks could be obtained  in a (simpler) dominated setup without singularity of  measures. To illustrate to which extend our BSDE solutions can be computed  by 
efficient but generic Monte Carlo methods, complementing numerical results by \citet[][]{CarassusTemam}, we investigate errors between an efficient but generic Monte Carlo approximation 
and our analytic formula in a 4-dimensional example. In a framework without uncertainty,  good deal hedging strategies are shown to comprise a speculative component 
to compensate for unhedgeable risks, see Section~\ref{subsec:Example1withEllipsoids}.

Good deal  hedges and valuations that are robust with respect to uncertainty
are derived in  the main Section~\ref{sec:GDRestrictionUnderUncertainty}. This builds on the analysis of 
previous sections since the problem with uncertain multiple priors can be related to a respective problem without uncertainty for  suitably  enlarged  correspondences, within a (dominated) setup of absolutely continuous probability measures. 
A worst-case approach naturally leads to robust valuation by the 
widest good-deal bounds that are obtained over all (dominated) probabilistic models under consideration. We show that there is also a corresponding notion for robust dynamic hedging.
Indeed, there exists a hedging strategy such that its tracking error, given by the dynamic variation of the robust good-deal bounds 
plus the profits and losses from  hedging, exhibits a supermartingale property uniformly over all a-priori valuation measures with respect to all  priors. 
This means, that hedging strategies are at least mean-self-financing \cite[in the sense of][]{Schweizer} in a uniform sense.
By saddle point arguments we derive a minmax identity, that shows how the robust good-deal hedging strategy is given by the (ordinary) good-deal hedging strategy with respect to a 
suitable worst case measure (for a related result in the context of robust utility maximization cf.\ e.g.\ \cite{Schied08}). Both are identified constructively, see~Remark~\ref{Rem:ellipUnc}. 
As to be expected, a robust approach to uncertainty reduces the speculative component of the good-deal hedging strategy. As a further contribution, we prove that if the uncertainty is 
large enough in relation to the good-deal constraints, then the robust good-deal hedging strategy does no longer include any {speculative} component, but coincides with 
a (globally) risk minimizing strategy in spirit of \citet[][]{FollmerSondermann} \citep[cf. the survey of][]{Schweizer} with respect to a suitable measure. This offers theoretical support to the commonly held perception
that hedging should abstain from speculative objectives
and offers  a new  justification for risk-minimization, which may be criticized at first 
sight for using a quadratic hedging criterion that penalizes gains and losses alike.

\section{Mathematical framework and preliminaries}\label{sec:Prelim}
We work on a filtered probability space $(\Omega, \F,\filt,P)$ with time horizon $T<\infty$; the filtration $\filt=(\F_t)_{t\le T}$  generated by an $n$-dimensional Brownian motion 
$W$, augmented with $P$-null-sets, satisfying the usual conditions. Let 
 $\F=\F_T$. Inequalities between random variables (processes) are meant to hold almost everywhere 
with respect to $P$ (resp.\ $P\otimes dt$). For stopping times $\tau\le T$, the conditional expectation given $\F_\tau$ under a probability 
measure $Q$ is denoted by  $E^Q_\tau\big[\cdot\big]$. We write $E_\tau=E^P_\tau$ if there is no ambiguity about $P$. 
$L^p(\R^m,Q)$, $ p\in[1,\infty)$, 
(or $L^\infty(\R^m,Q)$) denotes the space of $\F_T$-measurable 
$\R^m$-valued random variables $X$ with $\lVert X\rVert_{L^p(Q)}^p = E^Q\left[\lvert X\rvert^p\right]<\infty$ (resp.\ $X$ $Q$-essentially bounded). $\Pred$ denotes the predictable 
$\sigma$-field on $[0,T]\times\Omega$.  Stochastic integrals of predictable integrands $H$ with respect to semimartingales $S$ are denoted $H\cdot S=\int_0^\cdot H_t^{\textrm{tr}}dS_t$. 
Let $\h^p(\R^{m},Q)$ denote the space of predictable $\R^{m}$-valued processes $Z$ with $\lVert Z\rVert_{\h^p(Q)}^p = E^Q\left[\big(\int_0^T\lvert Z_s\rvert^2 ds\big)^{\frac{p}{2}}\right]<\infty$,  and 
$\mathcal{S}^p(Q)$ that of c\`adl\`ag 
semimartingales $Y$ with $\lVert Y\rVert_{\mathcal{S}^p(Q)}=\left\Vert \sup_{t\le T}\abs{Y_t}\right\Vert_{L^p(Q)}<\infty$ 
 If the dimension is clear, we just write $L^p(Q)$ and $\h^p(Q)$, and if $Q=P$ 
just $L^p,\ \h^p$ and $\mathcal{S}^p$, for $p\in[1,\infty]$. The Euclidean norm of a matrix $M\in\R^{n\times d}$ is $\lvert M\rvert:=(\mathrm{Tr}\ MM^{\rm{tr}})^{1/2}$ and its usual operator 
norm is denoted by $\lVert M\rVert$.

We will make use of classical theory of BSDEs \citep{PardouxPeng90,KarouiPengQuenez}. BSDEs are stochastic differential equations of the type
        \begin{equation}\label{eq:BSDEdef}
        	-dY_t = f(t,Y_t,Z_t)dt -Z^{\textrm{tr}}_tdW_t, \textrm{ for }t\le T,\textrm{ and }\ Y_T=X,
        \end{equation}
where the terminal condition $X$ is an $\F_T$-measurable random variable and the generator $f:\Omega\times[0,T]\times\R^{1+n}\rightarrow \R$ a $\Pred\otimes\B(\R^{1+n})$-$\B(\R)$-measurable function.  They are well established in mathematical economics.
A pair $(f,X)$ constitutes standard parameters (also called data) for a BSDE \eqref{eq:BSDEdef} if $X\in L^2$, $f(\cdot,0,0)$ is in $\h^2$ and $f$ is uniformly Lipschitz in $y$ and $z$, i.e.\  there 
exists $L\in(0,\infty)$ such that $P\otimes dt$-a.e.
                \(
                	\ \lvert f(\omega,t,y,z)-f(\omega,t,y',z')\rvert \le L(\lvert y-y'\rvert+\lvert z-z'\rvert)\  
                \)
for all $y,z,y',z'$.
A solution of the BSDE (\ref{eq:BSDEdef}) is a couple $(Y,Z)$ of processes such that $Y$ is real-valued continuous, adapted, and $Z$ is $\R^{n}$-valued predictable and satisfies $\int_0^T\lvert Z_t\rvert^2dt<\infty$. 
For  standard parameters $(f,X)$ there exists a unique solution $(Y,Z)\in\mathcal{S}^2\times\h^2$ to the BSDE (\ref{eq:BSDEdef}), \citep[Thm.2.1]{KarouiPengQuenez}. Let us refer to BSDEs with standard parameters 
as {\em classical} and to the solution to such BSDEs as {\em standard}. A comparison theorem \citep[Prop.3.1]{KarouiPengQuenez} is very useful for optimal control problems stated in terms of classical BSDEs: 
Given standard BSDE solutions  $(Y,Z),(Y^a,Z^a)_{a\in A}$ for a family of standard parameters $(f,X),(f^a,X^a)_{a\in A}$, if there exists $\bar{a}\in A$ such that 
$\displaystyle f(t,Y_t,Z_t) = \essinf_{a\in A}\ f^a(t,Y_t,Z_t)=f^{\bar{a}}(t,Y_t,Z_t),\ P\otimes dt$-a.e., and $\displaystyle X = \essinf_{a\in A}\ X^a = X^{\bar{a}}$, then $\displaystyle Y_t = \essinf_{a\in A}\ Y_t^a = Y_t^{\bar{a}}$ 
holds for all $t\le T.$

Section~\ref{sec:Parameterization} will specify   a financial market with $d$ risky assets whose discounted price processes $S^i$ ($ i\le d$) with respect to a fixed num\'eraire asset (with unit price $S^0=1$) are non-negative locally bounded semimartingales. 
The set of equivalent local martingale measures (risk neutral pricing measures) is denoted by $\mathcal{M}^e:=\mathcal{M}^e(S)$ and we assume $\mathcal{M}^e\neq \emptyset$, i.e.\
there is no free lunch with vanishing risk in the sense of \citet{DelbaenSchachermayer}. The market is incomplete with  $\mathcal{M}^e$ being of infinite cardinality if $d<n$. 
Following \citet{Rockafellar}, we define generalized good-deal bounds by using abstract predictable correspondences (multifunctions) $C$ defined on $[0,T]\times\Omega$ with non-empty compact and convex values $C_t(\omega)\subset \R^n$, 
with predictability meaning that for each closed set $F\subset \R^n$, the set $C^{-1}(F):=\left\lbrace (t,\omega)\in[0,T]\times\Omega:\ C_t(\omega)\cap F\neq \emptyset\right\rbrace$ 
is predictable. More specific examples, e.g.\ for ellipsoidal constraints, will exhibit (semi)explicit solutions for optimizers.
 We write $C:[0,T]\times\Omega\rightsquigarrow \R^n$ 
with ``$\rightsquigarrow$'' for a set-valued mapping $C$, and $\lambda\in C$ means that the predictable function $\lambda$ is a {\em selection} of $C$, i.e.\  $\lambda_t(\omega)\in C_t(\omega)$ holds on  $[0,T]\times \Omega$. 
Note that the seemingly weaker  existence of a $\tilde{\lambda}$ with $\tilde{\lambda}_t(\omega)\in C_t(\omega)$ just $P\otimes dt$-a.e.\ on  $[0,T]\times \Omega$ implies existence of a selection $\lambda\in C$, if $0\in C\neq \emptyset$ (as often later).
 Throughout, a {\em standard} correspondence will refer to a predictable one, whose values are non-empty, compact and convex.
Let $C:[0,T]\times \Omega\rightsquigarrow\R^n$ be a fixed standard correspondence with $0\in C$. 
The set $\mathcal{Q}^{\mathrm{ngd}}:=\mathcal{Q}^{\mathrm{ngd}}(S)$ of (equivalent) no-good-deal measures is given by  
\begin{equation}\label{eq:QngdDefinition}
	      \mathcal{Q}^{\mathrm{ngd}}(S):=\Big\lbrace Q\in\mathcal{M}^e\Big\lvert\ dQ/dP\Big.=\mathcal{E}\left( \lambda\cdot W\right),
															 \lambda\ \textrm{predictable, bounded, } 
                                                                                                                        \lambda\in C\Big\rbrace.
\end{equation}
For good-deal valuation and hedging results later, concrete assumptions (e.g.\ Assumption~\ref{asp:BddCorr}) 
may ensure that all selections $\lambda$ of $C$ are bounded automatically.
We remark that for good time-consistency properties, good-deal constraints should be specified locally in time \citep[][]{KloppelSchweizer}.
For contingent claims $X$ (bounded or sufficiently integrable), upper and lower good-deal valuation bounds 
\begin{equation}\label{eq:definitionPiuandPil}
	\pi^l_t(X):=\essinf_{Q\in\mathcal{Q}^{\mathrm{ngd}}}\ E_t^Q[X]\quad \textrm{and}\quad \pi^u_t(X):=\esssup_{Q\in\mathcal{Q}^{\mathrm{ngd}}}\ E_t^Q[X],\quad t\in [0,T],
\end{equation}
are defined over a suitable  (yet abstract) set of no good deal pricing measures $\mathcal{Q}^{\mathrm{ngd}}$. Hence
 $\pi^u_t(X)$ (respectively $\pi^l_t(X)$) can be seen as the highest (lowest) valuation that does not permit too good deals to the seller (buyer). Since $\pi^l_\cdot(X)=-\pi^u_\cdot(-X)$, further analysis can be restricted to $\pi^u_\cdot(X)$.

 Referring to \citet{ElKarozuiQuenez95}, we note that the good deal bounds in (\ref{eq:definitionPiuandPil}) are  within the interval of no-arbitrage prices, whose upper (lower) bound 
 corresponds to the minimal superreplication cost for the seller (respectively buyer) of the claim $X$; this  follows from (\ref{eq:definitionPiuandPil}) and $\mathcal{Q}^{\mathrm{ngd}}\subset \mathcal{M}^e$. 
 In the same context, one may note that the BSDE (\ref{eq:BSDEPiu}), which will describe the good-deal valuation bound $\pi^u(X)$,  involves a non-negative term that can be interpreted as a penalization which 
 is proportional to the length of the 'non-hedgeable' part $\Pi_t^\bot(Z_t)$ of $Z$ (cf.\ (\ref{eq:CorrespondenceGammaandGammabot})); In the common (radial, cf.\ Section~\ref{subsec:Example1withEllipsoids}) 
 specification of good deals in the literature, this term reads as $k| \Pi_t^\bot(Z_t)|$ for some scalar parameter $k>0$, 
so  it is intuitive  that the good deal bound tends to the superreplication price  for $k\to \infty$.

As mentioned already in the introduction, the definition (\ref{eq:definitionPiuandPil}) in itself could already be viewed as a robust representation in a sense (over $Q$'s).  
For the purposes of the present paper however, the correspondence $C$  and the respective set $\mathcal{Q}^{\mathrm{ngd}}$ of no good deal measures are (at first) given with respect 
to one  objective real world measure $P$ (cf.\ remarks after (\ref{eq:QngdBARDefinition})). 
To be clear in our use of terminology,  we will in the sequel  restrict our use of terms {\em  model uncertainty, ambiguity} or {\em robust hedging/valuation} to situations with Knightian model uncertainty about $P$. 
Note that the use of terminology in some literature  \citep[e.g.][]{Delong12} is different, where those terms 
may instead refer to representations like (\ref{eq:definitionPiuandPil}).
Definition (\ref{eq:QngdDefinition}) implies that density processes 
of measures $Q\in\mathcal{Q}^{\mathrm{ngd}}$ are in $\mathcal{S}^p$, $p\in [1,\infty)$. Hence $X\in L^2=L^2(P)\subset L^1(Q)$.
In particular for $X\in L^\infty\subset L^2$, we have (cf.\ \citep[][Thm.3.7]{KentiaPhD} and Prop.~\ref{pro:OptimizationGeneratorPiuasBSDEsolution})  
that $\pi^u_t(X) = \esssup_{Q\in\overline{\mathcal{Q}^{\mathrm{ngd}}}}\ E_t^Q[X]$, where 
\begin{equation}\label{eq:QngdBARDefinition}
	      \overline{\mathcal{Q}^{\mathrm{ngd}}}:=\Big\lbrace Q\in\mathcal{M}^e\Big\lvert\ dQ/dP\Big.=\mathcal{E}\left( \lambda\cdot W\right),
															\ \lambda\ \textrm{predictable and } 
          \lambda\in C\Big\rbrace
\end{equation}
is a larger set than $\mathcal{Q}^{\mathrm{ngd}}$, containing measures with Girsanov kernels that are not necessarily bounded.
In the definition (\ref{eq:QngdBARDefinition}) and in  subsequent definitions of sets of equivalent measures, we tacitly assume (if not automatically satisfied) that Girsanov kernels 
$\lambda$ are such that stochastic exponentials $\mathcal{E}\left( \lambda\cdot W\right)$ are uniformly integrable martingales.
We recall that for radial constraints $C$  (like in  (\ref{Constrellip}) with $A\equiv\mathrm{Id}_{\R^n}$ and constant $h\in (0,\infty)$), common in the good-deal literature, one has a known financial 
justification. By a direct duality argument, one can see \citep[e.g.\ in a semimartingale framework,][Sect.3]{Becherer} that any (arbitrage-free) extension ${\bar S}=(S,S')$ of the 
market $S$
 by derivative price processes $S':=E_t^Q[X]$ for contingent claims $X$ (with $Q\in \mathcal{M}^e$, $X^-\in L^{\infty}$, $X^+\in L^1(Q)$)
 does  permit only for wealth processes $V>0$ of self-financing trading strategy (in $\bar{S}$) whose 
expected growth rates (log utilities) over any time period  $0\le t<\tau \le T$ satisfy the (sharp) estimate
  $E^P_{t}\left[ \log \frac{V_{\tau}}{V_t}\right] \le E^P_t\left[-\log  \frac{Z_{\tau}}{Z_t}\right]$,
 where $Z$ is the density process of $Q$. For $Q\in \mathcal{Q}^{\mathrm{ngd}}$ with  radial constraint, 
this estimate is bounded by ${h^2}(\tau-t)/2$, ensuring a bound $h^2/2$ to expected growth rates (good deals) for {\em any
market extension}  \citep[ideas going back at least to][]{CochraneRequejo,CernyHodges}. 

For the good-deal bounds to have nice dynamic properties, multiplicative stability (m-stability) of the set of no-good-deal measures is important. M-stability of dominated families of 
probability measures in dual representations (like e.g.\ (\ref{eq:definitionPiuandPil})) for dynamic coherent risk measures \citep[see e.g.][]{ArtznerDelbaenEberHeathKu} ensures in 
particular time consistency (recursiveness) and has been studied in a general context by  \citet{Delbaen}. 
\nocite{CheriditoK09}
In economics, it is known as rectangularity \citep{ChenEpstein}. A set 
$\mathcal{Q}$ of measures $Q\sim P$ is called m-stable if for all $Q^1,Q^2\in \mathcal{Q}$ with density processes $Z^1,Z^2$ and for all stopping times $\tau\le T$, 
the process $Z := I_{[0,\tau]}Z^1_\cdot+ I_{]\tau,T]}Z^1_\tau Z^2_\cdot/Z^2_\tau $ is the density process of a measure in $\mathcal{Q}$, where 
$\left[0,\tau\right]:=\left\lbrace(t,\omega)\in[0,T]\times\Omega\ \lvert \ t\le\tau(\omega)\right\rbrace$ denotes the stochastic interval
and $I_A$ is the indicator function on a set $A$. As noted in \citep[Rem.\ 6]{Delbaen}, by closure this definition  extends to sets of measures that are 
absolutely continuous but not necessarily equivalent; such is formally achieved by setting $Z^2_T/Z^2_\tau=1$ on $\{Z^2_\tau=0\}$. 
The role of m-stability shows in results due to \citet{Delbaen}, stated in 
Lemma~\ref{lem:PropDynamicRiskMeasMeandQngdareMstable}, Part a); for details cf.\ \citep[][Thm.2.7]{KloppelSchweizer} or \citep[][Prop.2.6]{Becherer}. Proof for part b) is provided in the appendix.

\begin{lemma}\label{lem:PropDynamicRiskMeasMeandQngdareMstable}
Let $\mathcal{Q}$ be a convex and m-stable set of probability measures $Q\sim P$ and $ \pi^{u,\mathcal{Q}}_t(X):=\esssup_{Q\in \mathcal{Q}}\ E^Q_t[X],$ for $X\in L^\infty$.
\\
\phantom{x}
a) 
 There exists a c\`adl\`ag process $Y$ such that for all stopping times $\tau\le T$,
                          $\displaystyle Y_\tau = \esssup\nolimits_{Q\in \mathcal{Q}}E^Q_\tau[X]=:\pi^{u,\mathcal{Q}}_\tau(X)$.
Moreover $\pi^{u,\mathcal{Q}}_\cdot(\cdot)$ has the properties of a dynamic coherent risk measure. It is recursive and stopping time consistent: 
For stopping times $\sigma\le\tau\le T$  holds $\pi^{u,\mathcal{Q}}_\sigma(X^1)= \pi^{u,\mathcal{Q}}_\sigma\left(\pi^{u,\mathcal{Q}}_\tau(X^1)\right)$,
and  $\pi^{u,\mathcal{Q}}_\tau(X^1)\ge \pi^{u,\mathcal{Q}}_\tau(X^2)$ for $X^1,X^2\in L^\infty$  implies $\pi^{u,\mathcal{Q}}_\sigma(X^1)\ge \pi^{u,\mathcal{Q}}_\sigma(X^2).$
Finally, a supermartingale property holds: For all stopping times $\sigma\le \tau\le T$ and $Q\in \mathcal{Q}$,  $\pi^{u,\mathcal{Q}}_\sigma(X)\ge E^Q_\sigma\big[\pi^{u,\mathcal{Q}}_\tau(X)\big]$,
                          and $\pi^{u,\mathcal{Q}}_\cdot(X)$ is a supermartingale under any $Q\in \mathcal{Q}$.
\\
\phantom{x}
b)
The sets $\mathcal{M}^e$ and $\mathcal{Q}^{\mathrm{ngd}}$ are m-stable and convex and hence for $\mathcal{Q}=\mathcal{Q}^{\mathrm{ngd}}$, 
$\pi^u_\cdot(X)=\pi^{u,\mathcal{Q}}_\cdot(X)$ satisfies the properties of Part a).
\end{lemma}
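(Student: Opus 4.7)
The plan is to verify the two structural hypotheses of Part~a) directly for $\mathcal{Q}^{\mathrm{ngd}}$---convexity and m-stability---and then invoke Part~a) with $\mathcal{Q}=\mathcal{Q}^{\mathrm{ngd}}$. Since $\mathcal{Q}^{\mathrm{ngd}}\subset\mathcal{M}^e$, it is natural to handle $\mathcal{M}^e$ first. Convexity of $\mathcal{M}^e$ is immediate from linearity of the local martingale property in the underlying measure. For m-stability of $\mathcal{M}^e$, given $Q^1,Q^2\in\mathcal{M}^e$ with density processes $Z^1,Z^2$ and a stopping time $\tau\le T$, I would check by direct calculation on the two stochastic intervals $[0,\tau]$ and $]\tau,T]$ that the pasted density $Z:=I_{[0,\tau]}Z^1+I_{]\tau,T]}Z^1_\tau Z^2/Z^2_\tau$ is a positive uniformly integrable $P$-martingale with $E[Z_T]=1$, and that $ZS$ is a local $P$-martingale (on $]\tau,T]$ this reduces to $(Z^1_\tau/Z^2_\tau)$ times the shifted local martingale $Z^2S-(Z^2S)_\tau$).

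Turning to $\mathcal{Q}^{\mathrm{ngd}}$, the main task is to identify the Girsanov kernels produced by convex combinations and pasting as predictable selections of $C$ that remain bounded. For convexity, take $Q^i$ with densities $Z^i=\mathcal{E}(\lambda^i\cdot W)$, $\lambda^i$ predictable, bounded, and in $C$; applying It\^o's formula to $Z:=\alpha Z^1+(1-\alpha) Z^2$ for $\alpha\in[0,1]$ one obtains $dZ_t=Z_t\lambda_t\,dW_t$ with
\begin{equation*}
\lambda_t\;=\;\frac{\alpha Z^1_t}{Z_t}\,\lambda^1_t+\frac{(1-\alpha)Z^2_t}{Z_t}\,\lambda^2_t,
\end{equation*}
a pointwise convex combination of $\lambda^1_t,\lambda^2_t$. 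Convex-valuedness of $C$ forces $\lambda\in C$, predictability is inherited from continuity of the $Z^i$ and predictability of the $\lambda^i$, and $|\lambda|\le\max(|\lambda^1|,|\lambda^2|)$ yields boundedness. For m-stability, the pasted density $Z=I_{[0,\tau]}Z^1+I_{]\tau,T]}Z^1_\tau Z^2/Z^2_\tau$ should be shown to coincide with $\mathcal{E}(\lambda\cdot W)$ for the predictable bounded kernel $\lambda:=\lambda^1 I_{[0,\tau]}+\lambda^2 I_{]\tau,T]}$, which again lies in $C$ pointwise. Combined with the $\mathcal{M}^e$-pasting just shown, this places the pasted measure in $\mathcal{Q}^{\mathrm{ngd}}$.

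The main technical obstacle I foresee is the careful bookkeeping on the two stochastic intervals when identifying the pasted density as a stochastic exponential of a predictable selection of $C$: one must verify continuity across $\tau$, predictability of the pasted kernel (which follows from $\tau$ being a stopping time together with predictability of the constituents), and that the pasted density is truly a uniformly integrable $P$-martingale with expectation one, so that the pasted measure is a well-defined element of $\mathcal{M}^e$. Once convexity and m-stability of $\mathcal{Q}^{\mathrm{ngd}}$ are in hand, Part~a) applied to $\mathcal{Q}=\mathcal{Q}^{\mathrm{ngd}}$ delivers the c\`adl\`ag aggregator, the dynamic coherent risk measure properties, recursiveness, stopping-time consistency, and the supermartingale property of $\pi^u_\cdot(X)$ asserted in Part~b).
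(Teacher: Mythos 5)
Your proposal is correct and follows essentially the same route as the paper: Part~a) is taken as the classical result of \citet{Delbaen}, and Part~b) is established by verifying convexity and m-stability of $\mathcal{M}^e$ and $\mathcal{Q}^{\mathrm{ngd}}$ at the level of Girsanov kernels, using convex-valuedness of $C$ for mixtures and pasting of kernels on the stochastic intervals $[0,\tau]$ and $]\tau,T]$. The only (immaterial) difference is that you identify the pasted density directly as $\mathcal{E}\big((\lambda^1 I_{[0,\tau]}+\lambda^2 I_{]\tau,T]})\cdot W\big)$ by forward computation, whereas the paper first invokes predictable representation to write the pasted density as $\mathcal{E}(\lambda\cdot W)$ and then recovers $\lambda$ by matching the finite-variation parts of the logarithms.
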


\subsection{Parametrizations  in an  It\^o process model}
\label{sec:Parameterization}
This section describes the It\^o process framework for the financial market, and details the parametrizations for dynamic trading strategies and no-good-deal constraints. 
The latter are specified at this stage by abstract correspondences (\ref{eq:QngdDefinition}) such that respective dynamic no-good-deal valuation bounds for contingent claims can be conveniently described in terms of  (super-)solutions to BSDEs (Sections~\ref{sec:NoGoodDealValuation_bdcorr}-\ref{sec:NoGoodDealValuation_unbdcorr}) within a convenient framework  that is sufficiently general for {all} later Sections~\ref{sec:No-Good-Deal-HedgingApproach}-\ref{sec:GDRestrictionUnderUncertainty}. 

We consider models for financial markets where prices $(S^i)_{i=1 \ldots d}$ of $d$ risky assets evolve according to a stochastic differential equation (SDE)
\begin{equation*}
	dS_t = \mathrm{diag}(S_t)\sigma_t\left(\xi_tdt+dW_t\right) =:\mathrm{diag}(S_t)\sigma_td\widehat{W}_t,\ t\in [0,T],\quad S_0\in (0,\infty)^d,
\end{equation*}
for predictable $\R^d$- and $\R^{d\times n}$-valued coefficients $\xi$ and $\sigma$, with $d\le n$. This includes basically all
examples of continuous price and state evolutions in (typically incomplete) markets of the good-deal literature, and permits also for non-Markovian evolutions.
Risky asset prices $S$ are given in units of some riskless num\'eraire asset whose discounted price $S^0\equiv 1$ is 
constant. 
We assume that $\sigma$ is of maximal rank $d\le n$ (i.e.\ $\det(\sigma_t\sigma_t^{\textrm{tr}})\neq 0$, that means no locally redundant assets) and that the market price of risk process $\xi$, satisfying 
$\xi_t\in\mathrm{Im}\ \sigma^{\textrm{tr}}_t$, is bounded. This ensures that the market is free of arbitrage but typically incomplete (if $d<n$)
 in the sense that $\mathcal{M}^e\neq  \emptyset$, as the minimal local martingale 
measure $\widehat{Q}$ given by $d\widehat{Q}=\mathcal{E}\left(-\xi\cdot W\right)dP$ \citep[see][]{Schweizer} is in $\mathcal{M}^e$, which however is typically not a singleton. Trading strategies are represented by the amount of 
wealth $\varphi=(\varphi^i_t)_i$ invested in the risky assets $(S^i)_i$. A self-financing trading strategy is described by a pair $(V_0,\varphi)$, where $V_0$ is the initial capital 
while $\varphi=(\varphi^i_t)_i$ describes the amount of wealth invested in the risky assets $(S^i)_i$ at any time $t$. The set $\Phi_\varphi$ of permitted strategies consists of 
$\R^d$-valued predictable processes $\varphi$  satisfying $E^P\left[\int_0^T\lvert\varphi^{\textrm{tr}}_t\sigma_t\rvert^2dt\right]<\infty.$ For a permitted strategy $\varphi$, 
the associated wealth process $V$ from initial capital $V_0$ has dynamics $dV_t = \varphi^{\textrm{tr}}_t\sigma_t d\widehat{W}_t$. To ease notation, we re-parametrize 
strategies in $\Phi_\varphi$ in terms of integrands $\phi:=\sigma^{\textrm{tr}}\varphi$ with respect to $\widehat{W}$. Indeed, equalities $\phi = \sigma^{\textrm{tr}}\varphi$ and 
$\varphi = (\sigma^{\textrm{tr}})^{-1}\phi$, where $(\sigma^{\textrm{tr}})^{-1}:=(\sigma\sigma^{\textrm{tr}})^{-1}\sigma$ is the pseudo-inverse of $\sigma^{\textrm{tr}}$, provide a 
one-to-one relation between $\varphi$ and $\phi$. Define the correspondences 
\begin{equation}\label{eq:CorrespondenceGammaandGammabot}
     \G_t(\omega) := \mathrm{Im}\ \sigma_t^{\textrm{tr}}(\omega)\ \textrm{ and }\ \G^\bot_t(\omega):= \mathrm{Ker}\ \sigma_t(\omega),\quad (t,\omega)\in [0,T]\times \Omega,
\end{equation}
where $\mathrm{Im}\ \sigma_t^{\textrm{tr}}$ and $\mathrm{Ker}\ \sigma_t$ denote the range (image) and the kernel of the respective matrices. Clearly, $\R^n =\G_t \oplus \G_t^\bot$ and any 
$z\in\R^n$ decomposes uniquely into its orthogonal projections as $z = \Pi_{\G_t}(z)\oplus \Pi_{\G^{\bot}_t}(z)=:\Pi_{t}(z)\oplus \Pi^\bot_{t}(z)$. Let 
\begin{equation*}
	\Phi = \Phi_\phi:= \left\lbrace \phi\ \Big\lvert\ \phi\ \textrm{is predictable},\ \phi\in \G\ \textrm{and}\  E\Big[\int_0^T\lvert\phi_t\rvert^2dt\Big]<\infty\Big.\right\rbrace
\end{equation*}
denote the (re-parametrized) set of permitted trading strategies. Proving the claims of the next proposition is routine (using \citet{Rockafellar} for Part~\ref{ImAndKerCorrespondencesPredictable}).
\begin{proposition}\label{pro:ImAndKerCorrespondencesPredictableCharactOfMe}
\hspace{2em}\begin{enumerate}
 \item\label{ImAndKerCorrespondencesPredictable} The correspondences $\G$, $\G^\bot$ are closed-convex-valued and predictable.
 \item \label{CharactOfMe} $Q$ is in $\mathcal{M}^e$ if and only if $Q\sim P$ with $dQ =\mathcal{E}(\lambda\cdot W)dP$, where $\lambda$ is predictable 
and $\lambda=-\xi +\eta$, with  $-\xi_t = \Pi_{t}(\lambda_t)\in\mathrm{Im}\,\sigma^{\textrm{tr}}_t$ and $\eta_t=\Pi^\bot_{t}(\lambda_t)\in\mathrm{Ker}\,\sigma_t$, $t\le T$.
 \end{enumerate}
\end{proposition}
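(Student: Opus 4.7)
The argument splits naturally along the two parts. For Part~\ref{ImAndKerCorrespondencesPredictable}, both $\G_t(\omega)$ and $\G^\bot_t(\omega)$ are linear subspaces of $\R^n$, hence automatically closed and convex, so the entire burden lies on predictability. The plan is to exploit the maximal-rank assumption on $\sigma$: the matrix $\sigma_t\sigma_t^{\mathrm{tr}}$ is pathwise invertible, so the orthogonal projection $\Pi_t=\sigma_t^{\mathrm{tr}}(\sigma_t\sigma_t^{\mathrm{tr}})^{-1}\sigma_t$ onto $\G_t$ is a predictable $n\times n$-matrix-valued process. Fixing a countable dense subset $D\subset\R^n$ (e.g.\ $\Q^n$) and forming the predictable selections $\lambda^{q}_{t}(\omega):=\Pi_t(\omega)q$, $q\in D$, one has $\G_t(\omega)=\overline{\{\Pi_t(\omega)q:q\in D\}}$ because $\Pi_t(\omega)$ is a continuous surjection $\R^n\to\G_t(\omega)$. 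Predictability of $\G$ then follows from the classical Castaing representation of measurable closed-valued multifunctions as in \citet{Rockafellar}, and the same construction with $\mathrm{Id}_{\R^n}-\Pi_t$ in place of $\Pi_t$ handles $\G^\bot$.

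For Part~\ref{CharactOfMe}, the starting point is that $\filt$ is the augmented Brownian filtration, so by the martingale representation theorem every $Q\sim P$ has a density process of the form $\mathcal{E}(\lambda\cdot W)$ for some predictable $\R^n$-valued $\lambda$. Girsanov's theorem then makes $W^Q:=W-\int_0^\cdot\lambda_s\,ds$ a $Q$-Brownian motion and rewrites the asset dynamics as
\begin{equation*}
  dS_t = \mathrm{diag}(S_t)\sigma_t\bigl[(\xi_t+\lambda_t)\,dt + dW^Q_t\bigr],
\end{equation*}
so $Q\in\mathcal{M}^e$ if and only if the finite-variation drift vanishes, i.e.\ $\sigma_t(\xi_t+\lambda_t)=0$, $P\otimes dt$-a.e.

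Decomposing $\lambda_t=\Pi_t(\lambda_t)+\Pi^\bot_t(\lambda_t)$ according to the splitting $\R^n=\G_t\oplus\G^\bot_t$ and using $\sigma_t\Pi^\bot_t(\lambda_t)=0$, the drift condition reduces to $\xi_t+\Pi_t(\lambda_t)\in\mathrm{Ker}\,\sigma_t$. Since $\xi_t$ and $\Pi_t(\lambda_t)$ both lie in $\mathrm{Im}\,\sigma_t^{\mathrm{tr}}$, and $\mathrm{Im}\,\sigma_t^{\mathrm{tr}}$ and $\mathrm{Ker}\,\sigma_t$ are mutually orthogonal complements in $\R^n$, their intersection is $\{0\}$, forcing $\Pi_t(\lambda_t)=-\xi_t$; setting $\eta_t:=\Pi^\bot_t(\lambda_t)\in\mathrm{Ker}\,\sigma_t$ delivers the claimed decomposition, and the converse simply reverses this algebra. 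The only genuine subtlety I anticipate is the Castaing-style predictability argument in Part~\ref{ImAndKerCorrespondencesPredictable}; once the explicit predictable projection $\Pi_t$ is identified, the measurable-selection machinery from \citet{Rockafellar} applies routinely, and Part~\ref{CharactOfMe} is then a standard Girsanov computation combined with the elementary linear algebra of the orthogonal splitting.
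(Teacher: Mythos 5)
Your proof is correct and fills in precisely the routine argument the paper alludes to when it cites \citet{Rockafellar} for Part 1 and omits the rest: the predictable projection matrix $\sigma^{\textrm{tr}}(\sigma\sigma^{\textrm{tr}})^{-1}\sigma$ together with a Castaing representation gives predictability of $\G$ and $\G^\bot$, and Part 2 is the standard combination of martingale representation, Girsanov, and the orthogonality of $\mathrm{Im}\,\sigma^{\textrm{tr}}$ and $\mathrm{Ker}\,\sigma$ forcing $\Pi_t(\lambda_t)=-\xi_t$. No gaps; this matches the paper's intended (unwritten) proof.
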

By Part \ref{CharactOfMe} of Proposition \ref{pro:ImAndKerCorrespondencesPredictableCharactOfMe}, the set $\mathcal{Q}^{\mathrm{ngd}}$ defined in (\ref{eq:QngdDefinition}) can be written as
\begin{equation}\label{eq:QngdDefinitionlambda}
	\mathcal{Q}^{\mathrm{ngd}}=\Big\lbrace Q\sim P\ \Big\lvert\ dQ/dP\Big.=\mathcal{E}\left(\lambda\cdot W\right),\,\lambda\ \textrm{predictable, bounded and }\lambda\in \Lambda\Big\rbrace,
\end{equation}
where $\Lambda:[0,T]\times \Omega\rightsquigarrow \R^n$ is defined by 
\(
\Lambda_t(\omega) := C_t(\omega)\cap (-\xi_t(\omega)+\mathrm{Ker}\ \sigma_t(\omega)).   
\)
By Part~\ref{ImAndKerCorrespondencesPredictable} of Proposition~\ref{pro:ImAndKerCorrespondencesPredictableCharactOfMe} and \citep[][Cor.1.K and Thm.1.M]{Rockafellar}, $\Lambda$ is a 
compact-convex-valued predictable correspondence. Slightly beyond the no-free-lunch with vanishing risk condition, we assume that $\mathcal{Q}^{\mathrm{ngd}}$ contains the measure 
$\widehat{Q}$, or equivalently $-\xi\in C$. This implies that $\Lambda$ is non-empty valued, hence standard.

\subsection{Good-deal valuation with uniformly bounded correspondences}
\label{sec:NoGoodDealValuation_bdcorr}
We here consider the case where the no-good-deal restriction is described by a uniformly bounded correspondence;  a more general case is studied afterwards. We say that a correspondence 
$C$ is uniformly bounded if it satisfies  
\begin{assumption}\label{asp:BddCorr}
 $\sup_{(t,\omega)}\sup_{x\in C_t(\omega)}\abs{x}<\infty$.
\end{assumption}
Let $C:[0,T]\times\Omega\rightsquigarrow \R^n$ be a standard correspondence satisfying Assumption \ref{asp:BddCorr} and $0\in C$. Under Assumption \ref{asp:BddCorr}, selections of $C$ are uniformly 
bounded processes. In particular, the Girsanov kernels of no-good-deal measures are uniformly bounded, and hence boundedness in the definition (\ref{eq:QngdDefinition}) (see also (\ref{eq:QngdDefinitionlambda}))
of $\mathcal{Q}^{\mathrm{ngd}}$ is not necessary. The good-deal valuation bound $\pi^u_t(X):=\esssup_{Q\in\mathcal{Q}^{\mathrm{ngd}}}\ E^Q_t[X]$ is well-defined 
for a contingent claim   $X\in L^2\supset L^\infty$, that may be path-dependent, and one can check that in this case an analog of Part a) of Lemma \ref{lem:PropDynamicRiskMeasMeandQngdareMstable} still holds. 
Though Assumption \ref{asp:BddCorr} fits well with the classical theory, it would be too restrictive to impose it in general since it may not hold in some interesting practical situations; see for instance 
the example in Section \ref{subsec:GDVinStochVolModel}. Let us recall a fact about linear BSDEs \citep[cf.][]{KarouiPengQuenez} which explains their role for valuation purposes. 
\begin{lemma}\label{lem:ClosedMartandBSDEs}
	For $Q\sim P$ with bounded Girsanov kernel $\lambda$, the linear BSDE 
        \begin{equation}\label{eq:BSDEYlambda}
        	-dY_t = Z^{\textrm{tr}}_t\lambda_tdt -Z_t^{\textrm{tr}}dW_t,\: t\le T,\quad \text{with}\quad Y_T=X  \;\text{in}\; L^2,
        \end{equation}
has a unique standard solution $(Y^\lambda,Z^\lambda)$ with $Y^\lambda_t = E^Q_t[X] = Y^\lambda_0+Z\cdot W^Q_t,$ and
$W^Q:=W-\int_0^\cdot \lambda_tdt$. If $X\in L^\infty$ then $Y$ is bounded. 
\end{lemma}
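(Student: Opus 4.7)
The plan is to first invoke classical BSDE existence-uniqueness and then identify the solution as a conditional $Q$-expectation via a Girsanov change of measure.

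First I would verify that $(f,X)$ with $f(t,y,z)=z^{\textrm{tr}}\lambda_t$ are standard parameters: the terminal condition $X$ is in $L^2$ by hypothesis, $f(\cdot,0,0)\equiv 0$ lies trivially in $\h^2$, $f$ is independent of $y$, and $\abs{f(t,y,z)-f(t,y,z')}\le \lVert\lambda\rVert_\infty\abs{z-z'}$ by boundedness of $\lambda$, giving the uniform Lipschitz property. Invoking the classical theorem \citep[Thm.2.1]{KarouiPengQuenez} cited in the excerpt yields existence and uniqueness of a standard solution $(Y^\lambda,Z^\lambda)\in\mathcal{S}^2\times\h^2$.

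Next I would rewrite the BSDE dynamics as
\begin{equation*}
dY^\lambda_t \;=\; -Z^{\lambda,\textrm{tr}}_t\lambda_t\,dt + Z^{\lambda,\textrm{tr}}_t\,dW_t \;=\; Z^{\lambda,\textrm{tr}}_t\,dW^Q_t,
\end{equation*}
where $W^Q=W-\int_0^\cdot\lambda_s\,ds$. Since $\lambda$ is bounded, $\mathcal{E}(\lambda\cdot W)$ is a true uniformly integrable $P$-martingale (e.g.\ by Novikov), so $Q\sim P$ is well-defined and $W^Q$ is a $Q$-Brownian motion by Girsanov's theorem. Thus $Y^\lambda$ is a continuous $Q$-local martingale with $Y^\lambda_T=X$.

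The one genuine step that needs care is upgrading the $Q$-local martingale property to a true $Q$-martingale so that $Y^\lambda_t=E^Q_t[X]$. Here I would exploit that $\lVert \mathcal{E}(\lambda\cdot W)_T\rVert_{L^p(P)}<\infty$ for every $p\in[1,\infty)$ (again by boundedness of $\lambda$), and combine this with $Y^\lambda\in\mathcal{S}^2(P)$ via H\"older's inequality to obtain $\sup_{t\le T}\abs{Y^\lambda_t}\in L^r(Q)$ for some $r>1$. This renders the $Q$-local martingale $Y^\lambda$ of class (D) under $Q$, hence a true $Q$-martingale, and taking $Q$-conditional expectation at stopping time $t\le T$ gives $Y^\lambda_t=E^Q_t[Y^\lambda_T]=E^Q_t[X]$, together with the stochastic-integral representation $Y^\lambda_t=Y^\lambda_0+Z^\lambda\cdot W^Q_t$.

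Finally, for the last assertion, if $X\in L^\infty$ then $\abs{Y^\lambda_t}=\abs{E^Q_t[X]}\le\lVert X\rVert_{L^\infty}$ almost surely (noting $L^\infty(P)=L^\infty(Q)$ since $Q\sim P$), so $Y^\lambda$ is bounded. The main obstacle is only the integrability argument needed to promote the $Q$-local martingale to a true martingale; once that is settled, the remainder is a direct application of Girsanov and classical BSDE theory.
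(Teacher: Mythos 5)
Your proof is correct and takes the canonical route that the paper itself delegates entirely to the cited reference of El Karoui, Peng and Quenez (the paper states the lemma as a recalled ``fact about linear BSDEs'' and gives no written proof, only noting that boundedness of $\lambda$ makes the parameters standard). Your verification of the standard parameters, the Girsanov rewriting, and in particular the class-(D) upgrade of the $Q$-local martingale via the $L^p(P)$-integrability of the density combined with $Y^\lambda\in\mathcal{S}^2(P)$ is exactly the argument that the citation implicitly relies on, so there is nothing to fault and nothing substantively different to compare.
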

Boundedness of $\lambda$ in Lemma \ref{lem:ClosedMartandBSDEs} clearly implies that the parameters of the BSDE \eqref{eq:BSDEYlambda} are standard. For unbounded $\lambda$, the classical 
BSDE theory no longer applies and one needs different results to characterize the good-deal bounds in terms of BSDEs. Under Assumption \ref{asp:BddCorr}, $\Lambda$ is uniformly bounded 
and thus Girsanov kernels $\lambda^Q$  for all $Q\in \mathcal{Q}^{\mathrm{ngd}}$ are bounded by the same constant. One has the following 

\begin{proposition}\label{pro:OptimizationGeneratorPiuasBSDEsolution}
 Let Assumption~\ref{asp:BddCorr} hold. For $X\in L^2$ and  $\lambda=\lambda^Q\in\Lambda$, $Q\in\mathcal{Q}^{\mathrm{ngd}}$, let $(Y^\lambda,Z^\lambda)$  and  $(Y,Z)$ be the standard solutions to 
  the classical BSDEs (\ref{eq:BSDEYlambda}) and 
	\begin{equation}\label{eq:BSDEPiu}
        	-dY_t = g(t,Z_t)dt -Z_t^{\textrm{tr}}dW_t,\ t\le T,\ \text{and}\ \quad Y_T=X,
        \end{equation}
with generator $g$ given by $g(t,\cdot,z) := \sup_{\lambda\in\Lambda}\lambda_t^{\text{tr}}(\cdot)z$ for $t\in[0,T]$,  $z\in\R^n$. Then: 
  \begin{enumerate}
  \item \label{OptimizationGenerator} There exists $\bar{\lambda}\in\Lambda$ such that $g(t,Z_t) = \bar{\lambda}_t^{\text{tr}}Z_t,\ P\otimes dt$-a.e..
  \item \label{PiuasBSDEsolution} $\pi^u_t(X) =\esssup_{Q\in\mathcal{Q}^{\mathrm{ngd}}}\ E^{Q}_t[X]= E^{\bar{Q}}_t[X] = Y_t$ holds for 
$\bar{Q}\in\mathcal{Q}^{\mathrm{ngd}}$ given by $d\bar{Q}=\mathcal{E}(\bar{\lambda}\cdot W)dP$, and $Y_t=\esssup_{\lambda\in\Lambda}\ Y_t^\lambda = Y^{\bar{\lambda}}_t, \ t\in[0,T]$.
 \end{enumerate}
\end{proposition}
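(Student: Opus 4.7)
The proof breaks cleanly into (i) checking that the BSDE (\ref{eq:BSDEPiu}) has standard parameters, (ii) producing the selector $\bar\lambda$, and (iii) identifying $Y$ with the essential supremum via the comparison theorem. The main tool is a measurable maximum theorem for the predictable correspondence $\Lambda$, which is why uniform boundedness (Assumption~\ref{asp:BddCorr}) is convenient.

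\textbf{Step 1 (BSDE (\ref{eq:BSDEPiu}) is classical).} By Assumption~\ref{asp:BddCorr} there is a constant $M<\infty$ with $|\lambda_t(\omega)|\le M$ for every selection of $\Lambda$, so the support function $g(t,\omega,z)=\sup_{\lambda\in \Lambda_t(\omega)}\lambda^{\textrm{tr}}z$ satisfies $|g(t,z)-g(t,z')|\le M|z-z'|$ and $g(t,0)=0$. Since $\Lambda$ is a standard predictable correspondence, $g$ is a Carath\'eodory / normal integrand and therefore $\Pred\otimes\B(\R^n)$-measurable \citep[cf.][]{Rockafellar}. With $X\in L^2$, this makes $(g,X)$ standard parameters, so a unique standard solution $(Y,Z)\in\mathcal{S}^2\times\h^2$ exists.

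\textbf{Step 2 (Measurable selection, Part \ref{OptimizationGenerator}).} For the already-fixed $Z$, define
\[
\Lambda'_t(\omega):=\big\{\lambda\in \Lambda_t(\omega)\ :\ \lambda^{\textrm{tr}}Z_t(\omega)=g(t,\omega,Z_t(\omega))\big\}.
\]
Because $\Lambda_t(\omega)$ is non-empty compact and $\lambda\mapsto\lambda^{\textrm{tr}}Z_t(\omega)$ is continuous linear, $\Lambda'$ is non-empty, compact and convex-valued. Predictability of $\Lambda'$ follows from predictability of $\Lambda$ and of $Z$ via the measurable maximum theorem \citep[Cor.~1.K, Thm.~1.M]{Rockafellar}; the same reference then yields a predictable (hence bounded, since $\Lambda'\subset\Lambda$) selection $\bar\lambda\in\Lambda'\subset\Lambda$. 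By construction $g(t,Z_t)=\bar\lambda_t^{\textrm{tr}}Z_t$ holds $P\otimes dt$-a.e.. This is the step where one must be careful to invoke a measurable selection result that gives a \emph{predictable} (not merely progressively measurable) $\bar\lambda$, and it is the only genuinely non-routine ingredient of the proof.

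\textbf{Step 3 (Identification, Part \ref{PiuasBSDEsolution}).} With the $\bar\lambda$ obtained above, the BSDE (\ref{eq:BSDEPiu}) coincides $P\otimes dt$-a.e.\ with the linear BSDE (\ref{eq:BSDEYlambda}) for $\lambda=\bar\lambda$, so by uniqueness of standard solutions $(Y,Z)=(Y^{\bar\lambda},Z^{\bar\lambda})$ and Lemma~\ref{lem:ClosedMartandBSDEs} gives $Y_t=E^{\bar Q}_t[X]$ with $d\bar Q=\mathcal{E}(\bar\lambda\cdot W)dP$; $\bar Q\in\mathcal{Q}^{\mathrm{ngd}}$ since $\bar\lambda$ is predictable, bounded and in $C$. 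For the extremal characterization, for every $\lambda\in\Lambda$ the generators satisfy $g(t,z)\ge \lambda_t^{\textrm{tr}}z$ pointwise, so the comparison theorem recalled after (\ref{eq:BSDEdef}) yields $Y_t\ge Y^\lambda_t$, hence $Y_t\ge\esssup_{\lambda\in\Lambda}Y^\lambda_t\ge Y^{\bar\lambda}_t=Y_t$, forcing equality. The parametrization (\ref{eq:QngdDefinitionlambda}) together with Lemma~\ref{lem:ClosedMartandBSDEs} turns this into $Y_t=\esssup_{Q\in\mathcal{Q}^{\mathrm{ngd}}}E^Q_t[X]=\pi^u_t(X)$, attained at $\bar Q$, completing the proof.
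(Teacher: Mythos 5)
Your proof is correct and follows essentially the same route as the paper: verify that $(g,X)$ are standard parameters via the uniform bound and the Carath\'eodory/normal-integrand property, obtain the predictable maximizing selection $\bar\lambda$ from the measurable maximum and selection theorems of \citet{Rockafellar} applied with the predictable $Z$, and conclude by uniqueness and the comparison theorem for classical BSDEs. The extra detail you supply (the argmax correspondence $\Lambda'$ and the chain of inequalities forcing $Y=\esssup_\lambda Y^\lambda$) merely makes explicit what the paper leaves as a citation.
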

\begin{proof}
Because $\Lambda$ is a predictable correspondence,
 by  measurable maximum and measurable selection results \citep[][Thms. 2.K and 1.C]{Rockafellar}, for each $z\in\R^n$ the function $g(\cdot,\cdot,z)$ is predictable and 
 there exists $\lambda^z\in\Lambda$ such that $g(t,\omega, z) = \lambda^z_t(\omega)z$ for all $(t,\omega)$. 
By Assumption \ref{asp:BddCorr}, the generators of the BSDEs (\ref{eq:BSDEYlambda}),(\ref{eq:BSDEPiu}) are uniformly Lipschitz in $z$. 
Being a Carath\'eodory function, 
 $g$ is $\Pred\otimes\mathcal{B}(\R^n)$-measurable. 
So the parameters (or data) for the BSDEs (\ref{eq:BSDEYlambda}),(\ref{eq:BSDEPiu}) are indeed standard.
Part \ref{OptimizationGenerator} follows again by the measurable maximum and measurable selection theorems, since $Z$ is predictable. For Part \ref{PiuasBSDEsolution}, the measure 
$\bar{Q}$ is in $\mathcal{Q}^{\mathrm{ngd}}$ since $\bar{\lambda}\in\Lambda$. The remainder of Part \ref{PiuasBSDEsolution} follows by existence, uniqueness  and comparison results for classical BSDEs, cf.\  \citep[][Sect.2-3]{KarouiPengQuenez}
\end{proof}
\subsection{Good-deal valuation with non-uniformly bounded correspondences}
\label{sec:NoGoodDealValuation_unbdcorr}
To relax the Assumption \ref{asp:BddCorr} of uniform boundedness, we now admit for a non-uniformly bounded 
standard  correspondence $C$, with $0\in C$, which satisfies 
\begin{equation}\label{eq:2IntCorr}
  \exists\ R  \text{ predictable with } \sup_{x\in C_t(\omega)}\lvert x\rvert \le R_t(\omega)<\infty \ \forall (t,\omega)\;\text{ and }  \int_0^T\lvert R_t\rvert^2dt<\infty. 
\end{equation}
It is relevant to look beyond Assumption~\ref{asp:BddCorr}, because examples of practical interest {require} to do so, see Section \ref{subsec:GDVinStochVolModel} where quasi-explicit 
formulas of good-deal bounds are obtained in a stochastic volatility model, with $C$ not being uniformly bounded but satisfying (\ref{eq:2IntCorr}). 
Classical BSDE results do not apply as before to characterize good-deal bounds directly by standard BSDE solutions. 
Yet, one can still \citep[][Thm.3.7]{KentiaPhD}
approximate $\pi^u_\cdot(X)$ for  $X\in L^\infty$ by solutions to classical BSDEs for suitable truncations of $C$, and prove that $\pi^u_t(X)$ coincides with the essential supremum 
over the larger set $\overline{\mathcal{Q}^{\text{ngd}}}\subseteq \mathcal{M}^e$ given in (\ref{eq:QngdBARDefinition}). Given (\ref{eq:2IntCorr}), we show that $\pi^u_\cdot(X)$ is 
the minimal supersolution of the BSDE (\ref{eq:BSDEPiu}), and that it is the minimal solution to  (\ref{eq:BSDEPiu}) if a worst-case measure $\bar{Q}$ for $\pi^u_0(X)$ exists. 
Obviously, a maximizing $\bar Q$ may be attained rather in the larger set $\overline{\mathcal{Q}^{\text{ngd}}}$. 

Let $g:[0,T]\times\Omega\times\R^n\to \R$ be the generator function defined by
\begin{equation}\label{eq:generatorsupersol}
g(t,\cdot,z) = \sup_{\lambda\in \Lambda}\lambda_t^{\textrm{tr}}(\cdot)z,\quad \text{for }t\in[0,T],\ z\in\R^n.
\end{equation}

By condition (\ref{eq:2IntCorr}), the function $g$ above is finitely defined since it satisfies $\lvert g(t,\omega,z)\rvert\le R_t(\omega) \lvert z\rvert<\infty$ for all $(t,\omega,z)$.
Note   for each $(t,\omega)$ that  $g(t,\omega,\cdot)$ is Lipschitz-continuous since $\Lambda_t(\omega)$ is compact. In addition by measurable selection arguments 
analogous to those in the proof of Prop.\ref{pro:OptimizationGeneratorPiuasBSDEsolution} it follows that $g$ is indeed $\Pred\otimes\mathcal{B}(\R^n)$-measurable.
Since $g$ may not be uniformly Lipschitz if $C$ does not satisfy Assumption \ref{asp:BddCorr}, then $\pi^u_\cdot(X)$ cannot directly be characterized by classical BSDEs. But one can still obtain a 
characterization by the minimal supersolution to the BSDE with data $(g,X)$. 
\begin{definition}\label{def:SupersolutionsBSDEs}
$(Y,Z,K)$  is a supersolution of the BSDE with parameters $(f,X)$ if 
\begin{equation*}
 -dY_t = f(t,Y_t,Z_t)dt - Z_t^{\textrm{tr}}dW_t+dK_t\quad \textrm{for }t\le T,\textrm{ and } Y_T=X,
\end{equation*}
with $K$ non-decreasing c\`adl\`ag adapted, $K_0=0$, and $\int_0^T\abs{Z_t}^2dt<\infty$. A supersolution with $K\equiv 0$ is a BSDE solution. A (super)solution 
$(Y,Z,K)$ 
is minimal 
if $Y_t\le \bar{Y}_t$, $t\in[0,T]$ holds for any other (super)solution $(\bar{Y},\bar{Z},\bar{K})$. 
\end{definition}
Note that a minimal supersolution when it exists is unique, as minimality implies uniqueness of the $Y$-components; since continuous local martingales of finite variation are trivial, 
identity of the $Z$- and $K$-components follows. Existence of the minimal supersolution is sometimes investigated under the condition that there exists at least one supersolution to the 
BSDE \citep[cf.][]{DrapeauHeyneKupper}. This condition is satisfied for the BSDE with parameters $(g,X)$, $X\in L^\infty$ since $g(\cdot,0)=0$ and thus $(Y,Z,K):=(\abs{X}_\infty-(\abs{X}_\infty-X)I_{\{T\}},0,(\abs{X}_\infty-X)I_{\{T\}})$ 
is a supersolution. Note that $g$ satisfies $g_t(z)\ge -\xi_t^\text{tr}z,\ P\otimes dt$-a.e.\ and moreover $(g,X)$ satisfies the hypotheses of \citep[][Thm.4.17]{DrapeauHeyneKupper} 
which implies existence of the minimal supersolution to the BSDE with parameter $(g,X)$. We show that $\pi^u_\cdot(X)$ can be identified with the $Y$-component of this minimal supersolution. 
Condition (\ref{eq:2IntCorr}) ensures that the process $\int_0^\cdot g_t(Z_t)dt$ for $g$ in (\ref{eq:generatorsupersol}) and $Z$ satisfying $\int_0^T\abs{Z_t}^2dt<\infty$ is real-valued, 
since the Cauchy-Schwarz inequality implies $\int_0^T\lvert g_t(Z_t)\rvert dt\le(\int_0^T\lvert Z_t\rvert^2dt)^{\frac{1}{2}}(\int_0^T\lvert R_t\rvert^2 dt)^{\frac{1}{2}}<\infty$.
\begin{theorem}\label{thm:MinimalSupersolGDB}
 Let (\ref{eq:2IntCorr}) hold and $X\in L^\infty$. There exists $Z\in\h^2(\widehat{Q})$ and a non-decreasing predictable process $K$ with $K_0=0$ such that $(\pi^u_\cdot(X),Z,K)$
is the minimal supersolution to the BSDE for data $(g,X)$ with $g$ from (\ref{eq:generatorsupersol}), and $\pi^u_\cdot(X)\in \mathcal{S}^\infty$. 
\end{theorem}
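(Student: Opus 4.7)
The plan is to construct the triple $(\pi^u_\cdot(X),Z,K)$ directly by (i) combining the $\widehat Q$-supermartingale property of $\pi^u_\cdot(X)$ from Lemma~\ref{lem:PropDynamicRiskMeasMeandQngdareMstable}(b) with a Doob--Meyer decomposition and the Brownian martingale representation, and (ii) identifying the drift with $g(\cdot,Z)+\xi^{\mathrm{tr}}Z$ via a measurable-selection and truncation argument; minimality then follows from a change-of-measure argument. The essential boundedness $\pi^u_\cdot(X)\in\mathcal{S}^\infty$ is immediate from $|E^Q_t[X]|\le \|X\|_{\infty}$.

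For the construction, note that $-\xi\in C$ gives $\widehat Q\in\mathcal{Q}^{\mathrm{ngd}}$, so by Lemma~\ref{lem:PropDynamicRiskMeasMeandQngdareMstable}(b) the process $\pi^u_\cdot(X)$ is a bounded c\`adl\`ag $\widehat Q$-supermartingale, in particular of class D. Its Doob--Meyer decomposition under $\widehat Q$ reads
\[
\pi^u_\cdot(X)=\pi^u_0(X)+M-A,
\]
with $M$ a uniformly integrable $\widehat Q$-martingale and $A$ predictable, non-decreasing, $A_0=0$. Applying the Brownian martingale representation in $\filt$ to $M$ under $\widehat Q$ yields a predictable $Z$ with $M=\int_0^\cdot Z_s^{\mathrm{tr}}dW^{\widehat Q}_s$; the boundedness of $\pi^u_\cdot(X)$ together with standard class-D estimates on $A$ and the Burkholder--Davis--Gundy inequality gives $Z\in\h^2(\widehat Q)$.

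To identify the drift term I will exploit that $\pi^u_\cdot(X)$ is a $Q$-supermartingale for \emph{every} $Q\in\mathcal{Q}^{\mathrm{ngd}}$. For a bounded selection $\lambda\in\Lambda$, writing $W^{\widehat Q}=W^Q+\int_0^\cdot(\xi_s+\lambda_s)ds$, uniqueness of the predictable finite-variation part in the $Q$-semimartingale decomposition forces $dA_t\ge Z_t^{\mathrm{tr}}(\xi_t+\lambda_t)dt$ in the sense of measures on $[0,T]\times\Omega$. Truncating $C$ by $C^n:=C\cap\bar B(0,n)$ (still standard correspondences with $0\in C^n$) and setting $\Lambda^n:=C^n\cap(-\xi+\mathrm{Ker}\,\sigma)$, the measurable maximum theorem provides bounded selections $\lambda^n\in\Lambda^n$ achieving $g^n(t,Z_t):=\sup_{\lambda\in\Lambda^n_t}\lambda^{\mathrm{tr}}Z_t$, so that $dA_t\ge(g^n(t,Z_t)+\xi_t^{\mathrm{tr}}Z_t)dt$. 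Since $g^n(t,Z_t)\uparrow g(t,Z_t)$ pointwise and $|g(t,Z_t)|\le R_t|Z_t|$ with $\int_0^T R_t|Z_t|dt<\infty$ by Cauchy--Schwarz, condition (\ref{eq:2IntCorr}) and monotone convergence yield $dA_t\ge(g(t,Z_t)+\xi_t^{\mathrm{tr}}Z_t)dt$. Setting
\[
K_t:=A_t-\int_0^t\!\bigl(g(s,Z_s)+\xi_s^{\mathrm{tr}}Z_s\bigr)ds
\]
produces a predictable, non-decreasing process with $K_0=0$; substitution turns the decomposition into the supersolution equation $-d\pi^u_t(X)=g(t,Z_t)dt-Z_t^{\mathrm{tr}}dW_t+dK_t$.

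For minimality, let $(\bar Y,\bar Z,\bar K)$ be any other supersolution. For any bounded $\lambda\in\Lambda$, under the associated $Q\in\mathcal{Q}^{\mathrm{ngd}}$ one has
\[
d\bar Y_t=\bigl(\bar Z_t^{\mathrm{tr}}\lambda_t-g(t,\bar Z_t)\bigr)dt+\bar Z_t^{\mathrm{tr}}dW^Q_t-d\bar K_t,
\]
whose drift is non-positive and whose finite-variation part is non-increasing, so $\bar Y$ is a $Q$-local supermartingale. A localisation combined with Fatou, using $\bar Y_T=X\in L^\infty$ (which bounds $\bar Y_T^-$) and the non-negativity of the omitted terms to control $\bar Y_T^+$, upgrades this to a true $Q$-supermartingale, giving $\bar Y_t\ge E^Q_t[X]$ and hence $\bar Y_t\ge\esssup_{Q\in\mathcal{Q}^{\mathrm{ngd}}}E^Q_t[X]=\pi^u_t(X)$ after approximating kernels in $\overline{\mathcal{Q}^{\mathrm{ngd}}}$ by bounded truncations. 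The main technical obstacle is precisely this last step: promoting the $Q$-local-supermartingale property of $\bar Y$ to a true supermartingale without an a priori lower bound on $\bar Y$; this will rely on the integrability structure of minimal-supersolution triples as in \citep[Thm.~4.17]{DrapeauHeyneKupper} (already invoked above for existence) together with the monotone approximation by truncated $C^n$ used in the construction part.
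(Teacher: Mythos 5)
Your construction follows the paper's proof almost step for step: a Doob--Meyer decomposition of the bounded $\widehat Q$-supermartingale $\pi^u_\cdot(X)$ plus martingale representation produces $(Z,A)$; the supermartingale property under every $Q\in\mathcal{Q}^{\mathrm{ngd}}$ combined with a truncation of the (possibly unbounded) kernels and monotone convergence identifies the drift and shows $K:=A-\int_0^\cdot(g(s,Z_s)+\xi_s^{\mathrm{tr}}Z_s)\,ds$ is non-decreasing; and minimality is reduced to showing that the $Y$-component of any competing supersolution is a $Q$-supermartingale for all $Q\in\mathcal{Q}^{\mathrm{ngd}}$. The only cosmetic difference is that you truncate the correspondence ($C^n=C\cap\bar B(0,n)$) and take maximizing selections, whereas the paper truncates an arbitrary selection $\lambda=-\xi+\eta$ via $\eta^n=\eta I_{\{|\eta|\le n\}}$ and then takes the essential supremum; both yield the same inequality $dA_t\ge (g(t,Z_t)+\xi_t^{\mathrm{tr}}Z_t)\,dt$.

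The one step you leave open --- upgrading the $Q$-local supermartingale property of a competitor $\bar Y$ to a true supermartingale --- is exactly where the paper does something you do not: it restricts the comparison to supersolutions with $\bar Y\in\mathcal{S}^\infty$. With that restriction the argument is short: from $\bar Y=\bar Y_0+\bar Z\cdot W^Q-\int_0^\cdot(\esssup_{\lambda\in\Lambda}\lambda^{\mathrm{tr}}\bar Z-\bar Z^{\mathrm{tr}}\lambda^Q)\,dt-\bar K$ and the non-negativity of the subtracted terms, the local martingale $\bar Z\cdot W^Q$ is bounded from below by $\bar Y-\bar Y_0\ge -2\lVert\bar Y\rVert_{\mathcal{S}^\infty}$, hence is a supermartingale; the same identity then shows the non-decreasing part has integrable terminal value, so $\bar Y$ is a genuine $Q$-supermartingale and $\bar Y_t\ge E^Q_t[X]$. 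Your suggestion to import the integrability structure from \citet[Thm.~4.17]{DrapeauHeyneKupper} does not by itself close this, since that result concerns the minimal supersolution, not an arbitrary competitor; you need either the boundedness restriction on the class of supersolutions (as the paper implicitly imposes) or an admissibility condition on $\bar Z$ making $\bar Z\cdot W^Q$ a supermartingale. Also note that your final remark about ``approximating kernels in $\overline{\mathcal{Q}^{\mathrm{ngd}}}$'' is unnecessary: $\pi^u$ is the essential supremum over $\mathcal{Q}^{\mathrm{ngd}}$ (bounded kernels), so $\bar Y_t\ge E^Q_t[X]$ for those $Q$ already suffices.
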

The proof for this theorem is given in Appendix~\ref{app:AppendixA}, while for details on the proof of the next corollary we refer to \cite[][Corollary 3.10]{KentiaPhD}.

\begin{corollary}\label{cor:GDBsolvesBSDEunderOptimalMeas}
Let (\ref{eq:2IntCorr}) hold and $X\in L^\infty$. If there exists a measure $\bar{Q}\in \overline{\mathcal{Q}^{\mathrm{ngd}}}$ such that $\pi^u_0(X)=\sup_{Q\in \mathcal{Q}^{\mathrm{ngd}}}E^Q[X] = E^{\bar{Q}}[X]$,
then $\pi^u_\cdot(X)$ is a $\bar{Q}$-martingale and there exists $Z\in\h^2(\widehat{Q})$ such that $(\pi^u_\cdot(X),Z)$ is the minimal solution to the BSDE with parameters $(g,X)$ for 
$g$ defined in (\ref{eq:generatorsupersol}). The Girsanov kernel $\bar{\lambda}$ of $\bar{Q}$ satisfies $\esssup_{\lambda\in \Lambda}\lambda_t^{\textrm{tr}}Z_t =\bar{\lambda}_t^{\textrm{tr}}Z_t,\ P\otimes dt\text{-a.e.}$. 
\end{corollary}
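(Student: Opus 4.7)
The plan is to combine the minimal supersolution characterization from Theorem~\ref{thm:MinimalSupersolGDB} with the optimality of $\bar{Q}$ at time $0$ to upgrade the supersolution into a solution, with $\bar{\lambda}$ attaining the generator. The whole argument hinges on turning ``supermartingale that attains its initial supremum'' into ``martingale'', and then reading off the Doob--Meyer decomposition against Girsanov's drift.

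First, I would establish that $\pi^u_\cdot(X)$ is a $\bar{Q}$-martingale. By the identity $\pi^u_\cdot(X)=\esssup_{Q\in \overline{\mathcal{Q}^{\mathrm{ngd}}}}\,E^Q_\cdot[X]$ mentioned before Theorem~\ref{thm:MinimalSupersolGDB} and the m-stability and convexity of $\overline{\mathcal{Q}^{\mathrm{ngd}}}$ (which parallel Part b) of Lemma~\ref{lem:PropDynamicRiskMeasMeandQngdareMstable}), part a) of that lemma yields a supermartingale property under every $Q\in\overline{\mathcal{Q}^{\mathrm{ngd}}}$, in particular under $\bar{Q}$. Since by assumption $E^{\bar{Q}}[\pi^u_T(X)]=E^{\bar{Q}}[X]=\pi^u_0(X)$, the $\bar{Q}$-supermartingale $\pi^u_\cdot(X)$ has constant expectation and is therefore a true $\bar{Q}$-martingale (it is also in $\mathcal{S}^\infty$ by Theorem~\ref{thm:MinimalSupersolGDB}, so integrability is not an issue).

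Next, I would exploit the minimal supersolution representation $(\pi^u_\cdot(X),Z,K)$ supplied by Theorem~\ref{thm:MinimalSupersolGDB},
\begin{equation*}
-d\pi^u_t(X)=g(t,Z_t)\,dt-Z_t^{\mathrm{tr}}dW_t+dK_t,\qquad \pi^u_T(X)=X,
\end{equation*}
and rewrite it under $\bar{Q}$. Girsanov gives $dW_t=dW^{\bar{Q}}_t+\bar{\lambda}_t\,dt$, so
\begin{equation*}
-d\pi^u_t(X)=\bigl(g(t,Z_t)-\bar{\lambda}_t^{\mathrm{tr}}Z_t\bigr)dt - Z_t^{\mathrm{tr}}dW^{\bar{Q}}_t + dK_t.
\end{equation*}
The continuous finite-variation part must vanish by the $\bar{Q}$-martingale property established in the previous step (the local-martingale part $Z\cdot W^{\bar{Q}}$ is brought under control via a localizing sequence and uniqueness of the special semimartingale decomposition). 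Since $\bar{\lambda}\in\Lambda$, one has $g(t,Z_t)-\bar{\lambda}_t^{\mathrm{tr}}Z_t\ge 0$ by definition of $g$, while $K$ is non-decreasing. Hence both are zero, i.e.\ $K\equiv 0$ and $g(t,Z_t)=\bar{\lambda}_t^{\mathrm{tr}}Z_t=\esssup_{\lambda\in\Lambda}\lambda_t^{\mathrm{tr}}Z_t$, $P\otimes dt$-a.e. This already yields the third claim, and, because now $K\equiv 0$, $(\pi^u_\cdot(X),Z)$ is a \emph{solution} of the BSDE with data $(g,X)$. Its minimality among solutions is inherited from the minimality of the supersolution: any other solution is a supersolution (with $K\equiv 0$), so its $Y$-component dominates $\pi^u_\cdot(X)$.

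The main technical hurdle is the localization/integrability argument needed to assert that the continuous finite-variation part in the $\bar{Q}$-decomposition of the $\bar{Q}$-martingale $\pi^u_\cdot(X)$ must vanish: we only know $Z\in\h^2(\widehat{Q})$ and not a priori $Z\in\h^2(\bar{Q})$, so $Z\cdot W^{\bar{Q}}$ is merely a $\bar{Q}$-local martingale. This is handled in the standard way, localizing by stopping times that reduce $Z\cdot W^{\bar{Q}}$ and $K$ simultaneously and invoking uniqueness of the canonical decomposition of the special semimartingale $\pi^u_\cdot(X)\in\mathcal{S}^\infty$ under $\bar{Q}$; non-negativity of each drift summand then forces both to be null, globally.
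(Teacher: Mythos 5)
Your argument is correct and is precisely the route the paper's structure anticipates (the paper itself defers the detailed proof to the thesis, but builds it on Theorem~\ref{thm:MinimalSupersolGDB} exactly as you do): establish the $\bar{Q}$-supermartingale property from the representation over $\overline{\mathcal{Q}^{\mathrm{ngd}}}$ and Lemma~\ref{lem:PropDynamicRiskMeasMeandQngdareMstable}, upgrade to a martingale via the attained supremum, and then kill the two non-negative drift summands $\int_0^\cdot\bigl(g(t,Z_t)-\bar{\lambda}_t^{\mathrm{tr}}Z_t\bigr)dt$ and $K$ by uniqueness of the canonical decomposition under $\bar{Q}$, noting $\bar{\lambda}\in\Lambda$ by Proposition~\ref{pro:ImAndKerCorrespondencesPredictableCharactOfMe}. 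Your handling of the localization issue is sound (one could even shortcut it: $Z\cdot W^{\bar{Q}}=\pi^u_\cdot(X)-\pi^u_0(X)+\int_0^\cdot(g-\bar{\lambda}^{\mathrm{tr}}Z)ds+K$ is a local martingale bounded from below, hence a supermartingale, and $E^{\bar{Q}}[X]=\pi^u_0(X)$ then forces the non-decreasing part to vanish, which yields the martingale property as a by-product).
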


In concrete cases, existence of $\bar{Q}\in \overline{\mathcal{Q}^{\mathrm{ngd}}}$ as in Corollary \ref{cor:GDBsolvesBSDEunderOptimalMeas} may be shown by direct considerations, see  
  Section \ref{subsec:ExampleNGDValuations} for examples. If one  specifies the no-good-deal restriction ($C$) such that the set $\overline{\mathcal{Q}^{\mathrm{ngd}}}$ becomes weakly compact in $L^1$, then
  $\bar{Q}$ would exist for any $X\in L^\infty$ as maximizer of a  bounded linear functional over a weakly compact subset of $L^1$. 
  Note that Assumption \ref{asp:BddCorr} only implies (by Dunford-Pettis theorem) that $\overline{\mathcal{Q}^{\mathrm{ngd}}}$ is weakly 
relatively compact in $L^1$. Yet if $\mathcal{Q}^{\mathrm{ngd}}$ is not weakly relatively compact in $L^1$, then by James' theorem \citep[cf.][Thm.6.36]{AliprantisBorder} 
there exists $X\in L^\infty$ such that the supremum in $\pi^u_0(X)=\sup_{Q\in \mathcal{Q}^{\mathrm{ngd}}}E^Q[X]$ is not attained in the $L^1$-closure of 
$\overline{\mathcal{Q}^{\mathrm{ngd}}}$ (note that $\overline{\mathcal{Q}^{\mathrm{ngd}}}$ is convex), and in particular not in $\overline{\mathcal{Q}^{\mathrm{ngd}}}$. Let us give an example where $\bar{Q}$ does not exist in $\overline{\mathcal{Q}^{\mathrm{ngd}}}$ for some contingent claim and $C$ does neither 
satisfy Assumption \ref{asp:BddCorr} nor (\ref{eq:2IntCorr}). Section \ref{subsec:GDVinStochVolModel} will furthermore give an example in a stochastic volatility model where $\bar{Q}$ 
exists and $C$ is not uniformly bounded but satisfies (\ref{eq:2IntCorr}). 
\begin{example}
Let $n=2$ with $W=(W^1,W^2)$, $d=1$ with $dS_t=S_t\sigma^SdW^1_t,\ S_0>0$, $\sigma^S>0$, and $\xi=0$. Let $h>0$ be a deterministic predictable process with $\int_0^Th_tdt=\infty$ and 
$C_t(\omega):=\{0\}\times[-h_t,h_t],\ (t,\omega)\in [0,T]\times\Omega$. Now let $X:=I_{\{W^2_T\ge 0\}}\in L^\infty$, then $\pi_0:=\sup_{n\in\N}Q^n[\{W^2_T\ge 0\}]\le\pi^u_0(X)\le 1$, 
where $dQ^n=\mathcal{E}(\lambda^n\cdot W^2)dP$ with $\lambda^n_t=h_t\wedge n,\ t\in [0,T],\ n\in\N$. The process $W^{2,n}:=W^2-\int_0^\cdot\lambda^n_tdt$ is a $Q^n$-Brownian motion. 
Hence $W^{2,n}_T\sim \mathcal{N}(0,T)$ under $Q^n$. We have $\int_0^T\lambda^n_tdt\nearrow \int_0^Th_tdt=\infty$ as $n\nearrow \infty$. Hence $\pi_0=\sup_{n\in\N}Q^n[\{W^{2,n}_T\ge -\int_0^T\lambda^n_tdt\}]=1$. 
Therefore $\pi^u_0(X)=1$. But there exists no measure $\bar{Q}\in \overline{\mathcal{Q}^{\mathrm{ngd}}}$ such that $\pi^u_0(X)=E^{\bar{Q}}_0[X]$. Indeed for such a measure, one would have $\bar{Q}[\{W^2_T\ge0\}]=1$ 
which is not possible since $\bar{Q}\sim P$.
\end{example}

\section{Dynamic good-deal hedging}\label{sec:No-Good-Deal-HedgingApproach}
Let again $C$ be a standard correspondence satisfying $0\in C$, and define the family of a-priori valuation measures 
\begin{equation}\label{eq:PngdDefinition}
	\mathcal{P}^{\mathrm{ngd}}:=\Big\lbrace Q\sim P\ \Big\lvert\ dQ/dP\Big.=\mathcal{E}\left(\lambda\cdot W\right),\ 
                             \lambda\ \textrm{predictable, bounded, } \lambda\in C\Big\rbrace 
\end{equation}
which satisfy the same no-good-deal constraint as those in $\mathcal{Q}^{\mathrm{ngd}}$, except that the local martingale condition for $S$ is omitted. 
\begin{remark}
One could view $\mathcal{P}^{\mathrm{ngd}}$ as the no-good-deal measures for a market consisting only of the riskless asset $S^0\equiv 1$, i.e.\  $\mathcal{P}^{\mathrm{ngd}}=\mathcal{Q}^{\mathrm{ngd}}(1)$. 
It is natural to define (\ref{eq:PngdDefinition}) as a-priori valuation measures, as the idea of no-good-deal valuation is to consider those risk neutral valuation measures $Q$, 
for which any extension of the financial market by additional derivatives' price processes (being $Q$-martingales) would not give rise to 'good deals'; see e.g.\ \citet{BjorkSlinko,KloeppelS07,Becherer} 
for rigorous detail in continuous time for Sharpe ratios, utilities or growth rates; for concepts cf.\ \citep[][]{Cerny}. 
\end{remark}
Like $\mathcal{Q}^{\mathrm{ngd}}$, the set $\mathcal{P}^{\mathrm{ngd}}$ clearly is again  m-stable and convex. We define the a-priori dynamic coherent risk measure (in the sense of Lemma~\ref{lem:PropDynamicRiskMeasMeandQngdareMstable}) 
\begin{equation}\label{eq:DefRhoPngdB}
	\rho_t(X):=\esssup_{Q\in\mathcal{P}^{\mathrm{ngd}}}\ E^Q_t[X],\quad t\in[0,T],
\end{equation}
for suitable contingent claims $X$  for which $\rho_t(X)$ is finitely defined (e.g.\ $X$ bounded, or just in $L^2$ with $C$ satisfying Assumption \ref{asp:BddCorr}).
For a bounded correspondence $C$, 
one can describe $\rho(X)$ (analogous to $\pi^u(X)$ in Prop.\ref{pro:OptimizationGeneratorPiuasBSDEsolution}) by  classical 
BSDEs: 
\begin{proposition}\label{pro:RhoasBSDESolution}
	Let Assumption \ref{asp:BddCorr} hold. For $X\in L^2$, let $(\widetilde{Y},\widetilde{Z})$ and $(Y^\lambda,Z^\lambda)$ (for $\lambda\in C$) be the respective standard solutions to the BSDEs
	\begin{align*}
			-dY_t &= Z^{\textrm{tr}}_t\tilde{\lambda}_tdt -Z_t^{\textrm{tr}}dW_t,\ t\le T,\ \text{with}\ Y_T=X,\quad \text{and}\\
			-dY_t &= Z^{\textrm{tr}}_t\lambda_tdt -Z_t^{\textrm{tr}}dW_t,\ t\le T,\ \text{with}\ \ Y_T=X,
	\end{align*}
where $\tilde{\lambda}\in C$ is a predictable process satisfying $\tilde{\lambda}^{\textrm{tr}}_tZ_t=\esssup_{\lambda\in C}\lambda^{\textrm{tr}}_tZ_t,\ P\otimes dt\text{-a.e.}$.
Then the measure $\widetilde{Q}$ with Girsanov kernel $\lambda^{\widetilde{Q}}=\tilde{\lambda}$ is in $\mathcal{P}^{\mathrm{ngd}}$, and $\displaystyle \rho_t(X) = \esssup_{\lambda\in C}Y_t^\lambda =E^{\widetilde{Q}}_t[X]= \tilde{Y}_t,\ t\in [0,T]$. 
\end{proposition}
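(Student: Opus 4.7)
The plan is to proceed in direct analogy with Proposition~\ref{pro:OptimizationGeneratorPiuasBSDEsolution}, since the only structural difference is that the local-martingale restriction on $S$ (which there gave rise to $\Lambda = C \cap (-\xi+\mathrm{Ker}\,\sigma)$) is now dropped, so the correspondence $C$ itself plays the role previously played by $\Lambda$. Concretely, first introduce the auxiliary generator
\[
g(t,\omega,z) := \sup_{\lambda\in C_t(\omega)}\lambda^{\text{tr}}z.
\]
Assumption~\ref{asp:BddCorr} yields a deterministic constant $M=\sup_{(t,\omega)}\sup_{x\in C_t(\omega)}|x|<\infty$, and the Cauchy--Schwarz inequality then gives $|g(t,\omega,z)-g(t,\omega,z')|\le M|z-z'|$, i.e.\ uniform Lipschitz continuity in $z$. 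The measurable maximum theorem \citep[Thm.~2.K]{Rockafellar} applied to the standard correspondence $C$ yields $\Pred\otimes\mathcal{B}(\R^n)$-measurability of $g$. Since $g(\cdot,0)=0$, the data $(g,X)$ for $X\in L^2$ are standard, and classical BSDE theory \citep[Thm.~2.1]{KarouiPengQuenez} provides a unique standard solution $(\widetilde Y,\widetilde Z)\in\mathcal{S}^2\times\h^2$ to $-dY_t=g(t,Z_t)\,dt-Z_t^{\text{tr}}dW_t$, $Y_T=X$.

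Next, invoke the measurable selection theorem \citep[Thm.~1.C]{Rockafellar}: the $(t,\omega)$-correspondence $\{\lambda\in C_t(\omega):\lambda^{\text{tr}}\widetilde Z_t(\omega)=g(t,\omega,\widetilde Z_t(\omega))\}$ is nonempty (by compactness of $C_t(\omega)$) and predictable, hence admits a predictable selection $\tilde\lambda\in C$. By Assumption~\ref{asp:BddCorr} this selection is bounded, so $d\widetilde Q/dP:=\mathcal{E}(\tilde\lambda\cdot W)_T$ defines a measure in $\mathcal{P}^{\mathrm{ngd}}$. By construction $(\widetilde Y,\widetilde Z)$ also solves the linear BSDE with generator $Z^{\text{tr}}\tilde\lambda$, and Lemma~\ref{lem:ClosedMartandBSDEs} gives the Feynman--Kac representation $\widetilde Y_t=E^{\widetilde Q}_t[X]$.

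For the comparison, fix any predictable selection $\lambda\in C$ (again bounded by Assumption~\ref{asp:BddCorr}). The linear BSDE with generator $Z^{\text{tr}}\lambda$ has standard parameters, so Lemma~\ref{lem:ClosedMartandBSDEs} identifies its solution as $Y^\lambda_t=E^{Q^\lambda}_t[X]$ with $dQ^\lambda/dP=\mathcal{E}(\lambda\cdot W)_T\in\mathcal{P}^{\mathrm{ngd}}$. Since $\lambda^{\text{tr}}z\le g(t,z)$ pointwise, the comparison theorem \citep[Prop.~3.1]{KarouiPengQuenez} yields $Y^\lambda\le \widetilde Y$, with equality at $\lambda=\tilde\lambda$ by the previous paragraph. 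Parametrizing $\mathcal{P}^{\mathrm{ngd}}$ exactly by the bounded predictable selections of $C$, this gives
\[
\rho_t(X)=\esssup_{Q\in\mathcal{P}^{\mathrm{ngd}}}E^Q_t[X]=\esssup_{\lambda\in C}Y^\lambda_t=\widetilde Y_t=E^{\widetilde Q}_t[X],
\]
which is the claim.

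The one point that looks mildly delicate is the apparent circularity in the statement (the selection $\tilde\lambda$ depends on $\widetilde Z$, which itself depends on $\tilde\lambda$); it is resolved by solving the nonlinear BSDE with generator $g$ first and only then extracting $\tilde\lambda$ by measurable selection, as above. All other steps are routine applications of classical BSDE theory, so I do not expect any genuine obstacle beyond bookkeeping; the proof should be essentially one paragraph with a pointer to the proof of Proposition~\ref{pro:OptimizationGeneratorPiuasBSDEsolution}.
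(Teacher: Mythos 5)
Your proof is correct and follows exactly the route the paper intends: the paper gives no separate argument for this proposition but refers to the analogy with Proposition~\ref{pro:OptimizationGeneratorPiuasBSDEsolution}, whose proof (measurable maximum/selection from \citet{Rockafellar}, uniform Lipschitz continuity of the generator under Assumption~\ref{asp:BddCorr}, Lemma~\ref{lem:ClosedMartandBSDEs}, and the comparison theorem) is precisely what you reproduce with $C$ in place of $\Lambda$. Your remark on resolving the apparent circularity by solving the BSDE with generator $g$ first and extracting $\tilde\lambda$ afterwards is the same device used in the paper's Part~\ref{OptimizationGenerator} of that proposition.
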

Elements $Q$ of $\mathcal{P}^{\mathrm{ngd}}$ or $\mathcal{Q}^{\mathrm{ngd}}$  could be seen as generalized scenarios 
\citep[as in][]{ArtznerDelbaenEberHeathKu} for the dynamic coherent risk measures $\pi^u$ or $\rho$ (cf.\ Lemma \ref{lem:PropDynamicRiskMeasMeandQngdareMstable}).
By $\mathcal{P}^{\mathrm{ngd}}\cap\mathcal{M}^e = \mathcal{Q}^{\mathrm{ngd}}$ one has $\rho_t(X)\ge \pi^u_t(X)$ 
for $t\le T$. An investor holding a liability $X$ and trading in the market according to a permitted trading strategy $\phi$ would assign  at time $t$ a residual risk 
$\rho_t (X-\int_t^T\phi^{\textrm{tr}}_sd\widehat{W}_s)$ to his position. The investor's objective is to hedge his position by a trading strategy $\bar{\phi}$ that minimizes his residual risk 
at any time $t\le T$. To justify a premium $\pi^u_\cdot(X)$  for selling $X$, the minimal capital requirement to make his position $\rho$-acceptable should coincide with 
$\pi^u_\cdot(X)$. Thus, his hedging problem is to  find a strategy $\bar{\phi}\in\Phi$ such that 
\begin{equation}\label{eq:HedgingProblem}
	\pi^u_t(X)=\rho_t\Big(X-\int_t^T\bar{\phi}^{\textrm{tr}}_sd\widehat{W}_s\Big)=\essinf_{\phi\in\Phi}\ \rho_t\Big(X-\int_t^T\phi^{\textrm{tr}}_sd\widehat{W}_s\Big),\quad t\in[0,T].
\end{equation}
The good-deal hedging strategy will be defined as a minimizer $\bar{\phi}$ in (\ref{eq:HedgingProblem}); If such a minimizer exists (as e.g. in \cite{Becherer} or in  later sections) the good-deal valuation $\pi^u_\cdot(\cdot)$ coincides with the market 
consistent risk measure corresponding to $\rho$, in the spirit of \citet{BarrieuElKaroui}. For contingent claim $X$, the tracking error $R^\phi_t(X)$ of a strategy $\phi\in\Phi$ 
 is defined as the difference between the dynamic variations in the capital requirement and the profit/loss from trading (hedging) according to $\phi$, i.e.\
\(
	  	  R^\phi_t(X):=\pi^u_t(X)-\pi^u_0(X)- \phi\cdot \widehat{W}_t, \; t\in [0,T].
\)
\begin{proposition}\label{pro:RobustTrackingError}
Let $X \in L^2$ and $C$ satisfy Assumption \ref{asp:BddCorr}.
Then the tracking error $R^{\bar{\phi}}(X)$ for a strategy $\bar{\phi}\in\Phi$ solving (\ref{eq:HedgingProblem}) is a $Q$-supermartingale 
for all $Q\in \mathcal{P}^{\textrm{ngd}}$.
\end{proposition}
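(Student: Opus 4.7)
My plan is to reduce the supermartingale claim directly to the supermartingale property of the a-priori risk measure $\rho$ stated in Lemma~\ref{lem:PropDynamicRiskMeasMeandQngdareMstable}(a), which applies to $\rho$ since $\mathcal{P}^{\mathrm{ngd}}$ is convex and m-stable (exactly as noted for $\mathcal{Q}^{\mathrm{ngd}}$ right before the definition \eqref{eq:DefRhoPngdB}). The key algebraic input is the cash invariance of $\rho_t$: for any $\mathcal{F}_t$-measurable $c$ (of suitable integrability), $\rho_t(Y+c)=\rho_t(Y)+c$, which is immediate from the essential-supremum representation of $\rho_t$.

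Fix $Q\in\mathcal{P}^{\mathrm{ngd}}$ and $s\le t\le T$. I first observe that all relevant integrability is in place: under Assumption~\ref{asp:BddCorr} every $Q\in\mathcal{P}^{\mathrm{ngd}}$ has bounded Girsanov kernel, hence a density in $\mathcal{S}^p$ for all $p$; since $X\in L^2$ and $\bar\phi\in\Phi$, both $\pi^u_t(X)$ and $\int_s^t\bar\phi^{\mathrm{tr}}_u d\widehat W_u$ are $Q$-integrable and the conditional expectation under $Q$ is well-defined. Writing out $R^{\bar\phi}_t(X)$, splitting the integral $\bar\phi\cdot\widehat W_t=\bar\phi\cdot\widehat W_s+\int_s^t\bar\phi^{\mathrm{tr}}_u d\widehat W_u$, and using the optimality identity \eqref{eq:HedgingProblem} at time $t$, I obtain
\begin{equation*}
E^Q_s\bigl[R^{\bar\phi}_t(X)\bigr]=E^Q_s\!\left[\rho_t\!\left(X-\int_t^T\bar\phi^{\mathrm{tr}}_u d\widehat W_u\right)-\int_s^t\bar\phi^{\mathrm{tr}}_u d\widehat W_u\right]-\pi^u_0(X)-\bar\phi\cdot\widehat W_s.
\end{equation*}

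Next I apply cash invariance of $\rho_t$ to the $\mathcal{F}_t$-measurable term $\int_s^t\bar\phi^{\mathrm{tr}}_u d\widehat W_u$, which combines the two stochastic integrals inside $\rho_t$ into $\rho_t\bigl(X-\int_s^T\bar\phi^{\mathrm{tr}}_u d\widehat W_u\bigr)$. Then the supermartingale property from Lemma~\ref{lem:PropDynamicRiskMeasMeandQngdareMstable}(a), applied to $\rho$ (and with $Y:=X-\int_s^T\bar\phi^{\mathrm{tr}}_u d\widehat W_u$ an $\mathcal{F}_T$-measurable terminal position), yields
\begin{equation*}
E^Q_s\!\left[\rho_t\!\left(X-\int_s^T\bar\phi^{\mathrm{tr}}_u d\widehat W_u\right)\right]\le\rho_s\!\left(X-\int_s^T\bar\phi^{\mathrm{tr}}_u d\widehat W_u\right)=\pi^u_s(X),
\end{equation*}
where the last equality is again \eqref{eq:HedgingProblem}, this time at $s$. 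Substituting back gives $E^Q_s[R^{\bar\phi}_t(X)]\le\pi^u_s(X)-\pi^u_0(X)-\bar\phi\cdot\widehat W_s=R^{\bar\phi}_s(X)$, which is the claimed $Q$-supermartingale property. The same argument works with $s,t$ replaced by stopping times $\sigma\le\tau\le T$ in view of the stopping-time version of Lemma~\ref{lem:PropDynamicRiskMeasMeandQngdareMstable}(a).

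I expect no serious obstacle: the whole point is that the optimality identity \eqref{eq:HedgingProblem} turns the tracking error into a shifted version of $\rho\bigl(X-\int_\cdot^T\bar\phi^{\mathrm{tr}}_u d\widehat W_u\bigr)$, whose supermartingale property under every $Q\in\mathcal{P}^{\mathrm{ngd}}$ is precisely what m-stability of $\mathcal{P}^{\mathrm{ngd}}$ buys us. The only delicate point is the bookkeeping with cash invariance under $\mathcal{F}_t$-measurable but unbounded shifts; this is handled by the $L^2$/bounded-density framework guaranteed by Assumption~\ref{asp:BddCorr}, and if desired can be made rigorous by first verifying cash invariance for bounded shifts and then extending by monotone or dominated convergence along $\int_s^t\bar\phi^{\mathrm{tr}}_u d\widehat W_u$.
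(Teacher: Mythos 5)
Your proof is correct, and it shares the paper's first step: via the optimality identity \eqref{eq:HedgingProblem} and cash invariance of $\rho_t$, the tracking error is recognized as $R^{\bar\phi}_t(X)=\rho_t\bigl(X-\int_0^T\bar\phi^{\mathrm{tr}}_u d\widehat W_u\bigr)-\pi^u_0(X)$ (you keep the lower limit at $s$, which is the same identity in conditional form). Where you diverge is in how the supermartingale property is then extracted. The paper invokes the BSDE representation of $\rho$ (Proposition~\ref{pro:RhoasBSDESolution}), changes measure to $Q$, observes that the resulting finite variation part has drift $\esssup_{\lambda\in C}Z_t^{\mathrm{tr}}\lambda_t - Z_t^{\mathrm{tr}}\lambda^Q_t\ge 0$ and hence is non-increasing, and settles integrability by H\"older ($R^{\bar\phi}(X)\in\mathcal S^{2-\epsilon}(Q)$). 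You instead appeal to the abstract supermartingale property of the m-stable family $\mathcal{P}^{\mathrm{ngd}}$ from Lemma~\ref{lem:PropDynamicRiskMeasMeandQngdareMstable}(a). Both routes work; yours is shorter and does not need the BSDE machinery, while the paper's is self-contained for $X\in L^2$. The one point you should make explicit is that Lemma~\ref{lem:PropDynamicRiskMeasMeandQngdareMstable}(a) is stated only for bounded terminal positions, whereas your $Y=X-\int_s^T\bar\phi^{\mathrm{tr}}_u d\widehat W_u$ is merely in $L^2$; the extension to $L^2$ under Assumption~\ref{asp:BddCorr} is asserted in Section~\ref{sec:NoGoodDealValuation_bdcorr} (and follows from the pasting argument together with the uniform boundedness of the Girsanov kernels, which gives densities in every $L^p$), but it is exactly the integrability issue the paper's H\"older step is designed to handle, so it deserves more than the passing mention you give it.
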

\begin{proof}
By the first equality of (\ref{eq:HedgingProblem}) and the definition of the tracking error one has $R^{\bar{\phi}}_t(X)=-\pi^u_0(X)+\rho_t\big(X-\int_0^T\bar{\phi}^{\textrm{tr}}_sd\widehat{W}_s\big),\ t\in [0,T]$. 
Note that $\int_0^T\bar{\phi}^{\text{tr}}_sd\widehat{W}_s\in L^2(P)$ holds since $\xi$ is bounded and $\bar{\phi}$ is in $L^2(P\otimes dt)$. Now let $Q\in \mathcal{P}^{\textrm{ngd}}$ with $dQ/dP = \mathcal{E}(\lambda^Q\cdot W)$.   
By Proposition \ref{pro:RhoasBSDESolution} this implies after a change of measures that 
$-dR^{\bar{\phi}}_t(X) =\big(\esssup_{\lambda\in C}Z^{\textrm{tr}}_t\lambda_t -Z^{\textrm{tr}}_t\lambda^Q_t\big)dt-Z_t^{\textrm{tr}}dW^Q_t,\ t\in[0,T].$
Because $R^{\bar{\phi}}(X)\in \mathcal{S}^2(P)$ and $dQ/dP\in L^p(P)$ for all $p<\infty$ (since $\lambda^Q$ is bounded), then H\"older's inequality implies that 
$R^{\bar{\phi}}(X)\in \mathcal{S}^{2-\epsilon}(Q)$ for $\epsilon\in(0,1)$. Furthermore under $Q$ the finite variation part of $R^{\bar{\phi}}_\cdot(X)$ is non-increasing, and hence 
$R^{\bar{\phi}}(X)$ is a $Q$-supermartingale.
\end{proof}
\begin{remark}\label{rem:atleastmeanselffinancing}
For a self-financing strategy $\phi\in \Phi$ replicating $X=x_0+\int_0^T\phi^{\textrm{tr}} d\widehat{W}$, with $x_0\in \mathbb{R}$,  the tracking error vanishes, i.e.\ $R^{{\phi}}(X)=0$. One says that a strategy is mean-self-financing 
(like risk minimizing strategies studied in \citet[][]{Schweizer}, Sect.2, with $E^{\widehat{Q}}_t[X]$ taking the role of $\pi^u_t(X)$) if its tracking 
error is a martingale (under $P$). Hence by Proposition  \ref{pro:RobustTrackingError}, one can view the good-deal hedging strategy as being ``at least mean-self-financing'' under any 
$Q\in \mathcal{P}^{\textrm{ngd}}$, since its tracking error is a supermartingale under all measures in $\mathcal{P}^{\textrm{ngd}}$.
Holding uniformly over all measures in $\mathcal{P}^{\textrm{ngd}}$, this could be seen as a robustness property of $\bar{\phi}$.  
\end{remark}
To describe solutions to the hedging problem (\ref{eq:HedgingProblem}), we will often assume that $C$ has further structure and is uniformly bounded. Section \ref{subsec:GDVinStochVolModel} also contains an example 
with a  semi-explicit solution to the hedging problem in the Heston model
 for a correspondence $C$ that is not uniformly bounded but satisfies (\ref{eq:2IntCorr}). 

\subsection{Results for ellipsoidal no-good-deal constraints}
\label{subsec:Example1withEllipsoids}
 
This section derives more explicit BSDE results to describe the solution to the valuation and the hedging problem (\ref{eq:HedgingProblem}) for  (predictable) ellipsoidal no-good-deal constraints. This generalizes the important special case of  radial  constraints \citep[as e.g.\ in][]{Becherer}, which is common to the good-deal literature and  justified by bounds (uniform in $(t,\omega)$) on optimal growth rates or instantaneous Sharpe ratios, while still permitting comparably explicit results.  The generalization could
 be interpreted as imposing different bounds on growth rates (or Sharpe ratios) for the risk factors 
associated to the principal axes. While such might appear as technical at this stage,  in the subsequent context of  model uncertainty (cf.\ Remark~\ref{Rem:C0plusTheta}~b)) non-radial 
constraints will appear naturally.

To this end, let $h$ be a positive bounded predictable process, and $A$ be a predictable $\R^{n\times n}$-matrix-valued process with symmetric values and uniformly elliptic i.e.\  $A^\text{tr}=A\text{ and } x^{\textrm{tr}}A x\ge c \abs{x}^2$ 
for all $x\in\R^n$, for some $c\in (0,\infty)$. The latter is a tractable and convenient (not necessarily most general) 
generalization of common good-deal constraints in the radial case, where $A\equiv{\mathrm{Id}}_{\R^n}$. It can be seen as setting  constraints by scalar bounds on the \cite{Mahalanobis36}-distance  of Girsanov kernels for the no-good-deal  measures,  instead of 
on their Euclidian norm.  

We define the standard \citep[see][Cor.1.Q]{Rockafellar} correspondence
\begin{equation}\label{Constrellip}
	C_t(\omega) = \left\lbrace x\in \R^n\ \lvert\ x^{\textrm{tr}}A_t(\omega)x\le h^2_t(\omega)\right\rbrace,\quad (t,\omega)\in[0,T]\times\Omega,
\end{equation}
that satisfies Assumption \ref{asp:BddCorr} due to ellipticity and boundedness of $h$. Assume that the kernel of the volatility matrix $\sigma$ is spanned by eigenvectors of $A$, i.e.\ 
        \begin{equation}\label{eq:AssumptionOptimizationSimplify}
        	A^{-1}_t(\mathrm{Ker}\ \sigma_t) = \mathrm{Ker}\ \sigma_t,\quad t\in[0,T].
        \end{equation}
As the eigenvectors of $A$ are orthogonal and $(\mathrm{Ker}\ \sigma)^{\bot}=\mathrm{Im}\ \sigma^{\textrm{tr}} $, then (\ref{eq:AssumptionOptimizationSimplify}) can be interpreted as 
separability of $\mathrm{Im}\ \sigma^{\textrm{tr}}$ and $\mathrm{Ker}\ \sigma$ in the sense that each of these subspaces has a basis of eigenvectors of $A$. Given (\ref{eq:AssumptionOptimizationSimplify}), 
the subspaces $\mathrm{Im}\ \sigma^{\textrm{tr}}$ and $\mathrm{Ker}\ \sigma$ are orthogonal 
under the scalar product defined by $A$, one can re-write 
\begin{equation*}
	\mathcal{Q}^{\mathrm{ngd}}=\Big\lbrace Q\sim P\ \Big\lvert\ dQ/dP\Big.=\mathcal{E}\left(\lambda\cdot W\right),\ 
                             \lambda\textrm{ predictable, }\lambda=-\xi+\eta,\textrm{ } \eta\in C^\xi\cap \mathrm{Ker}\ \sigma  \Big\rbrace,
\end{equation*}
with 
$
	C^\xi_t(\omega) = \left\lbrace x\in\R^n\ \lvert\ x^{\textrm{tr}}A_t(\omega)x\le h^2_t(\omega)-\xi_t(\omega)^{\textrm{tr}}A_t(\omega)\xi_t(\omega)\right\rbrace,
$
also satisfying Assumption \ref{asp:BddCorr}. The correspondence $C^\xi$ is standard if 
     \begin{equation}\label{eq:WeakerAssumption}
     	h^2 > \xi^{\textrm{tr}}A\xi.
     \end{equation}
The separability condition (\ref{eq:AssumptionOptimizationSimplify}) ensures that $-\xi+\eta\in C$ is equivalent to $\eta\in C^\xi$, for $\eta\in \mathrm{Ker}\ \sigma$. This way the 
ellipsoidal constraint on the Girsanov kernels transfers to one on their $\eta$-component, which permits to formulate the no-good-deal restriction only with respect to non-traded risk 
factors in the market. In this setup, it is straightforward to obtain an expression for $\bar{\lambda}$ from Part \ref{OptimizationGenerator} of Proposition \ref{pro:OptimizationGeneratorPiuasBSDEsolution} 
via 
\begin{lemma}\label{lem:OptimizationGeneratorApplication}
	For $z\in \R^n\setminus\{0\}$, $h>0$ and a symmetric positive definite $n\times n$-matrix $A$, the unique maximizer of $y^{\text{tr}}z$ subject to $y^{\textrm{tr}}Ay\le h^2$
	is $\bar{y}=h(z^{\textrm{tr}}A^{-1}z)^{-1/2}A^{-1}z$.
\end{lemma}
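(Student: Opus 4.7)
The plan is to recognize this as a linear optimization over a (nondegenerate) ellipsoid, which I would solve elegantly via a Cauchy–Schwarz argument in the inner product induced by $A$. This is cleaner than KKT because it delivers existence, value, and uniqueness in one stroke.

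First I would introduce the inner product $\langle u,v\rangle_A := u^{\textrm{tr}}Av$ on $\R^n$, which is well-defined since $A$ is symmetric positive definite. The key identity is
\[
y^{\textrm{tr}}z \;=\; y^{\textrm{tr}}A(A^{-1}z) \;=\; \langle y,\,A^{-1}z\rangle_A.
\]
Applying the Cauchy–Schwarz inequality in this inner product and using the constraint $y^{\textrm{tr}}Ay\le h^2$ yields
\[
y^{\textrm{tr}}z \;\le\; \|y\|_A\,\|A^{-1}z\|_A \;=\; \sqrt{y^{\textrm{tr}}Ay}\,\sqrt{z^{\textrm{tr}}A^{-1}z}\;\le\; h\sqrt{z^{\textrm{tr}}A^{-1}z},
\]
where I used $(A^{-1}z)^{\textrm{tr}}A(A^{-1}z)=z^{\textrm{tr}}A^{-1}z$. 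Note $z^{\textrm{tr}}A^{-1}z>0$ because $z\ne 0$ and $A^{-1}$ is positive definite, so the bound is strictly positive and finite.

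Next I would verify that the candidate $\bar{y} = h(z^{\textrm{tr}}A^{-1}z)^{-1/2}A^{-1}z$ attains this bound: a direct substitution gives $\bar{y}^{\textrm{tr}}A\bar{y} = h^2$ (feasibility, in fact on the boundary) and $\bar{y}^{\textrm{tr}}z = h(z^{\textrm{tr}}A^{-1}z)^{1/2}$, matching the upper bound. Hence $\bar y$ is a maximizer with optimal value $h\sqrt{z^{\textrm{tr}}A^{-1}z}$.

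For uniqueness, I would invoke the equality case of Cauchy–Schwarz: any maximizer $y$ must be a scalar multiple of $A^{-1}z$, say $y = \alpha A^{-1}z$, with $\alpha\ge 0$ to make $y^{\textrm{tr}}z\ge 0$; then $y^{\textrm{tr}}Ay = h^2$ forces $\alpha^2 z^{\textrm{tr}}A^{-1}z = h^2$, so $\alpha = h(z^{\textrm{tr}}A^{-1}z)^{-1/2}$ and $y=\bar y$. There are no genuine obstacles here: the only point requiring mild care is ensuring $z^{\textrm{tr}}A^{-1}z\ne 0$ so that $\bar y$ is well-defined and the Cauchy–Schwarz equality case pins down $\alpha$, both of which follow immediately from $z\ne 0$ and positive definiteness of $A$.
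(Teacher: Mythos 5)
Your proof is correct. Note that the paper itself offers no proof of this lemma (it is introduced with ``it is straightforward to obtain''), and its companion result, Lemma~\ref{lem:OptimizationWithPhiSolved} in the appendix, is handled by the Kuhn--Tucker theorem with details deferred to a thesis. Your Cauchy--Schwarz argument in the inner product $\langle u,v\rangle_A=u^{\textrm{tr}}Av$ is a genuinely different and, for this particular statement, cleaner route: it produces the optimal value $h(z^{\textrm{tr}}A^{-1}z)^{1/2}$, feasibility of $\bar y$, and uniqueness in one pass, without invoking constraint-qualification machinery. The KKT approach, by contrast, scales better to the harder variant in Lemma~\ref{lem:OptimizationWithPhiSolved}, where the objective mixes a linear term with a square-root term and the optimization runs over the subspace $\mathrm{Im}\,\sigma^{\textrm{tr}}$, a setting where a single Cauchy--Schwarz estimate no longer suffices. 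One small point of precision in your uniqueness step: equality must hold in \emph{both} inequalities of your chain, so besides proportionality $y=\alpha A^{-1}z$ you should explicitly use $y^{\textrm{tr}}Ay=h^2$ to fix $|\alpha|$, and then the sign of $\alpha$ is determined by the requirement that $y^{\textrm{tr}}z$ equal the strictly positive maximum rather than its negative; your sketch says essentially this, but ``$\alpha\ge 0$ to make $y^{\textrm{tr}}z\ge 0$'' is the slightly weaker statement.
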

For $X\in L^2$ with $C$ satisfying Assumption \ref{asp:BddCorr}, the classical BSDE 
\begin{align}        	
-dY_t &= \Big(-\xi_t^{\textrm{tr}}\Pi_t(Z_t) + \sqrt{h_t^2-\xi_t^{\textrm{tr}}A_t\xi_t}\sqrt{{\Pi_t^\bot(Z_t)}^\textrm{tr}A^{-1}_t\Pi_t^\bot(Z_t)}\Big)dt
-Z_t^{\textrm{tr}}dW_t
\label{eq:BSDEPiuApplicationfinal}
\end{align}
with terminal condition $Y_T=X$ has a unique standard solution $(Y,Z)$.\\
We will see that $\pi^u_\cdot(X)=Y$ holds and that the optimal 
kernel $\bar{\lambda}$ from  Proposition \ref{pro:OptimizationGeneratorPiuasBSDEsolution}, Part \ref{OptimizationGenerator}, 
takes the form $\bar{\lambda}=-\xi+\bar{\eta}$, with $\bar{\eta}\in\mathrm{Ker}\ \sigma$ given by 
        \begin{equation}\label{eq:EquationEta}
		\bar{\eta}_t =\sqrt{\frac{{h_t^2-\xi_t^{\textrm{tr}}A_t\xi_t}}{{{\Pi_t^\bot(Z_t)}^\textrm{tr}A^{-1}_t\Pi_t^\bot(Z_t)}}}A^{-1}_t\Pi_t^\bot(Z_t)\,,\quad  t\in [0,T].
	\end{equation}
By Lemma \ref{lem:OptimizationGeneratorApplication} and  (\ref{eq:AssumptionOptimizationSimplify}), $\bar{\eta}_t^{\textrm{tr}}\Pi^\bot_t(Z_t)=\esssup_{\eta_\cdot\in C^\xi_\cdot\cap\mathrm{Ker}\ \sigma_\cdot} \eta_t^{\textrm{tr}}\Pi^\bot_t(Z_t)$, $P\otimes dt\text{-a.e.}$,
thus $\bar{\lambda}^{\textrm{tr}}_tZ_t = -\xi_t^{\textrm{tr}}\Pi_t(Z_t)+\big(h_t^2-\xi_t^{\textrm{tr}}A_t\xi_t\big)^{1/2}\big({\Pi_t^\bot(Z_t)}^\textrm{tr}A^{-1}_t\Pi_t^\bot(Z_t)\big)^{1/2}$, $P\otimes dt\text{-a.e.}$.
Therefore Part \ref{PiuasBSDEsolution} of Proposition \ref{pro:OptimizationGeneratorPiuasBSDEsolution} yields
\begin{theorem}\label{thm:NGDboundasBSDEsol}
Assume (\ref{eq:AssumptionOptimizationSimplify}) and (\ref{eq:WeakerAssumption}) hold. For $X\in L^2$, let $(Y,Z)$ be the standard solution to the 
BSDE (\ref{eq:BSDEPiuApplicationfinal}). Then $\pi_t^u(X)=Y_t=E^{\bar{Q}}_t[X],\ t\in[0,T]$, where $d\bar{Q}=\mathcal{E}\left((-\xi+\bar{\eta})\cdot W\right)dP$ with $\bar{\eta}$ given explicitly by (\ref{eq:EquationEta}).
\end{theorem}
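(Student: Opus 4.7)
The plan is to reduce the statement to a direct application of Proposition~\ref{pro:OptimizationGeneratorPiuasBSDEsolution} by computing the supremum defining the generator $g(t,z)=\sup_{\lambda\in\Lambda_t}\lambda^{\textrm{tr}}z$ explicitly under the ellipsoidal specification of $C$. To that end, I will first parametrize any $\lambda\in\Lambda=C\cap(-\xi+\mathrm{Ker}\,\sigma)$ as $\lambda=-\xi+\eta$ with $\eta\in\mathrm{Ker}\,\sigma$. The key structural input is (\ref{eq:AssumptionOptimizationSimplify}): since $A$ is symmetric and $A^{-1}(\mathrm{Ker}\,\sigma)=\mathrm{Ker}\,\sigma$, applying $A$ gives $A(\mathrm{Ker}\,\sigma)=\mathrm{Ker}\,\sigma$, so that for $v\in\mathrm{Im}\,\sigma^{\textrm{tr}}=(\mathrm{Ker}\,\sigma)^{\bot}$ and $u\in\mathrm{Ker}\,\sigma$ one has $v^{\textrm{tr}}Au=0$. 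Hence $\mathrm{Im}\,\sigma^{\textrm{tr}}$ and $\mathrm{Ker}\,\sigma$ are $A$-orthogonal; in particular $\xi^{\textrm{tr}}A\eta=0$, so $\lambda^{\textrm{tr}}A\lambda=\xi^{\textrm{tr}}A\xi+\eta^{\textrm{tr}}A\eta$, and the constraint $\lambda\in C$ becomes $\eta^{\textrm{tr}}A\eta\le h^2-\xi^{\textrm{tr}}A\xi$, which is well-defined (and admits $\eta=0$) by (\ref{eq:WeakerAssumption}).

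Next I decompose the objective. Since $\xi_t\in\mathrm{Im}\,\sigma_t^{\textrm{tr}}$ and $\eta_t\in\mathrm{Ker}\,\sigma_t$ are (Euclidean) orthogonal to $\Pi^\bot_t(z)$ and $\Pi_t(z)$ respectively, one has $\lambda^{\textrm{tr}}z=-\xi^{\textrm{tr}}\Pi(z)+\eta^{\textrm{tr}}\Pi^\bot(z)$. So
\[
g(t,z)=-\xi_t^{\textrm{tr}}\Pi_t(z)+\sup\bigl\{\,\eta^{\textrm{tr}}\Pi^\bot_t(z):\eta\in\mathrm{Ker}\,\sigma_t,\ \eta^{\textrm{tr}}A_t\eta\le h_t^2-\xi_t^{\textrm{tr}}A_t\xi_t\,\bigr\}.
\]
Now I would apply Lemma~\ref{lem:OptimizationGeneratorApplication} to the vector $\Pi^\bot_t(z)$ and matrix $A_t$ with bound $\tilde h_t:=\sqrt{h_t^2-\xi_t^{\textrm{tr}}A_t\xi_t}$, yielding the unconstrained maximizer $\bar\eta_t$ of (\ref{eq:EquationEta}). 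The crucial verification is that $\bar\eta_t$ respects the subspace constraint $\mathrm{Ker}\,\sigma_t$: this follows because $\Pi^\bot_t(z)\in\mathrm{Ker}\,\sigma_t$ and $A^{-1}_t(\mathrm{Ker}\,\sigma_t)=\mathrm{Ker}\,\sigma_t$ by (\ref{eq:AssumptionOptimizationSimplify}), so $A_t^{-1}\Pi^\bot_t(z)\in\mathrm{Ker}\,\sigma_t$. Hence the unconstrained maximizer is feasible and therefore optimal, and the attained value is $\tilde h_t\sqrt{\Pi^\bot_t(z)^{\textrm{tr}}A_t^{-1}\Pi^\bot_t(z)}$. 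This identifies $g$ with the generator of (\ref{eq:BSDEPiuApplicationfinal}).

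It remains to check that the data $(g,X)$ is standard so that Proposition~\ref{pro:OptimizationGeneratorPiuasBSDEsolution} applies. Boundedness of $h$ and uniform ellipticity $x^{\textrm{tr}}A_tx\ge c|x|^2$ imply that $C$ satisfies Assumption~\ref{asp:BddCorr}, hence so does $\Lambda$; together with boundedness of $\xi$ this gives uniform Lipschitz continuity of $g$ in $z$ (each term is either linear in $z$ or an $A^{-1}$-seminorm of $\Pi^\bot(z)$, with uniformly bounded constants), and $g(\cdot,0)=0$; $X\in L^2$ is assumed. Thus Proposition~\ref{pro:OptimizationGeneratorPiuasBSDEsolution}(\ref{OptimizationGenerator})--(\ref{PiuasBSDEsolution}) applies: the measurable-selection step yields a predictable $\bar\lambda=-\xi+\bar\eta\in\Lambda$ with $\bar\lambda^{\textrm{tr}}Z=g(\cdot,Z)$ $P\otimes dt$-a.e., $\bar Q:=\mathcal{E}(\bar\lambda\cdot W)_T\cdot P$ lies in $\mathcal{Q}^{\textrm{ngd}}$, and $\pi^u_t(X)=E^{\bar Q}_t[X]=Y_t$, which is the claim.

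The only real subtlety is the $A$-orthogonality argument making the unconstrained optimizer from Lemma~\ref{lem:OptimizationGeneratorApplication} automatically land in $\mathrm{Ker}\,\sigma$; everything else is book-keeping within the framework of Proposition~\ref{pro:OptimizationGeneratorPiuasBSDEsolution}.
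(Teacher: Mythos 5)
Your proposal is correct and follows essentially the same route as the paper: use the separability condition (\ref{eq:AssumptionOptimizationSimplify}) to show $-\xi+\eta\in C$ with $\eta\in\mathrm{Ker}\,\sigma$ is equivalent to $\eta^{\textrm{tr}}A\eta\le h^2-\xi^{\textrm{tr}}A\xi$, compute the generator supremum explicitly via Lemma~\ref{lem:OptimizationGeneratorApplication} (checking the unconstrained maximizer lands in $\mathrm{Ker}\,\sigma$), and then invoke Proposition~\ref{pro:OptimizationGeneratorPiuasBSDEsolution}. You in fact spell out the $A$-orthogonality and feasibility verifications more explicitly than the paper does.
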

One can interpret the dynamics of the no-good-deal valuation   (\ref{eq:BSDEPiuApplicationfinal}) 
as follows. Writing it as
 $dY_t = -a_t dt + \xi_t^{\text{tr}}\Pi_t(Z_t)\,dt+ Z_t^{\text{tr}} dW_t= -a_t dt + \Pi_t(Z_t)^{\text{tr}} d\widehat{W}_t +  \Pi_t^\bot(Z_t)^{\text{tr}}d\widehat{W}_t$ (cf.\ Section~\ref{sec:Parameterization}), with 
 $a_t:= \sqrt{h_t^2-\xi_t^{\textrm{tr}}A_t\xi_t}\sqrt{{\Pi_t^\bot(Z_t)}^\textrm{tr}A^{-1}_t\Pi_t^\bot(Z_t)}$, it decomposes 
 into a hedgeable part $\Pi_t(Z_t)^{\text{tr}}(\xi_t dt+dW_t)= \Pi_t(Z_t)^{\text{tr}}d\widehat{W}$, that is dynamically spanned by tradeable assets, 
an orthogonal part  $\Pi_t^\bot(Z_t)^{\text{tr}}d\widehat{W}$, being a martingale under $P$ (and $\widehat{Q}$), and
a premium part, where the rate $a_t\ge 0$ inherent to the upper good deal bound can be seen as compensation for the seller of the claim for non-tradeable risk. Note that $a>0$ on $\{(\omega,t):\Pi^\bot(Z_t)\neq 0\}$ by (\ref{eq:WeakerAssumption}). 

The observation of the following lemma is straightforward.
\begin{lemma}\label{lem:qPrimePositiveDefinitePtwise}
	The matrices $A_t^{-1}(\omega)$, for $(t,\omega)\in [0,T]\times\Omega$, are positive-definite 
	and satisfy $x^{\textrm{tr}}A^{-1}_t(\omega)x \ge \alpha'_t(\omega) \abs{x}^2$ for all $x,t$,
where $\alpha'_t(\omega) = c\lVert A_t(\omega)\rVert^{-2}>0$ for $c$ being the constant of uniform ellipticity of $A$. 
Moreover $\lVert A\rVert \ge c$ holds.
\end{lemma}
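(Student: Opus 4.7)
The claim is essentially a pointwise linear-algebra statement: fix $(t,\omega)$ and drop the subscripts, so $A$ is a symmetric $n\times n$ matrix with $x^{\text{tr}}Ax \ge c\abs{x}^2$ for all $x\in\R^n$, and we want to show that $A^{-1}$ exists, is symmetric positive definite, satisfies $x^{\text{tr}}A^{-1}x\ge c\lVert A\rVert^{-2}\abs{x}^2$, and that $\lVert A\rVert\ge c$.

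The plan is to reduce the bound for $A^{-1}$ to the hypothesized bound for $A$ via the substitution $y=A^{-1}x$. First, since $A$ is symmetric and strictly positive definite by ellipticity, $A$ is invertible and $A^{-1}$ is symmetric with strictly positive eigenvalues, hence positive definite. Then for any $x\in\R^n$ set $y:=A^{-1}x$, so that $x=Ay$ and
\[
x^{\text{tr}}A^{-1}x \;=\; y^{\text{tr}}A^{\text{tr}}A^{-1}Ay \;=\; y^{\text{tr}}Ay \;\ge\; c\abs{y}^2.
\]
Now I bound $\abs{y}$ from below using the definition of the operator norm: $\abs{x}=\abs{Ay}\le \lVert A\rVert\abs{y}$, whence $\abs{y}^2\ge \lVert A\rVert^{-2}\abs{x}^2$ (note $\lVert A\rVert>0$ as $A\ne 0$). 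Combining gives the desired estimate $x^{\text{tr}}A^{-1}x\ge c\lVert A\rVert^{-2}\abs{x}^2=\alpha'\abs{x}^2$.

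For $\lVert A\rVert\ge c$, I would just use Cauchy--Schwarz on the ellipticity inequality: $c\abs{x}^2\le x^{\text{tr}}Ax\le \abs{x}\cdot\abs{Ax}\le \lVert A\rVert\abs{x}^2$, so $\lVert A\rVert\ge c$. (Alternatively, $\lVert A\rVert$ equals the largest eigenvalue of the symmetric positive definite $A$, which is at least the smallest eigenvalue, and the smallest eigenvalue is at least $c$ by ellipticity.)

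There is no real obstacle here; the only point to be a bit careful about is the measurability of $\omega\mapsto\alpha'_t(\omega)=c\lVert A_t(\omega)\rVert^{-2}$, but this follows from predictability of $A$ since $(t,\omega)\mapsto\lVert A_t(\omega)\rVert$ is a continuous function of the entries of $A_t(\omega)$ and is bounded away from $0$ by $c>0$, so $\alpha'$ is a positive predictable process, as required for the stated estimate to be meaningful in the $P\otimes dt$ sense.
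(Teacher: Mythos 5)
Your proof is correct; the paper itself gives no argument for this lemma, dismissing it as a "straightforward" observation, and your elementary substitution $y=A^{-1}x$ together with $\abs{Ay}\le\lVert A\rVert\abs{y}$ is exactly the intended verification. The closing remark on predictability of $\alpha'$ is a sensible extra check and raises no issues.
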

By Lemma \ref{lem:OptimizationGeneratorApplication}, the process $\tilde{\lambda} = h(Z^\textrm{tr}A^{-1}Z)^{-1/2}\ A^{-1}Z$ satisfies 
$\tilde{\lambda}^{\textrm{tr}}_tZ_t = \esssup_{\lambda_\cdot^{\textrm{tr}}A_\cdot\lambda_\cdot\le h^2_\cdot}\ \lambda^{\textrm{tr}}_tZ_t=h_t(Z_t^\textrm{tr}A^{-1}_tZ_t)^{1/2}$, $P\otimes dt\text{-a.e.}$. 
Hence Proposition \ref{pro:RhoasBSDESolution} gives $\rho_t(X)=Y_t,\ t\in [0,T]$, where $(Y,Z)$ uniquely solves the classical BSDE with terminal condition $ Y_T=X$ and
	\begin{equation}\label{eq:BSDErhoApplication}
		 -dY_t = h_t(Z_t^\textrm{tr}A^{-1}_tZ_t)^{1/2}dt -Z_t^{\textrm{tr}}dW_t.
	\end{equation}
Thanks to Lemma \ref{lem:qPrimePositiveDefinitePtwise}, a sufficient condition to ensure (\ref{eq:WeakerAssumption}) is
        \begin{equation}\label{eq:AssumptionOnAlphaPrime}
     	\abs{\xi}<h\sqrt{\alpha'}.
        \end{equation}
In addition (\ref{eq:AssumptionOnAlphaPrime}) is used to verify for Lemma \ref{lem:OptimizationWithPhiSolved} (stated in the appendix) the Kuhn-Tucker conditions 
before applying the Kuhn-Tucker theorem \citep[see][Section 28]{RockafellarConvexAnalysis}, after which comparison results for BSDE yield the result of 
Theorem \ref{thm:FullHedgingProblemSolved} below. The proof is omitted as it is analogous to that of \citep[][Thm.5.4]{Becherer}, 
using now Lemma~\ref{lem:OptimizationWithPhiSolved} instead of Lem.6.1 therein. 
For $\phi\in\Phi$, let $(Y^\phi,Z^\phi)$ denote the standard solution to the BSDE with terminal condition $Y_T=X$ and, for $t\le T$,
\begin{align}
 		-dY_t &= \Big(-\xi_t^{\textrm{tr}}\phi_t +h_t\big(\left(Z_t-\phi_t\right)^{\textrm{tr}}A^{-1}_t\left(Z_t-\phi_t\right)\big)^{1/2}\Big)dt -Z_t^{\textrm{tr}}dW_t.
\label{eq:BSDEforYphi}
	\end{align}
\begin{theorem}\label{thm:FullHedgingProblemSolved}
 Assume (\ref{eq:AssumptionOptimizationSimplify}),(\ref{eq:AssumptionOnAlphaPrime}) hold. For $X\in L^2$, let $(Y,Z)$ and $(Y^\phi,Z^\phi)$ (for $\phi\in\Phi$) be standard solutions to the 
 BSDEs (\ref{eq:BSDEPiuApplicationfinal}),(\ref{eq:BSDEforYphi}). Then $Y^\phi_t=\rho_t (X-\int_t^T\phi^{\textrm{tr}}_sd\widehat{W}_s)$, $ t\le T$, and the strategy
	\begin{equation}\label{eq:PhibarExpression}
 		\bar{\phi}_t = \frac{\sqrt{\Pi_t^\bot(Z_t)^{\textrm{tr}}A^{-1}_t\Pi_t^\bot(Z_t)}}{\sqrt{h_t^2-\xi_t^{\textrm{tr}}A_t\xi_t}}\ A_t\xi_t + \Pi_t(Z_t)
	\end{equation}
is in $\Phi$ and satisfies $\displaystyle Y^{\bar{\phi}}_t = \essinf_{\phi\in\Phi}\ Y^{\phi}_t = Y_t$ for any $t\in [0,T]$, that is  
\begin{equation*}
	\pi^u_t(X) =\essinf_{\phi\in\Phi}\ \rho_t\Big(X-\int_t^T\phi^{\textrm{tr}}_sd\widehat{W}_s\Big)= \rho_t\Big(X-\int_t^T\bar{\phi}^{\textrm{tr}}_sd\widehat{W}_s\Big)=Y^{\bar{\phi}}_t.
\end{equation*}
Moreover, the tracking error $R^{\bar{\phi}}(X)$ is a supermartingale under all measures $Q\in \mathcal{P}^{\textrm{ngd}}$ and a martingale under the measure $Q^\lambda\in \mathcal{P}^{\textrm{ngd}}$ 
with Girsanov kernel $\lambda_t := h_t\big((Z_t-\bar{\phi}_t)A^{-1}_t(Z_t-\bar{\phi}_t)\big)^{-1/2} A_t^{-1}\left(Z_t-\bar{\phi}_t\right)$, $t\in [0,T].$
\end{theorem}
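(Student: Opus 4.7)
The plan is to reduce the theorem to three routine pieces: (i) the BSDE representation of the residual risk $\rho_t(X-\int_t^T\phi^{\textrm{tr}}_s d\widehat{W}_s)$, (ii) a pointwise optimization that makes the generator of \eqref{eq:BSDEPiuApplicationfinal} agree with the infimum over $\phi$ of the generator of \eqref{eq:BSDEforYphi}, and (iii) a direct drift computation for the tracking error. First I would establish $Y^\phi_t=\rho_t\bigl(X-\int_t^T\phi^{\textrm{tr}}_s d\widehat{W}_s\bigr)$. Setting $M_t:=\int_0^t\phi^{\textrm{tr}}_s d\widehat{W}_s$, by translation invariance this equals $M_t+\rho_t(X-M_T)$. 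By Proposition~\ref{pro:RhoasBSDESolution} the process $U_t:=\rho_t(X-M_T)$ is the $Y$-component of the classical BSDE with terminal $X-M_T$ and generator $h_t(v^{\textrm{tr}}A_t^{-1}v)^{1/2}$; writing out $d(U_t+M_t)$ and reparametrizing the martingale integrand as $\widetilde Z=V+\phi$ produces exactly the BSDE \eqref{eq:BSDEforYphi}, so uniqueness gives the claim.

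Next I would identify the optimum via the comparison theorem. The generators of \eqref{eq:BSDEPiuApplicationfinal} and \eqref{eq:BSDEforYphi} share the same $Z$-dependence through the non-tradeable projection, so the comparison inequality $Y_t\le Y^\phi_t$ reduces to the pointwise claim that, for each $(t,\omega,Z_t)$,
\begin{equation*}
\essinf_{\phi\in\Phi}\Bigl(-\xi_t^{\textrm{tr}}\phi_t+h_t\sqrt{(Z_t-\phi_t)^{\textrm{tr}}A_t^{-1}(Z_t-\phi_t)}\Bigr)
=-\xi_t^{\textrm{tr}}\Pi_t(Z_t)+\sqrt{h_t^2-\xi_t^{\textrm{tr}}A_t\xi_t}\sqrt{\Pi_t^\bot(Z_t)^{\textrm{tr}}A_t^{-1}\Pi_t^\bot(Z_t)},
\end{equation*}
with the infimum attained at $\bar\phi$ given by \eqref{eq:PhibarExpression}. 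This is precisely the content of Lemma~\ref{lem:OptimizationWithPhiSolved} in the appendix; as in \cite[Thm.~5.4]{Becherer}, condition \eqref{eq:AssumptionOnAlphaPrime} combined with Lemma~\ref{lem:qPrimePositiveDefinitePtwise} verifies the Kuhn--Tucker hypotheses and yields an interior minimizer. Separability \eqref{eq:AssumptionOptimizationSimplify} ensures $A_t\xi_t\in\G_t$ and hence $\bar\phi_t\in\G_t$; predictability of $\bar\phi$ is inherited from that of $Z,\xi,h,A$, while boundedness of $h,\xi$, uniform ellipticity and $Z\in\h^2$ give $\bar\phi\in\h^2$, so $\bar\phi\in\Phi$. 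Combining with Step~(i) gives $Y^{\bar\phi}_t=\essinf_{\phi\in\Phi}Y^\phi_t=Y_t=\pi^u_t(X)$.

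For the path properties of $R^{\bar\phi}(X)$, the $Q$-supermartingale property for every $Q\in\mathcal{P}^{\textrm{ngd}}$ is immediate from Proposition~\ref{pro:RobustTrackingError}. For the martingale property under $Q^\lambda$, I would compute
\begin{equation*}
dR^{\bar\phi}_t(X)=\bigl(\xi_t^{\textrm{tr}}\Pi_t(Z_t)-a_t-\bar\phi_t^{\textrm{tr}}\xi_t+(Z_t-\bar\phi_t)^{\textrm{tr}}\lambda_t\bigr)\,dt+(Z_t-\bar\phi_t)^{\textrm{tr}}dW^{Q^\lambda}_t,
\end{equation*}
with $a_t=\sqrt{h_t^2-\xi_t^{\textrm{tr}}A_t\xi_t}\sqrt{\Pi_t^\bot(Z_t)^{\textrm{tr}}A_t^{-1}\Pi_t^\bot(Z_t)}$. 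Using $\xi_t\in\G_t\perp\Pi_t^\bot(Z_t)$ together with \eqref{eq:AssumptionOptimizationSimplify} to exploit $A$-orthogonality, one gets $(Z_t-\bar\phi_t)^{\textrm{tr}}A_t^{-1}(Z_t-\bar\phi_t)=h_t^2\Pi_t^\bot(Z_t)^{\textrm{tr}}A_t^{-1}\Pi_t^\bot(Z_t)/(h_t^2-\xi_t^{\textrm{tr}}A_t\xi_t)$, whence $(Z_t-\bar\phi_t)^{\textrm{tr}}\lambda_t-a_t$ and $-\bar\phi_t^{\textrm{tr}}\xi_t+\xi_t^{\textrm{tr}}\Pi_t(Z_t)$ cancel, so the drift vanishes. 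Finally $\lambda^{\textrm{tr}}A\lambda=h^2$ gives $|\lambda|\le h/\sqrt{c}$ by uniform ellipticity (Lemma~\ref{lem:qPrimePositiveDefinitePtwise}), so $\lambda\in C$ is bounded and $Q^\lambda\in\mathcal{P}^{\textrm{ngd}}$.

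The main obstacle is the constrained pointwise optimization in Step~(ii), since the minimizer must lie in the fibre $\G_t=\mathrm{Im}\,\sigma_t^{\textrm{tr}}$ rather than all of $\R^n$: it is here that the separability assumption \eqref{eq:AssumptionOptimizationSimplify} (to keep $A_t\G_t=\G_t$) and the strict feasibility condition \eqref{eq:AssumptionOnAlphaPrime} are needed to apply Kuhn--Tucker as in Lemma~\ref{lem:OptimizationWithPhiSolved}. Once this is in hand, everything else is routine BSDE comparison and an elementary drift calculation.
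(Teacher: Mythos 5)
Your proposal is correct and follows essentially the route the paper intends: it reduces the minimization over $\phi$ to the pointwise Kuhn--Tucker problem of Lemma~\ref{lem:OptimizationWithPhiSolved} (whose minimal value matches the generator of \eqref{eq:BSDEPiuApplicationfinal}), invokes the comparison theorem for the family of classical BSDEs \eqref{eq:BSDEforYphi}, and settles the tracking-error claims by Proposition~\ref{pro:RobustTrackingError} together with a direct drift cancellation under $Q^\lambda$ --- exactly the argument the paper outsources to \citet[Thm.5.4]{Becherer} with Lemma~\ref{lem:OptimizationWithPhiSolved} in place of Lem.6.1 there. The identification $Y^\phi_t=\rho_t(X-\int_t^T\phi^{\textrm{tr}}_s d\widehat{W}_s)$ via translation invariance and the reparametrization $\widetilde Z=V+\phi$, and the verification that $A_t\xi_t\in\G_t$ via \eqref{eq:AssumptionOptimizationSimplify} so that $\bar\phi\in\Phi$, are the right supporting steps.
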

The summands in the expression (\ref{eq:PhibarExpression}) for the strategy $\bar{\phi}$  play different roles from the perspective of hedging. 
The second summand
is a {\em non-speculative} component that hedges locally tradeable risk by replication, while the first is a {\em speculative} component that  compensates (``hedges'') for unspanned  
non-tradeable risk by taking favorable bets on the market price of risk. Clearly, good deal bounds fit into  the rich theory
of $g$-expectations and market-consistent risk measures \citep[cf.][and more references therein]{BarrieuElKaroui}. See  \citet{Leitner} for closely related ideas about  instantaneous  measurement of risk.  

\subsection{Examples for good-deal valuation and hedging with closed-form solutions}\label{subsec:ExampleNGDValuations}
Explicit formulas, if available, facilitate intuition and enable fast computation of valuations, hedges and comparative statics. To this end, several concrete case studies are provided, starting with European options 
with monotone payoff profiles (e.g.\  call options)  on non-traded assets in a multidimensional model of Black-Scholes type, in which tradeable assets only permit for partial hedging.
In parallel to \citep[][Prop.3, Sect.5.3]{CarassusTemam} and \citet{BayraktarYoung}, who employ SDE respectively
PDE methods, this demonstrates how 
previous BSDE analysis can be applied in concrete case studies and we contribute some slight generalizations as well (e.g.\ higher dimensions, ellipsoidal 
constraints). As a further example,  we  contribute new explicit 
formulas for an option to exchange (geometric averages of)  non-traded assets into traded assets. As before, the no-good-deal approach here gives rise to a familiar option pricing 
formula (by Margrabe) but suitable adjustments of parameter inputs are required, showing the difference to a simple no-arbitrage valuation approach that uses only one (given) single risk neutral measure.
A further  example derives semi-explicit good-deal solutions for the stochastic volatility model by Heston, for no-good-deal constraints on market prices of (unspanned) stochastic volatility 
risk which impose an interval range on the  mean reversion level of the stochastic variance process under any valuation measure $Q \in Q^{{\rm ngd}}$. Technically, this corresponds to 
imposing bounds on the instantaneous Sharpe ratio which are inversely proportional to the stochastic volatility.
This is different to  a related result by \citet{BondarenkoLongarela},
in that their example imposes no good deal  constraints in terms of  bounds on simultaneous changes in the level of mean-reversion combined with opposite changes in reversion speed. We emphasize that, in addition to valuation 
formulas, all our examples provide explicit results for good-deal  hedging strategies as well.
Detailed derivations of the formulas in Sections \ref{subsec:OptionOnNonTradedAsset}-\ref{subsec:GDVinStochVolModel} are given in Appendix~\ref{app:AppendixA}

\subsubsection{Closed-form formulas for options in a generalized Black-Scholes model}\label{subsec:OptionOnNonTradedAsset}
\label{sec:CFformulas}
The market information $\filt=\filtration$ is generated by an $n$-dimensional $P$-Brownian motion $W:=(W^1,\ldots,W^n)^{\text{tr}}$ with $W^S=(W^1,\ldots,W^d)^{\text{tr}}$, $d< n$ for $n,d\in\N$, 
and is augmented by null-sets. The  financial market consists of $d\le n$ (incomplete if $d<n$) stocks with (discounted) prices $S=(S^k)_{k=1}^d$ and further $n-d$ non-traded assets with 
values $H=(H^l)_{l=1}^{n-d}$. We consider a risk neutral model ($P=\widehat{Q}\in\mathcal{M}^e,\ \xi=0$) 
where the processes $S$ and $H$ evolve as
\begin{equation*}
	dS_t=\textrm{diag}(S_t)\sigma^S dW^S_t\quad \text{and}\quad dH_t=\textrm{diag}(H_t)\big(\gamma dt + \beta dW_t\big), \quad  t\in [0,T],
\end{equation*}
with $S_0\in(0,\infty)^d,\ H_0\in (0,\infty)^{n-d}$, constant coefficients $\sigma^S=(\sigma^S_{ki})_{k,i}\in \R^{d\times d}$ invertible, $\beta=(\beta_{li})_{l,i}\in\R^{{(n-d)}\times n}$ and $\gamma\in \R^{n-d}$.
The volatility matrix of $S$ is $\sigma := (\sigma^S,0)\in\R^{d\times n}$ and  is clearly of maximal rank $d\le n$. For $z\in\R^n$, we have ${\Pi(z)=(z^1,\ldots,z^d,0,\ldots,0)^{\textrm{tr}}\in\R^n}$ 
and $\Pi^{\bot}(z)=\left(0,\ldots,0,z^{d+1},\ldots,z^{n}\right)^{\textrm{tr}}\in\R^n$. We assume the ellipsoidal framework of Section \ref{subsec:Example1withEllipsoids}, with $h\equiv\textrm{const}>0$ 
and $A \equiv \mathrm{diag}(a)$, with $a\in(0,\infty)^n$. Clearly $A$  satisfies the 
assumption~(\ref{eq:AssumptionOptimizationSimplify}).
From Theorem \ref{thm:NGDboundasBSDEsol} we know that $\pi^u_t(X)=Y_t,\ t\in[0,T]$, 
for $(Y,Z)$ being the unique standard solution to the BSDE 
\begin{equation}\label{eq:BSDEvalExCloseFormSol}
 -dY_t= h\big(\sum_{i=d+1}^n(Z^i_t)^2/a_{i}\big)^{1/2}dt - Z^{\textrm{tr}}_t dW_t,\ t\le T,\quad\text{and}\quad Y_T=X.
\end{equation}
By Theorem \ref{thm:FullHedgingProblemSolved} the good-deal hedging strategy is $\bar{\phi}_t=\Pi(Z_t)\ P\otimes dt$-a.e.. Define  the geometric averages $\tilde{S}_t=\big(\prod_{k=1}^dS^k_t\big)^{1/d}$ and 
$\tilde{H}_t=\big(\prod_{l=1}^{n-d}H^l_t\big)^{1/(n-d)}$. One has $\tilde{S}_t=\tilde{S}_0\exp\Big(\tilde{\sigma}^{\textrm{tr}} W^S_t+\big(\tilde{\mu}-\frac{1}{2}\vert\tilde{\sigma}\vert^2\big)t\Big)$ 
and $\tilde{H}_t = \tilde{H}_0\exp\Big(\tilde{\beta}^{\textrm{tr}} W_t+\big(\tilde{\gamma}-\frac{1}{2}\vert\tilde{\beta}\vert^2\big)t\Big)$, where 
$\tilde{\sigma}:=\frac{1}{d}(\sigma^S)^{\textrm{tr}}\mathds{1},\quad \tilde{\mu}:=\frac{1}{2}\vert\tilde{\sigma}\vert^2-\frac{1}{2d}\vert\sigma^S\vert^2,\quad\tilde{\beta} = \frac{1}{n-d}\beta^{\textrm{tr}}\mathds{1}$
and $\tilde{\gamma}:=\frac{1}{n-d}\gamma^{\textrm{tr}}\mathds{1}+\frac{1}{2}\vert\tilde{\beta}\vert^2-\frac{1}{2(n-d)}\vert\beta\vert^2$, with $\mathds{1}=(1,\ldots,1)^\textrm{tr}$.
We treat the following two examples. 
\paragraph{European option on non-traded assets:} Consider a European option $X=G(\tilde{H}_T)\in L^2$ on the geometric average $\tilde{H}$, where the payoff $ x\mapsto G(x)$ is non-decreasing, 
measurable and of polynomial growth in $x^{\pm1}$, i.e.\  $\vert G(x)\vert\le k(1+x^n+x^{-n})$ for all $x$ in $(0,\infty)$, for some $k>0$ and $n\in\N$.  
Let us consider the constant process $\lambda$ with value $h\big(\sum_{i=d+1}^n\tilde{\beta}_i^2/a_{i}\big)^{-1/2}(0,\ldots,0,\tilde{\beta}_{d+1}/a_{d+1},\ldots,\tilde{\beta}_n/a_n)^{\textrm{tr}}$ 
to identify the solution 
$(Y,Z)$ of the BSDE (\ref{eq:BSDEvalExCloseFormSol}) as follows:
We show that the standard solution $(Y^\lambda,Z^\lambda)$ to the linear BSDE (\ref{eq:BSDEYlambda}) corresponding to $\lambda$ and with terminal condition $G(\tilde{H}_T)$ coincides with 
$(Y,Z)$. By Lemma \ref{lem:ClosedMartandBSDEs} we know that $Y^\lambda_t=E^{Q^\lambda}_t[G(\tilde{H}_T)],\ t\in[0,T]$, with $dQ^\lambda/dP = \mathcal{E}(\lambda^\text{tr}W)$.
Moreover the Feynman-Kac formula yields $Y^\lambda_t=u^\lambda(t,\tilde{H}_t)$ and $Z^\lambda_t= \tilde{H}_t\partial_xu^\lambda(t,\tilde{H}_t)\tilde{\beta}$ for a function $u^\lambda\in\mathcal{C}^{1,2}\big([0,T)\times(0,\infty)\big)$ 
solution to a Black-Scholes type PDE (after coordinate transformations that reduce the PDE into the heat equation \citep[using][Sect.4.3]{KaratzasShreve}). Since $G$ is 
non-decreasing, then $\partial_xu^\lambda\ge0$. Hence one has $P\otimes dt$-a.e. (omitting the argument $(t,\tilde{H}_t)$ of $\partial_xu^\lambda$ for simplicity)
\begin{equation*}
\lambda^{\text{tr}} Z^\lambda_t = h\,\tilde{H}_t\,\partial_xu^\lambda\,\sqrt{\sum_{i=d+1}^n\frac{\tilde{\beta}_i^2}{a_{i}}}
				=h\,\sqrt{\sum_{i=d+1}^n\frac{\big(\tilde{H}_t\,\tilde{\beta}_i\,\partial_xu^\lambda\big)^2}{a_{i}}}= h\,\sqrt{\sum_{i=d+1}^n\frac{(Z^{\lambda,i}_t)^2}{a_{i}}},
\end{equation*}
where the second equality uses $\partial_xu^\lambda\ge0$. This implies that $(Y^\lambda,Z^\lambda)$ solves the BSDE (\ref{eq:BSDEvalExCloseFormSol}), and therefore coincides with the latter's unique 
standard solution $(Y,Z)$.  In particular we obtain $\pi^u_t(X) = Y^\lambda_t=E^{Q^\lambda}_t[G(\tilde{H}_T)],\ t\in[0,T]$. Moreover $\tilde{H}_t = \tilde{H}_0\ e^{\alpha_+ t} \exp\big(\tilde{\beta}^{\textrm{tr}} W^\lambda_t -\frac{1}{2}\vert \tilde{\beta}\vert^2t\big)$, $t\in [0,T]$, where 
$\alpha_\pm :=\tilde{\gamma} \pm h\big(\sum_{i=d+1}^n\tilde{\beta}_i^2/a_{i}\big)^{1/2}$ and $W^\lambda$ is an $n$-dimensional $Q^\lambda$-Brownian motion. Specifically for 
$G(x):= (x-K)^+$, $X$ is a call option on $\tilde{H}$ with strike $K$ and maturity $T$. Its upper good-deal bound at time $t\in [0,T]$ is therefore given by a Black-Scholes type formula (with ``vol'' abbreviating volatility)
\begin{align}
	\pi^u_t(X) &=N(d_{+})\tilde{H}_te^{\alpha_+ (T-t)}-KN(d_{-})\notag\\
		    =&\text{B/S-call-price}\big(\text{time: } t,\ \text{spot: }\tilde{H}_te^{\alpha_+ (T-t)},\ \text{strike: } K, \text{vol: } \vert\tilde{\beta}\vert\big),\label{eq:CloFormUpperGDBCall}
\end{align}
where $d_{\pm}:=\big(\ln \big(\tilde{H}_t/K\big)+\big(\alpha_+\pm \frac{1}{2}\vert\tilde{\beta}\vert ^2\big)(T-t)\big)\big(\vert\tilde{\beta}\vert\sqrt{T-t}\big)^{-1}$ and $N$ is the cdf of the standard normal law. Analogously, the lower good-deal bound turns out as 
\begin{align*}
	\pi^l_t(X) &=\text{B/S-call-price}\big(\text{time: } t,\ \text{spot: }\tilde{H}_te^{\alpha_- (T-t)},\ \text{strike: } K, \text{vol: } \vert\tilde{\beta}\vert\big).
\end{align*}
The difference between the good-deal valuation formulas above and the standard Black-Scholes formula for risk-neutral valuation under measure $P=\widehat{Q}$ for a call option 
$(\tilde{H}_T-K)^+$ shows by the additional factors $e^{\alpha_{\pm} (T-t)}$ multiplying the spot price. This corresponds to an additional premium an option trader (selling at $\pi^u_\cdot(X)$ or buying 
at $\pi^l_\cdot(X)$) would require, if using the no-good-deal approach instead of  simple arbitrage free valuation under a given risk neutral measure   $P=\widehat{Q}$ (being an element of 
$ \mathcal{Q}^{\text{ngd}}$). The hedging strategy for the seller of $X$ is 
\begin{equation}\label{eq:CloFormHedgStraUpperGDBCall}
\bar{\phi}_t = e^{\alpha_+ (T-t)}N(d_+)\tilde{H}_t\big(\tilde{\beta}_1,\ldots,\tilde{\beta}_d,0,\ldots ,0\big)^{\textrm{tr}}\quad P\otimes dt\text{a.e.},
\end{equation}
which coincides with the delta (as computed under $P=\widehat{Q}$) for the call option only if $\alpha_+=0$. The hedging strategy of the buyer is derived analogously .

\paragraph{Exchange option of traded and non-traded assets:} Consider an European option to exchange the traded asset $\tilde{S}$ for the non-traded asset $\tilde{H}$ at maturity $T$ with payoff $X=(\tilde{H}_T-\tilde{S}_T)^+\in L^2$.
The upper bound $\pi^u_t(X)=E^{\bar{Q}}_t[X]$ can be explicitly derived using arguments from the previous example in combination with a change of num\'eraire \cite[see Derivation of 3.33 in Section 3.4 of][]{KentiaPhD}.  
We thereby obtain a Margrabe type formula 
\begin{align}
	\pi^u_t(X) &=N(d_{+})\tilde{H}_te^{\alpha_+ (T-t)}-N(d_{-})\tilde{S}_te^{\tilde{\mu}(T-t)}\label{eq:MargrabeFormNGD}\\
	    &=\text{B/S-call-price}\big(\text{time: } t,\ \text{spot: }\tilde{H}_te^{\alpha_+ (T-t)},\ \text{strike: }\tilde{S}_te^{\tilde{\mu}(T-t)} ,\ \text{vol: }\delta\big),\notag
\end{align}
where $d_{\pm}:=\big(\ln \big(\tilde{H}_t/\tilde{S}_t\big)+(\alpha_++\tilde{\mu}\pm\frac{\delta^2}{2})(T-t)\big)\big(\delta\sqrt{T-t}\big)^{-1}$. Analogously, the corresponding lower good-deal bound is 
\begin{align*}
	\pi^l_t(X) &=\text{B/S-call-price}\big(\text{time: } t,\ \text{spot: }\tilde{H}_te^{\alpha_- (T-t)},\ \text{strike: }\tilde{S}_te^{\tilde{\mu}(T-t)} ,\ \text{vol: }\delta\big).
\end{align*}
The good-deal hedging strategy $  \bar{\phi}_t$ for the seller of the exchange option equals
\begin{equation*}
 N(d_+)\tilde{H}_te^{\alpha_+ (T-t)}\big(\tilde{\beta}_1,\ldots,\tilde{\beta}_d,0,\ldots ,0\big)^{\textrm{tr}}-N(d_-)\tilde{S}_te^{\tilde{\mu} (T-t)}\big(\tilde{\sigma}_1,\ldots,\tilde{\sigma}_d,0,\ldots ,0\big)^{\textrm{tr}}
\end{equation*}
$P\otimes dt$-a.e.. Again, the difference between the good-deal valuation formula and the classical Margrabe formula, as computed by standard no-arbitrage valuation under risk neutral measure $P=\widehat{Q}$, for the 
exchange option $(\tilde{H}_T-\tilde{S}_T)^+$ shows by the presence of the factors involving the term $\alpha_\pm$, which depends only on the parameters $A$ and $h$ for no-good-deal restrictions. 

\subsubsection{Computational results by Monte Carlo} \label{sec:MC}
To demonstrate that good-deal bounds and hedging strategies can be computed numerically in moderately high dimensions by generic simulation methods available for classical BSDE, we apply a (generic) 
multilevel Monte Carlo algorithm from \citet{BechererTurkedjiev} \citep[that builds on][{reducing variances i.p.\ for $Z$}]{GobetTu} to approximate the solution $(Y,Z)$ of the BSDE (\ref{eq:BSDEvalExCloseFormSol}) 
in dimension $n=4$, and compare with the known analytical solution for the  exchange option $X:=(\tilde{H}_T-\tilde{S}_T)^+$. Using parameters $d=2$, $T=1$ and
\begin{align*}
&\quad H_0=\begin{pmatrix}1\\1 \end{pmatrix},\quad S_0=\begin{pmatrix}1\\1 \end{pmatrix},\quad 
\sigma^S=\begin{pmatrix}0.5\ &0.2\\0\ &0.4 \end{pmatrix},\quad  
\gamma = (0.1,\ 0.3)^{\text{tr}},\\&\beta=\begin{pmatrix}0.3\ &0.4\ &0.2\ &0.5\\ 0.5\ &0.7\ &0.3\ &0.4\end{pmatrix},\quad
h=0.3,\quad \text{and}\quad  A=\text{diag}(0.5,\ 0.65,\ 0.8,\ 0.95),
\end{align*}
 we compare the approximate values at time $t=0$ to the known theoretical values obtained from Section \ref{subsec:OptionOnNonTradedAsset}. The exact value of the good-deal bound at 
 time $t=0$ according to the formula (\ref{eq:MargrabeFormNGD}) is then $\pi^u_0(X)={0.5494}$, up to four digits, while for the hedging strategy it is $\bar{\phi}_0={(0.3049,0.4440,0,0)}$,  
 the exact value of $Z_0$ being $(0.3049,0.4440,0.2792,0.5025)$. We use a 4-level algorithm on an equidistant time grid with $N=2^4$  steps, a number of sample paths $M=3\times 10^{6}$ and with $K=50^4$ 
 regression functions, being indicator functions on a hypercube partition of $\R^4$, the state space of the forward process $(S,H)$. Table \ref{tab:GDBapprox-and-Phi-Approx} provides the 
numerical simulation results, summarized by the approximation means for the good-deal bound and  the hedging strategy at time $0$, the empirical root-mean-square errors (RMSE) 
computed coordinate-wise and the corresponding relative values (Rel.RMSE), based on $80$ independent simulation runs. 
 Simulation in Matlab for one run took 153sec on a core-i7 cpu laptop, showing relative errors (in terms of maximal coordinates in Rel.RMSE) of about $0.07\%$ for 
 valuation and $0.34\%$ for hedging.
\begin{table}[h]
\centering
{\renewcommand{\arraystretch}{1.5}
\begin{tabular}{l|c|c|c|}
\cline{2-4}
&$Y_0$ approx& $Z_0$ approx&$\bar{\phi}_0$ approx\\\hline
\multicolumn{1}{|l|}{Mean}
 &$0.5499$&$(0.3052,0.4462,0.2852,0.5137)$&$(0.3052,0.4462,0,0)$\\\hline
 \multicolumn{1}{|l|}{RMSE}
 &$10^{-4}\times 4$&$10^{-4}\times(10,13,12,13)$&$10^{-4}\times(10,13,0,0)$\\\hline
 \multicolumn{1}{|l|}{Rel.RMSE}
 &$10^{-4}\times 7$&$10^{-4}\times(34,29,41,27)$&$10^{-4}\times(34,29,0,0)$\\\hline
\end{tabular}}
\caption{Mean  and (relative) root-mean-square errors of approximations}
\label{tab:GDBapprox-and-Phi-Approx}
\end{table}

\subsubsection{Semi-explicit formulas in the Heston stochastic volatility model}\label{subsec:GDVinStochVolModel}
The market information is generated by a two-dimensional $P$-Brownian motion $W=(W^S,W^\nu)$, and is augmented by null-sets. We are going to consider a European put option  $X=(K-S_T)^+$ on $S$ with strike $K$ in the Heston model    
\begin{equation*}
 dS_t = S_t \sqrt{\nu_t}\ dW^S_t\quad\text{and}\quad d\nu_t = b(\frac{a}{b}-\nu_t)dt +\beta\sqrt{\nu_t}\big(\rho dW^S_t + \sqrt{1-\rho^2}dW^\nu_t\big),\ t\le T,
\end{equation*}
that is specified directly under a risk neutral measure $P=\widehat{Q}$,
with $S_0,\nu_0>0$, $a,b,\beta>0$ and $\rho\in(-1,1)$. Here $b$ represents the mean-reversion speed, $a/b$ the mean-reversion level and $\beta/2$ the volatility of the variance process 
$\nu$. Assume that the Feller condition $\beta^2\le2a$ is satisfied, such that $\nu>0$. The equivalent local martingale measures $Q\in\mathcal{M}^e$ in this model are specified by Girsanov kernels $\lambda$ 
such that $dQ/dP=\mathcal{E}(\lambda\cdot W^\nu)$ is a uniformly integrable martingale. Indeed, we parametrize the pricing measures only by the second component of their Girsanov kernels 
(i.e.\  w.r.t.\  $W^\nu$) since the first component is always zero. We consider the no-good-deal constraint correspondence
\begin{equation}\label{eq:StochVolNGDRestriction}
 C_t(\omega) = \big\lbrace x\in\R^2: \lvert x\rvert\le \varepsilon/\sqrt{\nu_t(\omega)}\big\rbrace \quad (t,\omega)\in[0,T]\times \Omega, 
\end{equation}
for a constant $\varepsilon>0$. One observes that $C$ is standard with $0\in C$, non-uniformly bounded 
and satisfies (\ref{eq:2IntCorr}) for 
$R=\varepsilon/\sqrt{\nu}$ (since $\nu>0$ is  continuous). Hence good-deal valuation results for uniformly bounded correspondences may not apply. We obtain a  Heston-type formula (semi-explicit, computation requiring only 
1-dim.\ integration) for the good-deal 
bound of the put option $X=(K-S_T)^+$,
\begin{equation}\label{eq:GoodDealUpperForPut}
	\pi^u_t\left(X\right) =\textrm{ Heston-put-price}(\text{time: }t, \bar{a}:=a+\beta\varepsilon\sqrt{1-\rho^2},b,\beta),
\end{equation}
just like the ordinary Heston put price, associated to parameters $(t,a,b,\beta)$, but where the parameter $a$ has to be adjusted to $\bar{a}:=a+\beta\varepsilon\sqrt{1-\rho^2}$. 
The formula for the lower bound $\pi^l_t(X)$ is similar, but here $\bar{a}$ is replaced by $\underline{a}:=a-\beta\varepsilon\sqrt{1-\rho^2}$, for which the Feller condition still holds if $\varepsilon\le \frac{1}{2}\beta^{-1}(2a-\beta^2)(1-\rho^2)^{-1/2}$.
In particular, $\pi^u_t(X) = E^{\bar{Q}}_t[X]$ holds with $d\bar{Q}/dP = \mathcal{E}\big((\varepsilon/\sqrt{\nu})\cdot W^\nu\big)$. By Corollary~\ref{cor:GDBsolvesBSDEunderOptimalMeas} 
this yields $\bar{Y}=\pi^u_\cdot\left(X\right)$ for the minimal solution $(\bar{Y},\bar{Z})\in\mathcal{S}^\infty\times\h^2$ of the BSDE
\begin{equation}\label{eq:BSDEmINsOL}
	-dY_t =\frac{\varepsilon}{\sqrt{\nu_t}}\big\lvert Z^2_t\big\rvert- Z_t^{\text{tr}}dW_t,\ t\in[0,T],\quad Y_T= (K-S_T)^+.
\end{equation}
The (seller's) good-deal hedging strategy $\bar{\phi}$ is given by the semi-explicit formula 
\begin{equation}\label{eq:GoodDealHedgingForPut}
 \bar{\phi}_t=S_t\sqrt{\nu_t}\ \Delta_t+\frac{\beta\rho}{2}\ \mathcal{V}_t\quad P\otimes dt\text{-a.e.},
\end{equation}
where $\Delta_t$ and $\mathcal{V}_t$ denote the delta and the vega of the put option at time $t$ in the Heston model with parameters $(\bar{a},b,\beta)$. Derivations are provided in Appendix~\ref{app:AppendixA}.
We note that (\ref{eq:GoodDealHedgingForPut}) coincides \citep[cf.][]{PoulsenetAl2009} with the risk-minimizing strategy  \citep[in the sense of][]{Schweizer} for the put in a Heston model, not with respect 
to the probability $P$ but with respect to the measure $\bar Q$ (derived just before) under which also Heston dynamics but with modified parameters prevail. This shows, how the strategy (\ref{eq:GoodDealHedgingForPut}) 
differs from the standard risk minimizing strategy under $P$ \citep[as in][]{PoulsenetAl2009,HeathPS01}.
Good-deal valuation bounds for a put option in the Heston model are thus given by a Heston type formula but for a mean-reversion level increased by 
$\beta\varepsilon\sqrt{1-\rho^2}/b>0$. Similar to earlier examples, this difference constitutes an increase in the premium that an issuer selling at $\pi^u_\cdot(X)$ would require 
according to good-deal valuation, in comparison to a standard arbitrage free valuation under one given risk neutral measure $P=\widehat{Q}$, when $S$ is the only risky asset 
available for hedging and  stochastic volatility risk is otherwise taken to be unspanned. 
Figure~\ref{fig:plotHestonGBDvsS0} illustrates this, showing good deal valuations $\pi^u_0(X)$, $\pi^l_0(X)$ (at $t=0$) for a long-dated put option with maturity $T=10$ in relation to the underlying 
$S_0$, for different no-good-deal constraint parameters $\varepsilon$. Other parameters are $K=100$, $a=0.12,\ b=3,\ \beta=0.3,\ \rho=-0.7,\ \nu_0=0.04$.
The standard Heston price computed directly under a given risk neutral (minimal martingale) measure $P=\widehat{Q}$ (i.e.\ for $\varepsilon=0$) lies between the upper and lower good-deal bounds, whose 
spread  increases with $\varepsilon>0$. 
\begin{figure}[h!]
        \centering
               \includegraphics[scale = 0.33]{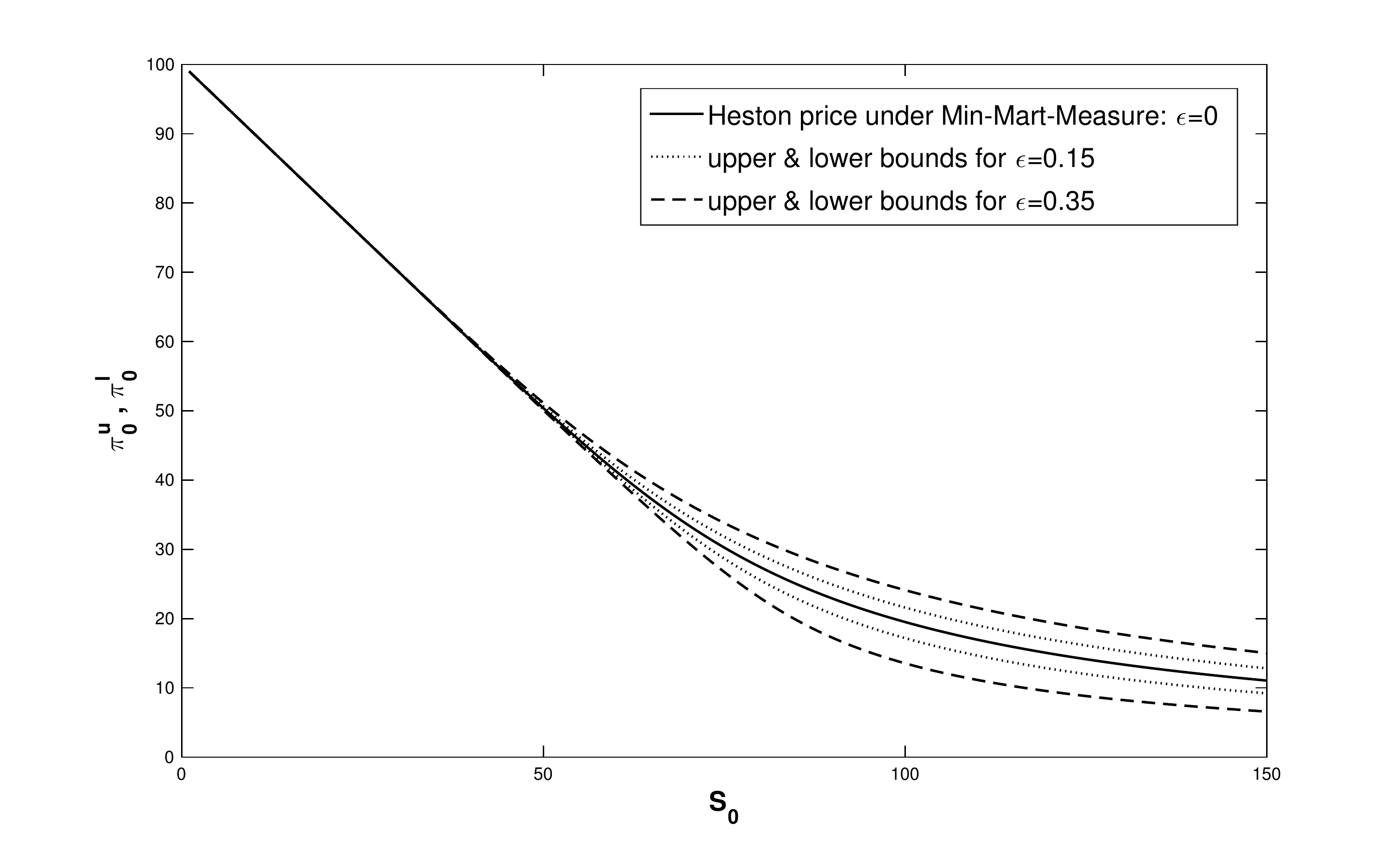} 
        \caption{Dependence of $\pi^u_0(X), \pi^l_0(X)$ on $S_0$ and $\varepsilon$ in the Heston model for $K=100$, $a=0.12,\ b=3,\ \beta=0.3,\ \rho=-0.7,\ \nu_0=0.04,\ T=10$.}       
        \label{fig:plotHestonGBDvsS0}
\end{figure}
\section{Good-deal valuation and hedging under model uncertainty}\label{sec:GDRestrictionUnderUncertainty}
In preceding sections, good-deal bounds and hedging strategies have been described by classical BSDEs under the probability measure $P$, expressing the objects of 
interest in terms of the market price of risk $\xi$ with respect to $P$. In reality, the objective real world probability measure is not precisely known, hence there is ambiguity 
about the market price of risk. To include model uncertainty (ambiguity)  into the analysis,  we follow  a multiple priors
approach in spirit of \cite{GilboaS89,ChenEpstein,EpsteinS03}, 
by specifying a confidence region of reference probability measures  $\{P^{\theta}:\theta \in \Theta\}$ (multiple priors,
interpreted as potential real world probabilities of equal right), centered around some measure $P_0$. In practice, an investor facing model uncertainty may 
first extract an estimate $P_0$ for the true but uncertain $P$ from data, but then consider a
class $\mathcal{R}$ of potential reference measures in some confidence region around $P_0$  to acknowledge the statistical uncertainty of estimation.

The starting point for the good-deal valuation approach under uncertainty is then to associate to each 
model $P^\theta$ its own family of (a-priori) no-good-deal measures  $\mathcal{Q}^{\mathrm{ngd}}(P^\theta)$ (resp.\  $\mathcal{P}^{\mathrm{ngd}}(P^\theta)$). 
A robust worst-case approach  requires the seller of 
a derivative to consider the (worst-case) model $P^{\bar{\theta}}$  that provides the largest upper good-deal valuation {bound},  to be conservative against model misspecification (see~(\ref{eq:InfSupNGDBoundRobust})).
Such leads to wider good-deal  bounds, corresponding to a larger overall set of no-good-deal measures under uncertainty. Notably,
it will simultaneously also give rise to a suitable robust notion of good-deal hedging, which is uniform with respect to all $P^{\theta}$, 
by means of a saddle point result that ensures a minmax identity (see Theorem \ref{thm:CharactOfSaddlePt}). 
We 
associate to each model $P^\theta$ a correspondence $C^\theta$ that defines the set of no-good-deal measures in this model. 
The aggregate set of no-good-deal measures will be described then by single correspondence $\widetilde{C}$, which incorporates also the uncertainty. Technically, this makes is 
possible to apply analysis obtained in the framework of  previous  sections, with $P_0$ taking the role of $P$.

Let us note that the worst-case and the multiple prior approach to uncertainty are classical in the theory, and so is the setup for our problem. Yet,  they may be perceived as overly  conservative, as results could be much influenced by just few extreme priors, what could appear even less desirable if one would not be ready to consider all priors to be of equal right. 
This may suggest directions for further research, e.g.\ to study some weighting or penalization of priors. It may also be interesting to study non-dominated priors (as e.g.\ in articles on superreplication under volatility uncertainty by 2nd-order BSDE) which do not fit within Section~\ref{sec:GDRestrictionUnderUncertainty}.

\subsection{Model uncertainty framework}
Let $(\Omega,\F,P_0,\mathbb{F})$ be a probability space with a usual filtration $\mathbb{F}=(\F_t)_{t\le T}$  generated by an $n$-dimensional $P_0$-Brownian motion $W^0$. 
We assume that all reference measures $P^\theta$ are equivalent to $P_0$ with corresponding Girsanov kernels $\theta$ evolving in some given confidence region $\Theta$.  More precisely, 
we define 
\begin{equation*}
	\mathcal{R}:=\Big\lbrace P^\theta\sim P_0\ \Big\lvert\ dP^{\theta}/dP_0\Big.=\ \mathcal{E}\left(\theta\cdot W^0\right)
                             ,\ \mathrm{with}\ \theta\ \text{predictable\ and } \theta\in \Theta\Big\rbrace,
\end{equation*}
where $\Theta:[0,T]\times\Omega\rightsquigarrow\R^n$ is a standard correspondence satisfying Assumption~\ref{asp:BddCorr} and  $0\in \Theta$, hence $P_0\in \mathcal{R}\neq \emptyset$. 
A similar framework has been considered for example in \citet{ChenEpstein,Quenez} for solving the robust utility maximization problem under Knightian uncertainty about drift coefficients.  

The financial market consists of $d\le n$ tradeable risky assets whose discounted prices $(S^i)_{i=1}^{d}$ under $P^\theta$ (for $\theta\in \Theta$) evolve as It\^o processes, solving the SDEs
\begin{equation}\label{eq:SDEStockPricesUncertainty}
	dS_t = \textrm{diag}(S_t)\sigma_t(\xi^\theta_tdt+dW^\theta_t)=:\textrm{diag}(S_t)\sigma_td\widehat{W}^\theta_t,\quad t\le T,
\end{equation}
with $S_0\in (0,\infty)^d$, for $\R^n$-valued predictable $\xi^\theta$ and $\R^{d\times n}$-valued predictable volatility $\sigma$ of full rank, and $W^\theta:=W^0-\int_0^\cdot\theta_sds$ a 
$P^\theta$-Brownian motion. Noting that market prices of risk, $\xi^\theta_t$ and $\xi^0_t$, canonically take values in $\mathrm{Im}\ \sigma^{\textrm{tr}}_t$,  we assume that market prices 
of risk $\xi^\theta$ (under  $P^\theta$ for $\theta\in \Theta$) have the form  
\begin{equation}\label{eq:NiceConditions}
 \xi^\theta_t = \xi^0_t+\Pi_t(\theta_t)\in\mathrm{Im}\ \sigma^{\textrm{tr}}_t,\ t\in [0,T],
\end{equation}
and that $\xi^0$ is bounded. By (\ref{eq:NiceConditions}), the solutions of the SDEs (\ref{eq:SDEStockPricesUncertainty}) coincide $P_0\textrm{-a.s.}$ for all $\theta\in \Theta$.
The process $\xi^\theta$ (for $\theta\in \Theta$) is the market price of risk in the model $P^\theta$ and is also bounded (since $\xi^0$ is bounded and $\Theta$ satisfies Assumption \ref{asp:BddCorr}).
Hence, the minimal martingale measure \citep{Schweizer} $\widehat{Q}^\theta$ with respect to $P^\theta$ is $d\widehat{Q}^\theta=\mathcal{E}(-\xi^\theta\cdot W^\theta)dP^\theta$. In addition 
$d\widehat{Q}^\theta=\mathcal{E}\left(\Pi^\bot(\theta)\cdot \widehat{W}^0\right)d\widehat{Q}^0$ and $\widehat{W}^\theta = \widehat{W}^0-\int_0^\cdot\Pi^\bot_t(\theta_t)dt$, for all $\theta\in \Theta.$
We recall from Section~\ref{sec:Parameterization} how dynamic trading strategies are defined and re-parametrized in terms of integrands $(\phi^i)_{i=1}^d$ with respect to $\widehat{W}^0$. 
The set of permitted trading strategies is 
\begin{equation*}
	\Phi := \Big\lbrace \phi\ \Big\lvert\ \phi\ \textrm{is predictable},\ \phi\in \mathrm{Im}\ \sigma^{\textrm{tr}}\ \textrm{and}\  E^{P_0}\Big[\int_0^T\lvert\phi_t\rvert^2dt\Big]<\infty\Big.\Big\rbrace.
\end{equation*}
Since $\phi^\textrm{tr}\Pi^\bot(\theta)=0$ for $\theta\in \Theta$,  the wealth process $V^\phi$ of strategy $\phi\in \Phi$ with initial capital $V_0$ is $V^\phi =V_0 +\phi\cdot \widehat{W}^\theta = V_0 +\phi\cdot \widehat{W}^0$, 
for all $\theta\in \Theta$. Let $\mathcal{M}^e(P^\theta):=\mathcal{M}^e(S,P^\theta)$ denote the set of equivalent local martingale measures for $S$ in the model $P^\theta$.
Noting  $P^{\theta}\sim P^0$ and recalling Proposition \ref{pro:ImAndKerCorrespondencesPredictableCharactOfMe} one easily obtains

\begin{proposition}\label{pro:EMMsetsCoincide}
	 $\mathcal{M}^e(P^\theta)=\mathcal{M}^e(P_0)$ for all  $\theta\in \Theta$. In addition, every $Q\in\mathcal{M}^e(P^\theta)$ satisfies $dQ = \mathcal{E}\left(\lambda^{\theta}\cdot W^\theta\right)dP^\theta$ 
	 and $dQ = \mathcal{E}\left(\lambda^{0}\cdot W^0\right)dP_0$, with $\lambda^{\theta}=-\xi^\theta+\eta^{\theta}$ and $\lambda^{0}=-\xi^0+\eta^{0}$, where $\Pi^{\bot}(\lambda^{\theta})= \eta^{\theta}, \quad \Pi^{\bot}(\lambda^{0})=\eta^{0}$ 
	 and $\eta^{\theta} = \eta^{0}-\Pi^\bot(\theta)$.
\end{proposition}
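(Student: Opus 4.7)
For the equality of sets, since $P^\theta\sim P_0$, equivalence of $Q$ to either measure is the same condition, and the local-martingale property of $S$ under $Q$ is intrinsic to $Q$; hence $\mathcal{M}^e(P^\theta)=\mathcal{M}^e(P_0)$ follows immediately. So the only real work is the structural description of an arbitrary $Q$ in this common set.

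For that, I would apply Proposition~\ref{pro:ImAndKerCorrespondencesPredictableCharactOfMe}(\ref{CharactOfMe}) in two different frameworks: first with reference measure $P_0$ and driving Brownian motion $W^0$, producing $dQ=\mathcal{E}(\lambda^0\cdot W^0)dP_0$ with $\lambda^0=-\xi^0+\eta^0$ and $\eta^0=\Pi^\bot(\lambda^0)\in\mathrm{Ker}\,\sigma$; and then with reference measure $P^\theta$ and driving Brownian motion $W^\theta$ (the dynamics (\ref{eq:SDEStockPricesUncertainty}) have the required It\^o form, and the price-of-risk $\xi^\theta$ is bounded because both $\xi^0$ and $\theta\in\Theta$ are), producing $dQ=\mathcal{E}(\lambda^\theta\cdot W^\theta)dP^\theta$ with the analogous structure $\lambda^\theta=-\xi^\theta+\eta^\theta$.

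The crux is relating these two kernels. From $dQ/dP_0=(dQ/dP^\theta)(dP^\theta/dP_0)=\mathcal{E}(\lambda^\theta\cdot W^\theta)\,\mathcal{E}(\theta\cdot W^0)$, Yor's product formula $\mathcal{E}(X)\mathcal{E}(Y)=\mathcal{E}(X+Y+[X,Y])$, together with $W^\theta=W^0-\int_0^\cdot\theta_s\,ds$ and the covariation $[\lambda^\theta\cdot W^\theta,\theta\cdot W^0]=\int_0^\cdot(\lambda^\theta_s)^{\mathrm{tr}}\theta_s\,ds$, makes the drift corrections cancel and yields $\mathcal{E}(\lambda^\theta\cdot W^\theta)\mathcal{E}(\theta\cdot W^0)=\mathcal{E}((\lambda^\theta+\theta)\cdot W^0)$. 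Uniqueness of the Brownian stochastic logarithm then forces $\lambda^0=\lambda^\theta+\theta$ (up to $P_0\otimes dt$-null sets).

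To finish, I would project this identity orthogonally along $\R^n=\mathrm{Im}\,\sigma^{\mathrm{tr}}\oplus\mathrm{Ker}\,\sigma$: the $\Pi_t$-component simply reproduces the structural assumption (\ref{eq:NiceConditions}) $\xi^\theta=\xi^0+\Pi(\theta)$, while the $\Pi^\bot_t$-component gives $\eta^0=\eta^\theta+\Pi^\bot(\theta)$, i.e.\ $\eta^\theta=\eta^0-\Pi^\bot(\theta)$, which is the desired relation. The only subtlety I foresee is the stochastic-exponential bookkeeping in the Yor step; integrability of the factors is not an obstacle, since $\mathcal{E}(\theta\cdot W^0)$ is a true martingale by boundedness of $\theta$ under Assumption~\ref{asp:BddCorr}, and the other two exponentials are probability densities by hypothesis.
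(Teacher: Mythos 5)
Your proof is correct and follows exactly the route the paper indicates (it states the result as an easy consequence of $P^\theta\sim P_0$ and Proposition~\ref{pro:ImAndKerCorrespondencesPredictableCharactOfMe} applied in both the $P_0$- and $P^\theta$-frameworks, without writing out the details). Your Yor-formula computation $\mathcal{E}(\lambda^\theta\cdot W^\theta)\mathcal{E}(\theta\cdot W^0)=\mathcal{E}((\lambda^\theta+\theta)\cdot W^0)$ and the subsequent projection onto $\mathrm{Im}\,\sigma^{\textrm{tr}}\oplus\mathrm{Ker}\,\sigma$ are precisely the omitted bookkeeping, and they check out.
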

Thus, we simply write $\mathcal{M}^e=\mathcal{M}^e(S)$ for the set of equivalent martingale measures. 
\subsection{No-good-deal restriction and good-deal bounds under uncertainty}\label{sec:GeneralNGDRestriction}
Let $\{C^\theta\ \vert\ \theta\in \Theta\}$ be a family of standard correspondences satisfying 
\begin{equation}\label{eq:xithetainCtheta}
 -\xi^\theta\in C^\theta\quad  \text{for all }\theta\in\Theta.
\end{equation}
In the model $P^\theta$, $\theta\in \Theta$, let the no-good-deal restriction be such that the Girsanov kernels of measures in $\mathcal{M}^e$ are selections of $C^\theta$. The resulting 
set $\mathcal{Q}^{\mathrm{ngd}}(P^\theta)$ of no-good-deal measures is equal to
\begin{equation*}
\Big\lbrace Q\sim P^\theta\ \Big\lvert \Big.\ dQ/dP^\theta=\mathcal{E}\big(\lambda\cdot W^\theta\big),\ \lambda\textrm{ predictable, bounded},  
												    \ \lambda\in (-\xi^\theta+\mathrm{Ker}\ \sigma)\cap C^\theta\Big\rbrace.
\end{equation*}
By (\ref{eq:xithetainCtheta}), then $\widehat{Q}^\theta\in \mathcal{Q}^{\mathrm{ngd}}(P^\theta)\neq \emptyset$ for all $\theta\in\Theta$. By Proposition \ref{pro:EMMsetsCoincide} holds 
\begin{equation}\label{eq:QngdPnuCharact}
                \mathcal{Q}^{\mathrm{ngd}}(P^\theta)=\Big\lbrace Q\sim P_0\Big\lvert \Big.\ dQ/dP_0=\mathcal{E}\big(\lambda\cdot W^0\big),\ 
												    \ \lambda\in-\xi^0+(\widetilde{C}^\theta\cap \mathrm{Ker}\ \sigma)  
\Big\rbrace
\end{equation}
where $\lambda$ is predictable and bounded, and for all $\theta\in \Theta$ the correspondences 
are 
\begin{equation}\label{eq:DefCtildeNu}
 \widetilde{C}^\theta:=C^\theta+\xi^\theta+\Pi^\bot(\theta)=C^\theta+\xi^0+\theta.
\end{equation}
Following a worst-case approach, we take the (robust) upper good-deal valuation $\pi^u_\cdot(\cdot)$ under uncertainty as being the largest of all good-deal bounds $\pi^{u,\theta}_\cdot(\cdot)$ over 
all models $P^\theta,\ \theta\in\Theta$. The respective set $\mathcal{Q}^{\mathrm{ngd}}$ of no-good-deal valuation measures corresponding to $\pi^u_\cdot(\cdot)$ can be described in terms of the sets 
$\mathcal{Q}^{\mathrm{ngd}}(P^\theta),\ \theta\in \Theta$. At first, one might guess that $\mathcal{Q}^{\mathrm{ngd}}$ should be the union of all $\mathcal{Q}^{\mathrm{ngd}}(P^\theta)$.
However, to have m-stability and convexity of $\mathcal{Q}^{\mathrm{ngd}}$ for good dynamic properties of the resulting good-deal bounds (as in Lemma~\ref{lem:PropDynamicRiskMeasMeandQngdareMstable}), 
one has to  define $\mathcal{Q}^{\mathrm{ngd}}$ as the smallest m-stable and convex set containing all $\mathcal{Q}^{\mathrm{ngd}}(P^\theta),\ \theta\in \Theta$. 
\begin{definition}\label{defn:SetOfNGDMeas}
$\mathcal{Q}^{\mathrm{ngd}}$ is the smallest m-stable convex subset of $\mathcal{M}^e$ containing all $\mathcal{Q}^{\mathrm{ngd}}(P^\theta)$, $\theta\in \Theta$. For sufficiently integrable claims $X$ (e.g.\ in $L^\infty$), the worst-case upper good-deal bound under uncertainty is $ \pi^u_t(X):= \esssup_{Q\in \mathcal{Q}^{\mathrm{ngd}}} E^Q_t[X]$. 
\end{definition}
We characterize the set $\mathcal{Q}^{\mathrm{ngd}}$ from Definition \ref{defn:SetOfNGDMeas} using a suitable single correspondence $\widetilde{C}$ which is derived from all 
$C^\theta,\ \theta\in \Theta$. To this end, we impose the 
\begin{assumption}\label{asp:AssumptiononCnus}
	The correspondence with values $\bigcup_{\theta\in \Theta} \widetilde{C}^\theta_t(\omega),\ (t,\omega)\in[0,T]\times\Omega,$ is compact-valued and predictable.
\end{assumption}
The theory of measurable correspondences is well-developed for closed-valued correspondences \citep[see][]{Rockafellar}. Assumption \ref{asp:AssumptiononCnus} ensures closed-valuedness and 
predictability of $\widetilde{C}$ for the proposition below. If all $C^\theta$ ($\theta\in \Theta$) are equal to some given $C^0$, as in the following example, such an assumption will
automatically hold in the setting required for Section \ref{sec:GDValandHedgingUnderUncertaintyEllipsoids}, where $\widetilde{C}^\theta$ $(\theta\in\Theta)$ are ellipsoidal.
\begin{example}\label{exa:ExampleExplicitNGDCorr}
 For a standard correspondence $C^0$ with $\xi^\theta\in C^0$, $\theta\in\Theta$, let $C^\theta := C^0$, $\theta\in\Theta$. Then $\widetilde{C}^\theta=C^0+\xi^0+\theta$ and $\bigcup_{\theta\in\Theta}\widetilde{C}^\theta=C^0+\xi^0+\Theta$ satisfies Assumption~\ref{asp:AssumptiononCnus}.
\end{example}
\begin{proposition}\label{pro:SetofNGDMeasCharact}
Let Assumption \ref{asp:AssumptiononCnus} hold. Then $  \mathcal{Q}^{\mathrm{ngd}}$ equals
    	\begin{equation}\label{eq:QngdUncertCharact}
               \Big\lbrace Q\sim P_0\ \Big\lvert \Big.\ dQ/dP_0=\mathcal{E}\big(\lambda\cdot W^0\big),\ \lambda=-\xi^0+\eta\textrm{ predictable, bounded, }
												    \eta\in \widetilde{C} \Big\rbrace,
         \end{equation}
for the standard correspondence $ \widetilde{C}_t(\omega) :=  \mathrm{Ker}\ \sigma_t(\omega) \cap \mathrm{Conv}\Big(\bigcup_{\theta\in \Theta}\widetilde{C}^\theta_t(\omega)\Big)$.
\end{proposition}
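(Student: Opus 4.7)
Let $\mathcal{N}$ denote the set on the right-hand side of \eqref{eq:QngdUncertCharact}. The plan is to establish the two inclusions $\mathcal{Q}^{\mathrm{ngd}}\subseteq\mathcal{N}$ and $\mathcal{N}\subseteq\mathcal{Q}^{\mathrm{ngd}}$, after first verifying that $\widetilde{C}$ is a standard correspondence. Convex-valuedness is immediate from the definition; non-emptiness uses that $-\xi^\theta\in C^\theta$ implies $\Pi^\bot(\theta)\in\widetilde{C}^\theta\cap\mathrm{Ker}\,\sigma$; compactness follows from Carath\'eodory's theorem applied to the compact set $\bigcup_\theta\widetilde{C}^\theta_t(\omega)$ (Assumption~\ref{asp:AssumptiononCnus}); predictability combines Assumption~\ref{asp:AssumptiononCnus} with the measurable intersection and convex-hull results of \citep[Thm.~1.M and Cor.~1.Q]{Rockafellar}, together with predictability of $\mathrm{Ker}\,\sigma$ from Proposition~\ref{pro:ImAndKerCorrespondencesPredictableCharactOfMe}.

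For $\mathcal{Q}^{\mathrm{ngd}}\subseteq\mathcal{N}$, I check that $\mathcal{N}$ itself is a convex, m-stable subset of $\mathcal{M}^e$ containing every $\mathcal{Q}^{\mathrm{ngd}}(P^\theta)$; the minimality in Definition~\ref{defn:SetOfNGDMeas} then yields the claim. Containment of each $\mathcal{Q}^{\mathrm{ngd}}(P^\theta)$ is immediate from~\eqref{eq:QngdPnuCharact} via $\widetilde{C}^\theta\cap\mathrm{Ker}\,\sigma\subseteq\widetilde{C}$. Convexity of $\mathcal{N}$ follows from the observation that the Girsanov kernel of a mixture $\alpha Q^1+(1-\alpha)Q^2$ is the pointwise $[0,1]$-weighted convex combination of $\lambda^1_t$ and $\lambda^2_t$ with weights $\alpha Z^1_t/(\alpha Z^1_t+(1-\alpha)Z^2_t)$, so that convex-valuedness of both $\widetilde{C}$ and $\mathrm{Ker}\,\sigma$ together with uniform boundedness keep the mixed kernel of the form $-\xi^0+\eta$, $\eta\in\widetilde{C}$. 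M-stability is immediate from the pasting formula $\lambda^1 I_{[0,\tau]}+\lambda^2 I_{]\tau,T]}$, which manifestly inherits boundedness and the prescribed form.

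The reverse inclusion $\mathcal{N}\subseteq\mathcal{Q}^{\mathrm{ngd}}$ is the genuine difficulty. Given $Q\in\mathcal{N}$ with Girsanov kernel $\lambda=-\xi^0+\eta$, $\eta\in\widetilde{C}$, my plan is to apply a measurable Carath\'eodory-type selection \citep[see][Thms.~2.K and 1.C]{Rockafellar} to decompose $\eta=\sum_{i=1}^{n+1}\alpha_i\widetilde{\eta}^i$ predictably, with $\widetilde{\eta}^i\in\widetilde{C}^{\theta_i}$ for predictable selectors $\theta_i$ of $\Theta$ and predictable weights $\alpha_i\ge 0$, $\sum_i\alpha_i=1$. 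Since mixing and pasting inside an m-stable convex set realise arbitrary predictable $[0,1]$-weighted convex combinations of Girsanov kernels (the random weights arising from ratios of densities obtained through iterated mixing), this procedure should reconstruct $\lambda$ and therefore place $Q$ in $\mathcal{Q}^{\mathrm{ngd}}$. The main obstacle is that the individual Carath\'eodory pieces $\widetilde{\eta}^i$ need not themselves lie in $\mathrm{Ker}\,\sigma$, so they do not a priori correspond to building blocks in $\bigcup_\theta\mathcal{Q}^{\mathrm{ngd}}(P^\theta)\subseteq\mathcal{M}^e$. One therefore has to exploit that $\eta$ lies in $\mathrm{Ker}\,\sigma$ to rearrange the decomposition so that the $\mathrm{Im}\,\sigma^{\mathrm{tr}}$-components of the $\widetilde{\eta}^i$ cancel in the aggregate, and to track this refinement through the measurable selection step so that each summand is the Girsanov kernel of some measure in $\mathcal{Q}^{\mathrm{ngd}}(P^{\theta_i})$; I expect this convex-analytic step to be the most delicate part of the argument.
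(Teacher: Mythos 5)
Your first inclusion ($\mathcal{Q}^{\mathrm{ngd}}\subseteq\mathcal{N}$) is correct and is essentially the paper's argument: verify that the right-hand side of \eqref{eq:QngdUncertCharact} is a convex, m-stable subset of $\mathcal{M}^e$ containing every $\mathcal{Q}^{\mathrm{ngd}}(P^\theta)$, and invoke minimality from Definition~\ref{defn:SetOfNGDMeas}. The problem is the reverse inclusion, which you yourself flag as unfinished; as written it is not a proof but a plan with two genuine gaps. First, the assertion that ``mixing and pasting inside an m-stable convex set realise arbitrary predictable $[0,1]$-weighted convex combinations of Girsanov kernels'' is unjustified: a finite convex combination of measures produces kernel weights given by ratios of density processes, and pasting produces indicator weights over stochastic intervals, so finitely many such operations generate only a very restricted family of weight processes. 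Hitting a prescribed predictable Carath\'eodory weight $\alpha_i(t,\omega)$ would require a limiting procedure, and $\mathcal{Q}^{\mathrm{ngd}}$ is defined as the \emph{smallest} m-stable convex set, with no closedness available to guarantee that such limits stay inside it. Second, the obstacle you name --- that the Carath\'eodory pieces $\widetilde{\eta}^i\in\widetilde{C}^{\theta_i}$ need not lie in $\mathrm{Ker}\,\sigma$ and hence need not be kernels of measures in $\bigcup_\theta\mathcal{Q}^{\mathrm{ngd}}(P^\theta)$ --- is left unresolved, and it is precisely the crux: $\widetilde{C}=\mathrm{Ker}\,\sigma\cap\mathrm{Conv}(\bigcup_\theta\widetilde{C}^\theta)$ is in general larger than $\mathrm{Conv}\bigl(\bigcup_\theta(\widetilde{C}^\theta\cap\mathrm{Ker}\,\sigma)\bigr)$, so a selection of $\widetilde{C}$ need not decompose into admissible building blocks at all.

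The paper avoids this entire construction. For $\mathcal{N}\subseteq\mathcal{Q}^{\mathrm{ngd}}$ it passes to the $L^1$-closure of $\mathcal{Q}^{\mathrm{ngd}}$, shows that this closure is still m-stable (a Scheff\'e-lemma argument for $L^1$-convergence of the pasted densities), and then applies Delbaen's structure theorem (Thm.~2 of \citealt{Delbaen}) for $L^1$-closed, convex, m-stable sets of measures in a Brownian filtration: this yields a closed-convex-valued predictable correspondence $C^1$ such that $\mathcal{Q}^{\mathrm{ngd}}$ consists exactly of the equivalent measures with kernels $-\xi^0+\eta$, $\eta\in C^1\cap\mathrm{Ker}\,\sigma$. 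The inclusion is thereby reduced to a statement about correspondences --- every predictable selection of $\widetilde{C}$ is a selection of $C^1\cap\mathrm{Ker}\,\sigma$ --- which is settled by contradiction from $\mathcal{Q}^{\mathrm{ngd}}(P^\theta)\subseteq\mathcal{Q}^{\mathrm{ngd}}$, with no explicit reconstruction of $Q$ from building blocks. If you want to complete your argument along your own lines, you would need either such a representation theorem or a density argument in the closure; without one of these the reverse inclusion does not go through.
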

\begin{proof}
With Assumption \ref{asp:AssumptiononCnus}, \citep[][Thm.1.M and Prop.1.H]{Rockafellar} imply that $\widetilde{C}$ is standard. Note that $\widetilde{C}$ is non-empty-valued since 
$-\xi^0\in C^0$ and hence $0\in \widetilde{C}^0_t(\omega)\cap \mathrm{Ker}\ \sigma_t(\omega)\subset \widetilde{C}_t(\omega)$. Denote by $\mathcal{Q}$ the set in (\ref{eq:QngdUncertCharact}). 
By definition $\widetilde{C}_t(\omega)\subset \mathrm{Ker}\ \sigma_t(\omega)$, implying $\mathcal{Q}\subseteq\mathcal{M}^e$. We first prove that $\mathcal{Q}^{\mathrm{ngd}}\subseteq \mathcal{Q}$. 
Applying \citep[][Thm.1]{Delbaen} or following the steps of the proof for Lemma \ref{lem:PropDynamicRiskMeasMeandQngdareMstable},  Part b), one sees that $\mathcal{Q}$ is m-stable and convex. 
By (\ref{eq:QngdPnuCharact}) and since $ \widetilde{C}^\theta_t(\omega)\cap \mathrm{Ker}\ \sigma_t(\omega) \subseteq \widetilde{C}_t(\omega)$ for all $\theta\in\Theta,$ then $\mathcal{Q}$ contains 
the union of all $\mathcal{Q}^{\mathrm{ngd}}(P^\theta)$, $\theta\in \Theta$. By definition $\mathcal{Q}^{\mathrm{ngd}}$ is the smallest m-stable convex subset of $\mathcal{M}^e$ with this property, 
hence $\mathcal{Q}^{\mathrm{ngd}}\subseteq \mathcal{Q}$.

Let us show $\mathcal{Q}\subseteq \mathcal{Q}^{\mathrm{ngd}}$. The $L^1$-closure of $\mathcal{Q}^{\mathrm{ngd}}$ is an m-stable closed and convex set of measures $Q\ll P_0$, and 
$\mathcal{Q}^{\mathrm{ngd}}$ comprises exactly those elements of its closure that are equivalent to $P_0$.
Closedness and convexity of the closure of $\mathcal{Q}^{\mathrm{ngd}}$ are clear. We now show its m-stability. To this end, let $Z^1_T,Z^2_T$ be in the closure of $\mathcal{Q}^{\mathrm{ngd}}$,
$\tau\le T$ be a stopping time and $Z_T:=Z^1_\tau Z^2_T/Z^2_\tau I_{\{Z^2_\tau>0\}} + Z^1_\tau I_{\{Z^2_\tau=0\}}$. There exist $\big(Z^{1,n}_T\big)_n,\big(Z^{2,n}_T\big)_n\subseteq \mathcal{Q}^{\mathrm{ngd}}$
such that $Z^{1,n}_T\to Z^1_T$ and $Z^{2,n}_T\to Z^2_T$ in $L^1$. By m-stability of $\mathcal{Q}^{\mathrm{ngd}}$ holds $Z^n_T:=Z^{1,n}_\tau Z^{2,n}_T/Z^{2,n}_\tau\in \mathcal{Q}^{\mathrm{ngd}}$
for each $n\in\N$. Now $E[Z^n_T]=1$ for all $n\in\N$, and $Z^n_T\to Z_T$ in probability as $n\to \infty$. In addition, 
\begin{align*}
E[Z_T] &= E[Z^1_\tau Z^2_T/Z^2_\tau I_{\{Z^2_\tau>0\}}] + E[Z^1_\tau I_{\{Z^2_\tau=0\}}]\\
	& = E\big[ E_\tau[Z^2_T/Z^2_\tau ]\ Z^1_\tau I_{\{Z^2_\tau>0\}} \big] + E[Z^1_\tau I_{\{Z^2_\tau=0\}}]
           =E[ Z^1_\tau]=1.
\end{align*}
By Scheff\'e's lemma one obtains $Z^{n}_T\to Z_T$ in $L^1$ as $n\to \infty$, and m-stability of the closure of $\mathcal{Q}^{\mathrm{ngd}}$ follows. As $W^0$ is a continuous 
$P_0$-martingale with the predictable representation property, it satisfies the hypotheses of \citep[][Thm.2]{Delbaen}, implying by Definition \ref{defn:SetOfNGDMeas} the existence 
of a closed-convex-valued predictable correspondence $C^1$ such that the no-good-deal measure set  $\mathcal{Q}^{\mathrm{ngd}}$ is equal to
\begin{align*}
	 \Big\lbrace Q\sim P_0 \Big\lvert \Big.\Big.&\ dQ/dP_0=\mathcal{E}\left(\lambda\cdot W^0\right),\ \lambda=-\xi^0+\eta \textrm{ predictable, }
												    \eta\in C^1\cap\mathrm{Ker}\ 
\sigma \Big.\Big\rbrace\label{eq:QngdUncertCharactDelbaen}.
\end{align*}
To prove the claim, it suffices to show that all predictable selections of $\widetilde{C}$ are also predictable selections of $C^1\cap\mathrm{Ker}\ \sigma$. To this end it suffices to 
show that for all $\theta\in \Theta$,  any predictable selection of $\widetilde{C}^\theta\cap\mathrm{Ker}\ \sigma$ is a predictable selection of $C^1\cap\mathrm{Ker}\ \sigma$. Assume 
the contrary that there exists $\theta\in \Theta$ and a predictable process $\eta$ such that $\eta\in \widetilde{C}^\theta\cap\mathrm{Ker}\ \sigma$ and $\eta$ is not selection of 
$C^1\cap\mathrm{Ker}\ \sigma$. Then $\mathcal{E}\left((-\xi^0+\eta)\cdot W^0\right)$ is in $\mathcal{Q}^{\mathrm{ngd}}(P^\theta)$ but not in $\mathcal{Q}^{\mathrm{ngd}}$, which contradicts 
$\mathcal{Q}^{\mathrm{ngd}}(P^\theta)\subseteq \mathcal{Q}^{\mathrm{ngd}}$.
\end{proof}
Using the characterization of $\mathcal{Q}^{\text{ngd}}$ in Proposition \ref{pro:SetofNGDMeasCharact} we can apply the results of Sections~\ref{sec:Prelim}-\ref{sec:No-Good-Deal-HedgingApproach} in order to 
derive worst-case good-deal bounds and hedging strategies under uncertainty like in the absence of uncertainty, with the center $P_0$ of the set of reference measures $\mathcal{R}$ taking 
the role of $P$ (in Sections~\ref{sec:Prelim}-\ref{sec:No-Good-Deal-HedgingApproach}) and the enlarged correspondence $\widetilde{C}$ taking the role of $C$ there.
\begin{example}\label{explQngd}
For $C^\theta, \theta\in\Theta,$ as in Example~\ref{exa:ExampleExplicitNGDCorr} holds $\widetilde{C}=(C^0+\xi^0+\Theta)\cap \mathrm{Ker}\ \sigma$ and
\begin{equation}\label{eq:forExaUnionQthetaisQ}
  \mathcal{Q}^{\mathrm{ngd}}= \Big\lbrace Q\sim P_0\ \Big\lvert \Big.\ dQ/dP_0=\mathcal{E}\left(\lambda\cdot W^0\right),\ \lambda\in(-\xi^0+\mathrm{Ker}\ \sigma)\cap (C^0+\Theta) 
    \Big\rbrace
\end{equation}
 with $\lambda$ denoting bounded predictable selections, by Proposition \ref{pro:SetofNGDMeasCharact}. Moreover the union $\bigcup_{\theta\in\Theta}\mathcal{Q}^{\mathrm{ngd}}(P^\theta)$ is convex, m-stable (cf.\ Lemma~\ref{lem:UnionPngdQngd-mstable}) and equals  $\mathcal{Q}^{\mathrm{ngd}}$.
\end{example}
\begin{remark}\label{Rem:C0plusTheta}
a) Equation (\ref{eq:QngdUncertCharact}) shows, how the good-deal valuation and hedging problem 
under model uncertainty can technically  be embedded into the mathematical framework of  Sections~\ref{sec:Prelim}-\ref{sec:No-Good-Deal-HedgingApproach} without uncertainty,
by considering  an enlarged  no-good-deal constraint correspondence  $C$ as $\mathrm{Conv}(\cup_{\theta\in \Theta}({C}^\theta +\theta))$  in (\ref{eq:QngdDefinitionlambda})
  with $P_0$  taking the role of $P$. 
In Example~\ref{explQngd}, (\ref{eq:forExaUnionQthetaisQ}),  it simply means to take $C$ as $C^0+\Theta$.  

b) Typical examples for good-deal constraints are radial, i.e.\ $C^0$ is a ball. This case is predominant in the literature and justified from a finance point of view by ensuring a constant bound on instantaneous Sharpe ratios (or growth rates). But typical examples for uncertainty (ambiguity) constraints $\Theta$ can well be non-radial \citep[see][]{ChenEpstein,EpsteinS03}. 
For instance, $\Theta$ may arise from a confidence region for some unknown drift parameters in a multivariate (log-)normal model; such would in general be ellipsoidal but not radial, and the sum $C^0+\Theta$ can even be non-ellipsoidal. To offer a suitable framework for such and other examples,   Section~\ref{sec:Prelim} treats abstract correspondences.
A constructive method to solve for such a typical parametrization of $C^0+\Theta$ is described in Remark~\ref{Rem:ellipUnc}.
\end{remark}

\subsection{Robust approach to good-deal hedging under model uncertainty}
As in Section \ref{sec:No-Good-Deal-HedgingApproach} (cf.\ (\ref{eq:PngdDefinition}) and the definition of $\mathcal{Q}^{\textrm{ngd}}(P^\theta)$), we define for $\theta\in \Theta$ the 
set $$\mathcal{P}^{\mathrm{ngd}}(P^\theta) :=\left\lbrace Q\sim P^\theta\ \,\lvert \,\ dQ/dP^\theta=\mathcal{E}\left(\lambda\cdot W^{\theta}\right),\  \lambda\in C^\theta \textrm{ predictable, bounded}\right\rbrace$$ 
in order to introduce a robust notion of good-deal hedging. Let $\mathcal{P}^{\mathrm{ngd}}$ be the smallest m-stable convex set of measures $Q\sim P_0$ containing all $\mathcal{P}^{\mathrm{ngd}}(P^\theta)$,  
$\theta\in \Theta$. Then 
\begin{equation*}
	\rho_t(X) := \esssup_{Q\in \mathcal{P}^{\mathrm{ngd}}} E^Q_t[X],\quad t\in [0,T], 
\end{equation*}
 defines a  time-consistent  dynamic coherent risk measure  by Lemma \ref{lem:PropDynamicRiskMeasMeandQngdareMstable}
(for $X\in L^2(P_0)$ if the correspondence {from} Assumption \ref{asp:AssumptiononCnus} is uniformly bounded, or for $X\in L^\infty$ {otherwise}). As in Section~\ref{sec:No-Good-Deal-HedgingApproach}, the task of good-deal hedging under uncertainty is posed as a minimization problem 
(\ref{eq:HedgingProblemUncertainty}) for a-priori risk measures $\rho$ of hedging errors: For a 
contingent
 claim $X$, find a strategy $\phi^*\in\Phi$ such that 
\begin{equation}\label{eq:HedgingProblemUncertainty}
	\pi^u_t(X)=\rho_t\Big(X-\int_t^T{\phi^*_s}^{\textrm{tr}}d\widehat{W}^0_s\Big)=\essinf_{\phi\in\Phi}\ \rho_t\Big(X-\int_t^T\phi^{\textrm{tr}}_sd\widehat{W}^0_s\Big) ],\quad t\in [0,T].
\end{equation}
The good-deal hedging strategy under uncertainty is defined as this minimizer (if it exists) $\phi^*\in\Phi$. For $X\in L^2(P_0)$, one can prove (as in Prop.\ref{pro:RobustTrackingError}) 
that the tracking error $R^{\phi^*}(X)$ of the strategy $\phi^*$ is a supermartingale under every measure in $\mathcal{P}^{\textrm{ngd}}$: 
\begin{proposition}\label{pro:RobustTrackingErrorUncertainty}
Let $X\in L^2(P_0)$ {with} the correspondence defined in Assumption \ref{asp:AssumptiononCnus} being uniformly bounded. 
Then the tracking error $R^{\phi^*}(X)$ of a strategy $\phi^*$ solving (\ref{eq:HedgingProblemUncertainty}) is a $Q$-supermartingale for all $Q\in \mathcal{P}^{\textrm{ngd}}$.
\end{proposition}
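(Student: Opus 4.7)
The plan is to reduce the problem to the no-uncertainty setting of Section~\ref{sec:No-Good-Deal-HedgingApproach} so that Proposition~\ref{pro:RobustTrackingError} can be invoked directly, with $P_0$ playing the role of $P$ and $\widehat{W}^0$ that of $\widehat{W}$. To carry out this reduction, I would first establish the analogue of Proposition~\ref{pro:SetofNGDMeasCharact} for $\mathcal{P}^{\mathrm{ngd}}$, showing that
\[
\mathcal{P}^{\mathrm{ngd}} = \Big\{Q\sim P_0\ \Big|\ dQ/dP_0 = \mathcal{E}(\lambda\cdot W^0),\ \lambda\ \text{predictable, bounded},\ \lambda\in \widetilde{D}\Big\},
\]
where $\widetilde{D}_t(\omega) := \mathrm{Conv}\bigl(\bigcup_{\theta\in\Theta}(C^\theta_t(\omega)+\xi^0_t(\omega)+\theta_t(\omega))\bigr)$. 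The structure of the argument is identical to the proof of Proposition~\ref{pro:SetofNGDMeasCharact}, but simpler: because $\mathcal{P}^{\mathrm{ngd}}(P^\theta)$ corresponds to the Girsanov kernels $\lambda=-\xi^\theta+(C^\theta+\xi^\theta)=-\xi^0+(C^\theta+\xi^0+\theta)$ under $P_0$ (without the additional local-martingale condition $\eta\in\mathrm{Ker}\,\sigma$), no intersection with $\mathrm{Ker}\,\sigma$ enters. One inclusion is obtained from m-stability/convexity of the candidate set (an argument like that of Lemma~\ref{lem:PropDynamicRiskMeasMeandQngdareMstable}\,b)), the other from Delbaen's characterization \cite[Thm.2]{Delbaen} of m-stable sets in the Brownian filtration combined with the observation that any selection of $\widetilde{D}^\theta:=C^\theta+\xi^0+\theta$ yields an element of $\mathcal{P}^{\mathrm{ngd}}(P^\theta)\subseteq \mathcal{P}^{\mathrm{ngd}}$.

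Under the uniform-boundedness hypothesis on the correspondence from Assumption~\ref{asp:AssumptiononCnus}, $\widetilde{D}$ is a standard correspondence satisfying Assumption~\ref{asp:BddCorr}, containing $0$ (since $-\xi^0\in C^0$ and $0\in\Theta$). With this identification, the hedging problem (\ref{eq:HedgingProblemUncertainty}) takes exactly the form of (\ref{eq:HedgingProblem}) relative to the reference measure $P_0$ and correspondence $\widetilde{D}$. Proposition~\ref{pro:RhoasBSDESolution} applied in this setting then gives, for any $Q\in\mathcal{P}^{\mathrm{ngd}}$ with bounded Girsanov kernel $\lambda^Q\in\widetilde{D}$, the dynamics
\[
-dR^{\phi^*}_t(X) = \Bigl(\esssup_{\lambda\in\widetilde{D}}Z_t^{\textrm{tr}}\lambda_t - Z_t^{\textrm{tr}}\lambda^Q_t\Bigr)dt - Z_t^{\textrm{tr}}dW^{0,Q}_t,
\]
where $W^{0,Q}:=W^0-\int_0^\cdot\lambda^Q_sds$ is a $Q$-Brownian motion and $Z$ is the control component of the BSDE associated with $\rho(X-\int_0^T{\phi^*_s}^{\textrm{tr}}d\widehat{W}^0_s)$. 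The drift is non-positive, so the finite-variation part is non-increasing under $Q$.

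It remains to verify integrability so that the supermartingale property holds in the proper sense. As in the proof of Proposition~\ref{pro:RobustTrackingError}, one has $\int_0^T{\phi^*_s}^{\textrm{tr}}d\widehat{W}^0_s\in L^2(P_0)$ because $\xi^0$ is bounded and $\phi^*\in L^2(P_0\otimes dt)$, and hence $R^{\phi^*}(X)\in\mathcal{S}^2(P_0)$. Since $\lambda^Q$ is uniformly bounded, $dQ/dP_0=\mathcal{E}(\lambda^Q\cdot W^0)_T\in L^p(P_0)$ for every $p\in[1,\infty)$, so H\"older's inequality gives $R^{\phi^*}(X)\in\mathcal{S}^{2-\varepsilon}(Q)$ for any $\varepsilon\in(0,1)$. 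Combined with the non-increasing finite-variation part, this yields the $Q$-supermartingale property for every $Q\in\mathcal{P}^{\mathrm{ngd}}$. The main obstacle is the first step: verifying that $\mathcal{P}^{\mathrm{ngd}}$ admits the single-correspondence description via $\widetilde{D}$, which requires adapting the Delbaen m-stability argument to a setting without the local-martingale constraint and checking that $\widetilde{D}$ inherits predictability and standard-ness from the data (via Assumption~\ref{asp:AssumptiononCnus} and \cite[Thms.\,1.M,\,Prop.\,1.H]{Rockafellar}). Once this characterization is established, the proof is a direct transcription of Proposition~\ref{pro:RobustTrackingError}.
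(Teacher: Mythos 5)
Your proposal is correct and follows essentially the route the paper intends: the paper gives no separate proof of this proposition but states that it is proved "as in Proposition~\ref{pro:RobustTrackingError}", after embedding the uncertainty problem into the framework of Sections~\ref{sec:Prelim}--\ref{sec:No-Good-Deal-HedgingApproach} via a single enlarged correspondence (cf.\ Proposition~\ref{pro:SetofNGDMeasCharact} and Remark~\ref{Rem:C0plusTheta}~a)), which is exactly your reduction. One small notational slip: the Girsanov kernels of $\mathcal{P}^{\mathrm{ngd}}(P^\theta)$ under $P_0$ lie in $C^\theta+\theta$, so either write $\lambda=-\xi^0+\eta$ with $\eta\in\widetilde{D}=\mathrm{Conv}\bigl(\bigcup_{\theta}(C^\theta+\xi^0+\theta)\bigr)$, or take $\widetilde{D}=\mathrm{Conv}\bigl(\bigcup_{\theta}(C^\theta+\theta)\bigr)$ and $\lambda\in\widetilde{D}$ directly (matching Remark~\ref{Rem:C0plusTheta}~a)); as written your display is off by the shift $\xi^0$, though this does not affect the supermartingale argument.
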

A strategy solving the good-deal hedging problem under uncertainty and whose tracking error satisfies the supermartingale property under all measures in $\mathcal{P}^{\textrm{ngd}}$
(as in Proposition \ref{pro:RobustTrackingErrorUncertainty}) will be qualified as {\em robust} with respect to uncertainty. Note that this is a different notion of robustness compared to the 
one in Remark~\ref{rem:atleastmeanselffinancing}, because the supermartingale property has to hold for measures in $\mathcal{P}^{\textrm{ngd}}(P^\theta)$ uniformly for all models 
$P^\theta\in\mathcal{R}$ (since $\bigcup_{\theta\in\Theta}\mathcal{P}^{\textrm{ngd}}(P^\theta)$ is a subset of $\mathcal{P}^{\textrm{ngd}}$). 
More concrete results under uncertainty will be derived next under additional conditions.

\subsection{Hedging under model uncertainty for ellipsoidal good-deal constraints
}\label{sec:GDValandHedgingUnderUncertaintyEllipsoids}

In this section we consider ellipsoidal good-deal constraints.  To this end, let
\begin{equation}\label{eq:ellipsoidalC0}
 C^0_t(\omega) = \left\lbrace x\in\R^n\ \Big\lvert\Big.\ x^{\textrm{tr}}A_t(\omega)x\le h^2_t(\omega)\right\rbrace,\quad (t,\omega)\in[0,T]\times \Omega,
\end{equation}
where $A$ is a uniformly elliptic and predictable matrix-valued process, and $h$ some positive bounded and predictable process. We assume that $A$ satisfies the separability condition 
(\ref{eq:AssumptionOptimizationSimplify}) with respect to  $\sigma$. 
Let $\Theta$ be an arbitrary standard correspondence satisfying the uniform boundedness Assumption~\ref{asp:BddCorr} and  $0\in\Theta$. As in Example~\ref{exa:ExampleExplicitNGDCorr}, we 
let $C^\theta := C^0$, for all $\theta\in \Theta$, yielding by (\ref{eq:DefCtildeNu}) that 
\begin{equation*}
 \widetilde{C}^\theta_t\cap \mathrm{Ker}\ \sigma_t =  \left\lbrace x\in\R^n\Big\lvert\Big.\ x^{\textrm{tr}}A_tx\le h^2_t-{\xi_t^\theta}^\textrm{tr}A_t\xi_t^\theta\right\rbrace\bigcap \mathrm{Ker}\ \sigma_t +\Pi^\bot_t(\theta_t).
\end{equation*}
Clearly, $C^\theta$ is standard and satisfies Assumption \ref{asp:BddCorr} for $\theta\in\Theta$. Similarly to (\ref{eq:AssumptionOnAlphaPrime}), to derive explicit BSDE formulations 
for solving the hedging problem we will assume
\begin{equation}\label{eq:AssumptionOnAlphaPrimeUncertainty}
         	\lvert \xi^\theta\rvert<h\sqrt{\alpha'} \quad \text{for all }\theta\in\Theta,
\end{equation}
where the process $\alpha'$ is the constant of ellipticity of $A^{-1}$ as in Lemma \ref{lem:qPrimePositiveDefinitePtwise}. Recall that, thanks to Lemma \ref{lem:qPrimePositiveDefinitePtwise}, the 
inequality (\ref{eq:AssumptionOnAlphaPrimeUncertainty}) implies in particular that $-\xi^\theta\in C^0,\ \theta\in\Theta$; hence (\ref{eq:xithetainCtheta}) holds and  the correspondences 
$\widetilde{C}^\theta\cap \mathrm{Ker}\ \sigma$ are standard, $\theta\in \Theta$. Note that condition (\ref{eq:AssumptionOnAlphaPrimeUncertainty}) ensures applicability of 
Lemma \ref{lem:OptimizationWithPhiSolved} in our current setup for any model $P^\theta$.
Since $C^\theta$ is equal to $C^0$ and satisfies Assumption \ref{asp:BddCorr}, one has 
\begin{equation}\label{eq:PngdNUdefinition}
\mathcal{P}^{\mathrm{ngd}}(P^\theta)=\Big\lbrace Q\sim P_0\ \Big\lvert \Big.\ dQ/dP_0=\mathcal{E}\big(\lambda\cdot W^0\big),\  
                                                                         \lambda\ \textrm{predictable,}\ \lambda\in C^0+\theta\Big\rbrace.
\end{equation}
The following lemma can be shown similarly to the proof of part b) of Lemma \ref{lem:PropDynamicRiskMeasMeandQngdareMstable} using (\ref{eq:PngdNUdefinition}). For details see \cite[][Lemma 3.26 and its proof]{KentiaPhD}.
\begin{lemma}\label{lem:UnionPngdQngd-mstable}
\hspace{2em}
\begin{enumerate}
 \item The set $\bigcup_{\theta\in \Theta}\mathcal{P}^{\mathrm{ngd}}(P^\theta)$ is m-stable, convex and equal to $\mathcal{P}^{\mathrm{ngd}}$.
 \item The set $\bigcup_{\theta\in \Theta}\mathcal{Q}^{\mathrm{ngd}}(P^\theta)$ is m-stable, convex and equal to $\mathcal{Q}^{\mathrm{ngd}}$.
\end{enumerate}
\end{lemma}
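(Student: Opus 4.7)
The plan is to prove both parts by the same pattern: establish m-stability and convexity of the explicit union $\bigcup_{\theta\in\Theta}\mathcal{P}^{\mathrm{ngd}}(P^\theta)$ (resp.\ $\bigcup_{\theta\in\Theta}\mathcal{Q}^{\mathrm{ngd}}(P^\theta)$) directly at the level of Girsanov kernels, after which the identification with $\mathcal{P}^{\mathrm{ngd}}$ (resp.\ $\mathcal{Q}^{\mathrm{ngd}}$) follows from the minimality clause of Definition~\ref{defn:SetOfNGDMeas}. Throughout I exploit the representation (\ref{eq:PngdNUdefinition}), so that $Q\in \mathcal{P}^{\mathrm{ngd}}(P^\theta)$ precisely when $dQ/dP_0=\mathcal{E}(\lambda\cdot W^0)$ for a bounded predictable $\lambda=\mu+\theta$ with $\mu$ a selection of $C^0$ and $\theta$ a selection of $\Theta$. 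Both $C^0$ and $\Theta$ are convex-valued standard correspondences, which will be the crucial structural input.

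For m-stability, I would pick $Q^i\in\mathcal{P}^{\mathrm{ngd}}(P^{\theta_i})$ with Girsanov kernels $\lambda^i=\mu^i+\theta_i$ ($i=1,2$) and a stopping time $\tau\le T$. The pasted density $Z:=I_{[0,\tau]}Z^1+I_{]\tau,T]}Z^1_\tau Z^2/Z^2_\tau$ from the definition of m-stability solves $dZ_t=Z_t\lambda_t^{\textrm{tr}}dW^0_t$ for a predictable $\lambda$ that coincides with $\lambda^1$ on $[0,\tau]$ and with $\lambda^2$ on $]\tau,T]$, $P_0\otimes dt$-a.e. I would define the analogous paste $\mu:=I_{[0,\tau]}\mu^1+I_{]\tau,T]}\mu^2$ and $\theta:=I_{[0,\tau]}\theta_1+I_{]\tau,T]}\theta_2$; both are predictable and bounded, and since the selection property is a pointwise-in-$(t,\omega)$ condition, $\mu\in C^0$ and $\theta\in\Theta$. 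Thus $\lambda=\mu+\theta\in C^0+\theta$ with $\theta\in\Theta$, so the pasted measure lies in $\mathcal{P}^{\mathrm{ngd}}(P^\theta)\subset\bigcup_{\theta'\in\Theta}\mathcal{P}^{\mathrm{ngd}}(P^{\theta'})$.

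For convexity, given $Q^1,Q^2$ as above and $\alpha\in(0,1)$, I would consider the density $Z:=\alpha Z^1+(1-\alpha)Z^2$ and identify its Girsanov kernel using the predictable representation property of $W^0$: It\^o's formula yields $dZ_t=Z_t\lambda_t^{\textrm{tr}}dW^0_t$ with $\lambda_t:=\beta_t\lambda^1_t+(1-\beta_t)\lambda^2_t$ for the predictable weight $\beta_t:=\alpha Z^1_t/Z_t\in[0,1]$. Decomposing $\lambda_t=(\beta_t\mu^1_t+(1-\beta_t)\mu^2_t)+(\beta_t\theta_{1,t}+(1-\beta_t)\theta_{2,t})$, the pointwise convexity of $C^0_t(\omega)$ and $\Theta_t(\omega)$ ensures that the first summand is a selection of $C^0$ and the second of $\Theta$, while boundedness is inherited from $\lambda^i$. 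Hence the convex combination belongs to $\mathcal{P}^{\mathrm{ngd}}(P^{\beta\theta_1+(1-\beta)\theta_2})\subset\bigcup_{\theta\in\Theta}\mathcal{P}^{\mathrm{ngd}}(P^\theta)$. Once the union has been shown to be m-stable and convex, minimality in Definition~\ref{defn:SetOfNGDMeas} forces equality with $\mathcal{P}^{\mathrm{ngd}}$. Part (2) follows by the same two arguments, with the only additional observation that the constraint $\eta\in\mathrm{Ker}\,\sigma$ distinguishing $\mathcal{Q}^{\mathrm{ngd}}(P^\theta)$ from $\mathcal{P}^{\mathrm{ngd}}(P^\theta)$ is preserved both by pasting (a pointwise condition) and by convex combinations (since $\mathrm{Ker}\,\sigma_t(\omega)$ is linear). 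The main delicate point I anticipate is making the pasted processes $\lambda$, $\mu$, $\theta$ genuinely predictable rather than merely optional; this is standard in the theory of stochastic integration and is in any case only needed up to $P_0\otimes dt$-null sets, so one can appeal to Proposition~\ref{pro:ImAndKerCorrespondencesPredictableCharactOfMe}-type measurable-selection arguments to replace any non-predictable paste by a $P_0\otimes dt$-a.e.\ equal predictable selection.
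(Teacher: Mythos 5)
Your proposal is correct and follows essentially the same route as the paper, which proves this by repeating the kernel-level pasting and convex-combination arguments from the proof of Lemma~\ref{lem:PropDynamicRiskMeasMeandQngdareMstable}~b) applied to the representation (\ref{eq:PngdNUdefinition}), with the additional bookkeeping of decomposing each kernel as $\mu+\theta$ and using convex-valuedness of $C^0$ and $\Theta$ to locate the resulting measure in some $\mathcal{P}^{\mathrm{ngd}}(P^{\tilde\theta})$, and then invoking minimality in Definition~\ref{defn:SetOfNGDMeas}. Your worry about predictability of the pastes is unnecessary (stochastic intervals $[0,\tau]$ and $]\tau,T]$ are predictable for any stopping time $\tau$), but otherwise the argument is complete.
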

Thanks to Lemma \ref{lem:UnionPngdQngd-mstable}, the dynamic risk measure $\rho$ satisfies for $X\in L^2(P_0)$
\begin{equation*}
\rho_t(X) := \esssup_{Q\in \mathcal{P}^{\mathrm{ngd}}} E^Q_t[X]= \esssup_{\theta\in \Theta}\rho^{\theta}_t(X), \quad   t\in[0, T],
\end{equation*}
with $\rho^{\theta}_t(X) := \esssup_{Q\in \mathcal{P}^{\mathrm{ngd}}(P^\theta)} E^Q_t[X]$. The worst-case upper good-deal bound $\pi^u_t(X)$ for $X\in L^2(P_0)$ rewrites from 
Definition \ref{defn:SetOfNGDMeas} as 
\begin{equation}\label{eq:DefinitionOfPiu}
	\pi^u_t(X):= \esssup_{\theta\in \Theta}\esssup_{Q\in \mathcal{Q}^{\mathrm{ngd}}(P^\theta)} E^Q_t[X]=\esssup_{\theta\in \Theta}\pi^{u,\theta}_t(X),\  t\in[0,T],
\end{equation}
where $\pi^{u,\theta}_t(X) = \esssup_{Q\in \mathcal{Q}^{\mathrm{ngd}}(P^\theta)} E^Q_t[X]$. The corresponding lower bound $\pi^l_\cdot(X)$ is obtained via $\pi^l_\cdot(X) =-\pi^u_\cdot(-X)$. 
For a worst-case approach to uncertainty we will investigate valuation of claims according to $\pi^u_\cdot(\cdot)$ and hedging with the optimal trading strategy solution to (\ref{eq:HedgingProblemUncertainty}). 
We employ results from Section \ref{subsec:Example1withEllipsoids} (under $P=P^\theta$) to characterize $\pi^{u,\theta}_\cdot(X)$ as well as the associated hedging strategies $\bar{\phi}^{\theta}$ in $\Phi$. 
For $\theta\in \Theta$ and $\phi\in\Phi$ let us consider the classical BSDEs 
 \begin{align}
  -dY_t &= f^{\phi,\theta}(t,Z_t)dt -Z_t^\textrm{tr}dW^0_t,\quad t\le T,\quad Y_T=X,\label{eq:BSDEYPhiNu}\\
  -dY_t &= f^\theta(t,Z_t)dt-Z_t^{\textrm{tr}}dW^0_t,\quad  t\le T,\quad Y_T=X,\label{eq:BSDEPiuNu}
 \end{align}
with respective generators 
\begin{align}\label{eq:generatorfphitheta}
	f^{\phi,\theta}(t,z) :=\:\,&\theta_t^\textrm{tr}(z-\phi_t)-{\xi^0_t}^\textrm{tr}\phi_t + h_t\big((z-\phi_t)^\textrm{tr}A^{-1}_t(z-\phi_t)\big)^{1/2}, \\
\label{eq:generatorfTheta}
  f^\theta(t,z)  := \:\,&\quad\; \Pi^\bot_t(\theta_t)^\textrm{tr}\Pi^\bot_t(z)-{\xi^0_t}^{\textrm{tr}}\Pi_t(z) \\
 \nonumber & + \big(h_t^2-{\xi_t^{\theta}}^{\textrm{tr}}A_t\xi_t^{\theta}\big)^{1/2}\big({\Pi_t^\bot(z)}^{\textrm{tr}}A_t^{-1} \Pi_t^\bot(z)\big)^{1/2}.
 \end{align}
It is straightforward to derive the BSDE descriptions for $\pi^{u,\theta}_\cdot(X)$ and $\rho^\theta_\cdot(X)$ stated in the subsequent proposition. The proof is analogous to that for Theorem~\ref{thm:FullHedgingProblemSolved},
using (\ref{eq:AssumptionOnAlphaPrimeUncertainty}) instead of (\ref{eq:AssumptionOnAlphaPrime}), replacing $P$ by $P^\theta$ and changing measure from $P^\theta$ to $P_0$.
\begin{proposition}\label{pro:BSDEspiuNuandRhoNuPhi}
 Assume (\ref{eq:AssumptionOptimizationSimplify}) and (\ref{eq:AssumptionOnAlphaPrimeUncertainty}) hold. For $X\in L^2(P_0)$, $\theta\in \Theta$ and $\phi\in\Phi$, let $(Y^{\phi,\theta},Z^{\phi,\theta})$ and 
 $(Y^\theta,Z^\theta)$ be the standard solutions to the BSDEs (\ref{eq:BSDEYPhiNu}) and (\ref{eq:BSDEPiuNu}) respectively.
Then 
\( \pi^{u,\theta}_t(X) = Y^\theta_t= E^{\bar{Q}^\theta}_t[X],\ t\in [0,T],
\)
holds with $\bar{Q}^\theta\in \mathcal{Q}^{\mathrm{ngd}}(P^{\theta})$ given by $d\bar{Q}^\theta/dP_0=\mathcal{E}\left((-\xi^0+{\bar{\eta}}^{\theta})\cdot W^0\right)$ for
\begin{equation*}
 {\bar{\eta}}^{\theta}_t = \big(h_t^2-{\xi_t^{\theta}}^{\textrm{tr}}A_t\xi_t^{\theta}\big)^{1/2}\big({\Pi_t^\bot(Z^\theta_t)}^{\textrm{tr}}A_t^{-1} \Pi_t^\bot(Z^\theta_t)\big)^{-1/2}A_t^{-1}\Pi_t^\bot(Z^\theta_t)+\Pi^\bot_t(\theta_t).
\end{equation*}
Moreover $Y^{\phi,\theta}_t = \rho^\theta_t\left(X-\int_t^T\phi^\textrm{tr}_sd\widehat{W}^0_s\right)$ holds, and the strategy $\bar{\phi}^{\theta}$ (in $\Phi$) with
\begin{equation*}
 \bar{\phi}^{\theta}_t := \Pi_t(Z^\theta_t)+\big(\Pi^\bot_t(Z^\theta_t)^{\textrm{tr}}A^{-1}_t\Pi^\bot_t(Z^\theta_t)\big)^{1/2}\big(h^2_t-{\xi^{\theta}_t}^{\textrm{tr}}A_t \xi^{\theta}_t\big)^{-1/2}A_t\xi^{\theta}_t\ P\otimes dt\text{-a.e.} 
\end{equation*}
 satisfies $\displaystyle \pi^{u,\theta}_t(X)= \rho^\theta_t\Big(X-\int_t^T(\bar{\phi}^\theta_s)^{\textrm{tr}}d\widehat{W}^0_s\Big)=\essinf_{\phi\in\Phi}\rho^\theta_t\Big(X-\int_t^T\phi^\textrm{tr}_sd\widehat{W}^0_s\Big).$
\end{proposition}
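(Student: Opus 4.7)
The plan is to reduce to the no-uncertainty results of Section~\ref{subsec:Example1withEllipsoids} applied under each reference measure $P^\theta$, and then translate the resulting BSDEs from the $P^\theta$-Brownian motion $W^\theta$ to the $P_0$-Brownian motion $W^0$ via $dW^\theta = dW^0 - \theta_t\,dt$. Since $C^\theta = C^0$ is ellipsoidal with $A$ satisfying the separability condition~(\ref{eq:AssumptionOptimizationSimplify}), and since $\xi^\theta$ is bounded and satisfies~(\ref{eq:AssumptionOnAlphaPrimeUncertainty}) (the $P^\theta$-analogue of~(\ref{eq:AssumptionOnAlphaPrime})), I would first invoke Theorems~\ref{thm:NGDboundasBSDEsol} and~\ref{thm:FullHedgingProblemSolved} under $P^\theta$ with $\xi^\theta$ in place of $\xi$. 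This yields $\pi^{u,\theta}_\cdot(X) = E^{\bar Q^\theta}_\cdot[X]$, where $\bar Q^\theta$ has $P^\theta$-Girsanov kernel $-\xi^\theta + \bar\eta^{\theta,\theta}$ with $\bar\eta^{\theta,\theta}$ given by~(\ref{eq:EquationEta}) (with $\xi^\theta$ replacing $\xi$), and identifies the optimal hedging strategy for the analogue of~(\ref{eq:HedgingProblem}) under $P^\theta$ through~(\ref{eq:PhibarExpression}).

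Next I would translate the BSDEs: a $W^\theta$-generator $g(t,Z_t)$ becomes, when rewritten against $dW^0$, the $P_0$-generator $g(t,Z_t) + Z_t^\textrm{tr}\theta_t$. Using $\xi^\theta = \xi^0 + \Pi(\theta)$, the orthogonality $\Pi(\theta)^\textrm{tr}\Pi^\bot(z) = 0 = \Pi^\bot(\theta)^\textrm{tr}\Pi(z)$, and that $\phi\in\Phi$ lies in $\mathrm{Im}\,\sigma^\textrm{tr}$ so $\Pi(\phi)=\phi$ and $\phi^\textrm{tr}\Pi^\bot(\theta)=0$, a short algebraic rearrangement shows that the translated $P_0$-generators coincide exactly with $f^\theta$ from~(\ref{eq:generatorfTheta}) and $f^{\phi,\theta}$ from~(\ref{eq:generatorfphitheta}). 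Since $\theta$ is bounded, the translated parameters remain standard; uniqueness of standard BSDE solutions under $P_0$ then yields $Y^\theta_t = \pi^{u,\theta}_t(X)$ and $Y^{\phi,\theta}_t = \rho^\theta_t\bigl(X - \int_t^T \phi^\textrm{tr}_s d\widehat W^0_s\bigr)$. Here I also use that $\int \phi^\textrm{tr} d\widehat W^\theta = \int \phi^\textrm{tr} d\widehat W^0$, since $\widehat W^\theta - \widehat W^0 = -\int \Pi^\bot(\theta)\,dt$ and $\phi^\textrm{tr}\Pi^\bot(\theta) = 0$.

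To verify the density formula for $\bar Q^\theta$ with respect to $P_0$, I would apply Yor's product formula to $dP^\theta/dP_0 = \mathcal{E}(\theta\cdot W^0)$ and $d\bar Q^\theta/dP^\theta = \mathcal{E}((-\xi^\theta + \bar\eta^{\theta,\theta})\cdot W^\theta)$; rewriting the second driver against $W^0$, the product collapses to $\mathcal{E}\bigl((-\xi^0 + \Pi^\bot(\theta) + \bar\eta^{\theta,\theta})\cdot W^0\bigr)$, which identifies $\bar\eta^\theta = \bar\eta^{\theta,\theta} + \Pi^\bot(\theta)$ as claimed. The essinf characterization and optimality of $\bar\phi^\theta$ then transfer directly from Theorem~\ref{thm:FullHedgingProblemSolved} under $P^\theta$ via the identifications above, since $P^\theta \sim P_0$ preserves the requisite $P\otimes dt$-a.e.\ comparisons. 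The main obstacle is the bookkeeping in the translation step: ensuring that after absorbing the $Z^\textrm{tr}\theta$ correction the projections $\Pi, \Pi^\bot$ recombine cleanly so that the $P_0$-generators match $f^\theta$ and $f^{\phi,\theta}$ exactly; once this is verified, the rest follows from the established machinery.
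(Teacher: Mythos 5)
Your proposal is correct and follows essentially the same route as the paper, which states that the proof is analogous to that of Theorem~\ref{thm:FullHedgingProblemSolved}, using (\ref{eq:AssumptionOnAlphaPrimeUncertainty}) in place of (\ref{eq:AssumptionOnAlphaPrime}), replacing $P$ by $P^\theta$, and changing measure from $P^\theta$ to $P_0$. Your explicit verification that the translated generators reduce to $f^\theta$ and $f^{\phi,\theta}$ via $\xi^\theta=\xi^0+\Pi(\theta)$ and the orthogonality of the projections, and that the Girsanov kernels recombine as $\bar\eta^\theta=\bar\eta^{\theta,\theta}+\Pi^\bot(\theta)$ (consistent with Proposition~\ref{pro:EMMsetsCoincide}), is exactly the bookkeeping the paper leaves to the reader.
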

By Proposition \ref{pro:BSDEspiuNuandRhoNuPhi}, we can write $\pi^u_\cdot(X)$ from (\ref{eq:DefinitionOfPiu}) as  
\begin{equation}\label{eq:SupInfNGDBoundNaive}
	\pi^{u}_t(X)= \esssup_{\theta\in \Theta}\ \essinf_{\phi\in \Phi} \rho^\theta_t\Big(X-\int_t^T\phi^{\textrm{tr}}_sd\widehat{W}^0_s\Big),\qquad  t\in [0,T].
\end{equation}
This permits to describe $\pi^u_\cdot(X)$ and the associated hedging strategy $\bar{\phi}$ in the next theorem  by the solution to the classical BSDE 
\begin{equation}\label{eq:BSDEsforPiuunc}
-dY_t= f(t,Z_t)dt - Z^{\textrm{tr}}_t dW^{0}_t,\ t\le T\quad \textrm{and}\quad  Y_T=X,
\end{equation}
with generator $f(t,\omega,z):=\sup_{\theta\in\Theta}f^\theta(t,\omega, z)$, $z\in\R^n$, $(t,\omega)\in [0,T]\times\Omega$, with $f^\theta$ given by (\ref{eq:generatorfTheta}). The theorem moreover identifies by $\bar{\theta}$ the worst-case model $P^{\bar{\theta}}\in \mathcal{R}$. 
\begin{theorem}\label{thm:SolveUncertainValuationpb}
	Assume (\ref{eq:AssumptionOptimizationSimplify}) and (\ref{eq:AssumptionOnAlphaPrimeUncertainty}) hold. For $X\in L^2(P_0)$, let $(Y,Z)$ be the standard solution to the
	BSDE (\ref{eq:BSDEsforPiuunc}). Then there exists a unique predictable selection  $\bar{\theta}:=\bar{\theta}(X)$ of $\Theta$ satisfying $f^{\bar{\theta}}(t,Z_t) = \esssup_{\theta\in\Theta}f^\theta(t,Z_t),\ P\otimes dt$-a.e.\ 
	such that for all $t\in [0,T]$
\begin{equation}\label{eq:StandardGDBoundandHedging}
    		\pi^{u}_t(X) =\rho^{\bar{\theta}}_t\Big(X-\int_t^T{\bar{\phi}_s}^{\textrm{tr}}d\widehat{W}^0_s\Big)= \pi^{u,\bar{\theta}}_t(X) = Y_t
\end{equation}
 holds with $\bar{\phi}=(\bar{\phi}_t)_{t\in[0,T]}:=\bar{\phi}^{\bar{\theta}}(X)\in\Phi$ given $P\otimes dt$-a.e.\ by 
\begin{equation}\label{eq:PhiBarExpressionGeneralTheta}
 \bar{\phi}_t = \Pi_t(Z_t)+\big(\Pi^\bot_t(Z_t)^{\textrm{tr}}A^{-1}_t\Pi^\bot_t(Z_t)\big)^{1/2}\big(h^2_t-{\xi^{\bar{\theta}}_t}^{\textrm{tr}}A_t \xi^{\bar{\theta}}_t\big)^{-1/2}A_t\xi^{\bar{\theta}}_t.
\end{equation}
The tracking error $R^{\bar{\phi}}(X):=\pi^{u}_\cdot(X)-\pi^{u}_0(X)  - \bar{\phi}\cdot\widehat{W}^0$ of the strategy $\bar{\phi}$ is a supermartingale under any $Q$ in $\mathcal{P}^{\textrm{ngd}}(P^{\bar{\theta}})$, and 
is a martingale under $\bar{Q}$ in $\mathcal{P}^{\textrm{ngd}}(P^{\bar{\theta}})$ given by $d\bar{Q}/dP_0 = \mathcal{E}(\bar{\lambda}\cdot W^0)$ with 
\begin{equation*}
\bar{\lambda} := h\big((Z-\bar{\phi})A^{-1}(Z-\bar{\phi})\big)^{-1/2} A^{-1}\left(Z-\bar{\phi}\right)+\bar{\theta}. 
\end{equation*}
\end{theorem}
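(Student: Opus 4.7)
My plan is to combine the classical BSDE comparison theorem with a measurable selection argument to identify $Y$ with the worst-case good-deal bound, and then to recover the hedging strategy and tracking-error properties by applying Proposition~\ref{pro:BSDEspiuNuandRhoNuPhi} in the worst-case model $P^{\bar{\theta}}$. First I would check that $(f,X)$ are standard parameters for the BSDE~(\ref{eq:BSDEsforPiuunc}). Since $\Theta$ satisfies Assumption~\ref{asp:BddCorr}, the processes $\xi^{\theta}=\xi^{0}+\Pi(\theta)$ are uniformly bounded in $\theta\in\Theta$; together with boundedness of $h$ and uniform ellipticity of $A$, each $f^{\theta}$ is Lipschitz in $z$ with a constant independent of $\theta$, so the pointwise supremum $f$ inherits a uniform Lipschitz constant and $f(\cdot,0)\in\h^{2}$. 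The $\Pred\otimes\mathcal{B}(\R^{n})$-measurability of $f$ and the existence of a predictable selection $\bar{\theta}\in\Theta$ with $f^{\bar{\theta}}(t,Z_t)=f(t,Z_t)$ both follow from the measurable maximum and selection theorems \citep[Thms.~2.K and 1.C]{Rockafellar}; uniqueness of $\bar{\theta}$ holds $P_0\otimes dt$-a.e.\ by strict concavity of $\theta\mapsto f^{\theta}(t,Z_t)$ on $\{\Pi^{\bot}_t(Z_t)\neq 0\}$ (cf.\ Lemma~\ref{lem:OptimizationGeneratorApplication}).

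The second step identifies $Y_t=\pi^{u}_{t}(X)=\pi^{u,\bar{\theta}}_{t}(X)$. By~(\ref{eq:DefinitionOfPiu}) and Proposition~\ref{pro:BSDEspiuNuandRhoNuPhi}, $\pi^{u}_{\cdot}(X)=\esssup_{\theta\in\Theta}Y^{\theta}$, where each $Y^{\theta}$ solves~(\ref{eq:BSDEPiuNu}). The BSDE comparison theorem applied to $f\ge f^{\theta}$ gives $Y\ge Y^{\theta}$ for every $\theta$, whence $Y\ge\pi^{u}_{\cdot}(X)$. Conversely the identity $f^{\bar{\theta}}(t,Z_t)=f(t,Z_t)$ shows that $(Y,Z)$ also solves~(\ref{eq:BSDEPiuNu}) with $\theta=\bar{\theta}$, so by uniqueness of standard BSDE solutions $Y=Y^{\bar{\theta}}=\pi^{u,\bar{\theta}}_{\cdot}(X)\le\pi^{u}_{\cdot}(X)$. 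This yields $\pi^{u}_{t}(X)=\pi^{u,\bar{\theta}}_{t}(X)=Y_t$ and shows that the essential supremum over $\Theta$ in~(\ref{eq:DefinitionOfPiu}) is attained at $\bar{\theta}$.

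For the hedge and the tracking error, applying Proposition~\ref{pro:BSDEspiuNuandRhoNuPhi} at $\theta=\bar{\theta}$ with $Z^{\bar{\theta}}=Z$ yields $\rho^{\bar{\theta}}_{t}\bigl(X-\int_{t}^{T}\bar{\phi}_{s}^{\textrm{tr}}\,d\widehat{W}^{0}_{s}\bigr)=\pi^{u,\bar{\theta}}_{t}(X)$ with $\bar{\phi}:=\bar{\phi}^{\bar{\theta}}\in\Phi$ matching~(\ref{eq:PhiBarExpressionGeneralTheta}), which proves the chain~(\ref{eq:StandardGDBoundandHedging}). The supermartingale property of $R^{\bar{\phi}}(X)$ under every $Q\in\mathcal{P}^{\textrm{ngd}}(P^{\bar{\theta}})$ then follows by the argument of Proposition~\ref{pro:RobustTrackingError}: writing the BSDE of $\rho^{\bar{\theta}}(X-\int\bar{\phi}^{\textrm{tr}}\,d\widehat{W}^{0})$ and changing measure to $Q$ via Girsanov, the resulting drift is pointwise non-positive because the kernel of $Q$ relative to $P^{\bar{\theta}}$ lies in $C^{0}$ and Lemma~\ref{lem:OptimizationGeneratorApplication} dominates its pairing with $Z-\bar{\phi}$ by the square-root term in the generator. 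For $\bar{Q}$, the kernel $\bar{\lambda}-\bar{\theta}$ coincides with the maximizer furnished by that lemma, so the drift vanishes and $R^{\bar{\phi}}(X)$ is a true $\bar{Q}$-martingale.

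The main obstacle is the saddle-point interchange implicit in the second step: although the worst-case bound is a priori a $\sup_{\theta}\inf_{\phi}$, we need the outer supremum to be attained by a single predictable $\bar{\theta}$ so that the inner $\bar{\phi}^{\bar{\theta}}$ from Proposition~\ref{pro:BSDEspiuNuandRhoNuPhi} actually serves as a hedge. A generic minmax theorem would demand continuity and compactness properties of the value function in $\theta$ that are not obvious in this non-Markovian setting; the BSDE-comparison route circumvents this, because the measurable maximum theorem together with the comparison principle produce $\bar{\theta}$ and the identification $Y=Y^{\bar{\theta}}$ simultaneously, after which Proposition~\ref{pro:BSDEspiuNuandRhoNuPhi} closes the argument.
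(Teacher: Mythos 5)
Your proposal is correct and follows essentially the same route as the paper: measurable maximum/selection for the predictable worst-case $\bar{\theta}$ (with uniqueness from strict concavity in the $\Pi(\theta)$-direction), the classical comparison theorem to identify $Y$ with $\esssup_{\theta}Y^{\theta}=\pi^{u}_\cdot(X)$ and with $Y^{\bar{\theta}}=\pi^{u,\bar{\theta}}_\cdot(X)$, Proposition~\ref{pro:BSDEspiuNuandRhoNuPhi} for the hedge $\bar{\phi}^{\bar{\theta}}$, and the Proposition~\ref{pro:RobustTrackingError}/Theorem~\ref{thm:FullHedgingProblemSolved}-type Girsanov argument for the supermartingale and martingale properties of the tracking error. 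Your closing remark is also apt: the minimax interchange is not needed here (it is the content of the later Theorem~\ref{thm:CharactOfSaddlePt}), since the comparison principle delivers attainment of the outer supremum directly.
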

\begin{proof}
Pointwise existence and uniqueness of $\bar{\theta}\in\Theta$ follow by the continuity and strict concavity of $f^{\theta}$ as a function of $\theta\in\R^n$, and the uniform boundedness of 
$\Theta$. Predictability of $\bar{\theta}$ follows by \citet{Rockafellar}. The claims (\ref{eq:StandardGDBoundandHedging}) and (\ref{eq:PhiBarExpressionGeneralTheta}) are corollaries of 
Proposition \ref{pro:BSDEspiuNuandRhoNuPhi}. The remaining claims are similar to those of Theorem~\ref{thm:FullHedgingProblemSolved}, hence their proof goes likewise,
making again use of Lemma~\ref{lem:OptimizationWithPhiSolved} \citep[instead of][Lem.6.1]{Becherer} and (\ref{eq:AssumptionOptimizationSimplify}) 
and (\ref{eq:AssumptionOnAlphaPrimeUncertainty}).
\end{proof}

The process $\bar{\phi}:=\bar{\phi}^{\bar{\theta}}$ in Theorem \ref{thm:SolveUncertainValuationpb} is the good-deal hedging strategy of $X$ for the worst-case model  $P^{\bar{\theta}}\in\mathcal{R}$ which yields that highest good-deal valuation with $\pi^u_\cdot(X)=\pi^{u,\bar{\theta}}_\cdot(X)$. The tracking error of $\bar{\phi} $ is therefore a supermartingale under any measure in $\mathcal{P}^{\textrm{ngd}}(P^{\bar{\theta}})$ (cf.\  Proposition \ref{pro:RobustTrackingError}), i.e.\  
$\bar{\phi}$ is ``at least mean-self-financing'' under any measure in $\mathcal{P}^{\textrm{ngd}}(P^{\bar{\theta}})$. However, it is not clear at this stage whether the supermartingale 
property of the tracking error of $\bar{\phi}$ holds simultaneously under all measures in $\mathcal{P}^{\textrm{ngd}}(P^{\theta})$ for all models $\mathcal{R}=\{P^\theta : \theta\in\Theta\}$.
We will show that this is the case, and that $\bar{\phi}$ and its associated valuation bound $\pi^u_\cdot(X)$ are indeed robust with respect to uncertainty. The idea is first to find an 
alternative bound $\pi^{u,*}_\cdot(\cdot)$ and an associated  strategy $\phi^*$ that satisfy the supermartingale property of the tracking error simultaneously under all measures in 
$\bigcup_{\theta\in\Theta}\mathcal{P}^{\textrm{ngd}}(P^{\theta})$ and are therefore robust. After this, we show that $\pi^{u,*}_\cdot(X)$ coincides with the worst-case bound $\pi^u_\cdot(X)$, 
and that the same holds for the hedging strategies $\bar{\phi}(X)$ and $\phi^*(X)$ for any contingent claim $X$. In general the good-deal bound $\pi^u_\cdot(X)$ is dominated by $\pi^{u,*}_\cdot(X)$, 
but thanks to a saddle point result (Theorem \ref{thm:CharactOfSaddlePt}) one can actually prove that the two bounds are identical. Exchanging the order between $\esssup$ and 
$\essinf$ in the expression (\ref{eq:SupInfNGDBoundNaive}) for $\pi^u_\cdot(X)$, we define for $X\in L^2(P_0)$ and $t\le T$ 
\begin{equation}\label{eq:InfSupNGDBoundRobust}
	\pi^{u,*}_t(X):= \essinf_{\phi\in \Phi}\ \esssup_{\theta\in \Theta} \rho^\theta_t\Big(X-\int_t^T\phi^{\textrm{tr}}_sd\widehat{W}^0_s\Big).
\end{equation}
From this it is clear that in general $\pi^{u,*}_t(X)\ge\pi^u_t(X)$, for all $X\in L^2(P_0)$. We will show that in fact the minimax identity holds in the sense that the expressions in 
(\ref{eq:SupInfNGDBoundNaive}) and (\ref{eq:InfSupNGDBoundRobust}) coincide, and that a saddle point exists, giving equality of $\pi^u_\cdot(X)$ and $\pi^{u,*}_\cdot(X)$. 
 To this end, we describe $\pi^{u,*}_\cdot(X)$ and $\phi^*$ in terms of the standard solution $(Y,Z)$ for the classical BSDE 
\begin{equation}\label{eq:BSDEsforPiBaru}
-dY_t= f^*(t,Z_t)dt - Z^{\textrm{tr}}_t dW^{0}_t,\quad t\le T\quad \textrm{and}\quad  Y_T=X,
\end{equation}
where the generator $f^*$ satisfies $P\otimes dt$-a.e.\ $f^*(t,z)=\essinf_{\phi\in\Phi}f^\phi(t,z)$ for all $z\in\R^n,$ with $f^\phi(\cdot,\cdot,z):=\sup_{\theta\in\Theta} f^{\phi,\theta}(\cdot,\cdot,z)$ for $f^{\phi,\theta}$ from (\ref{eq:generatorfphitheta}).
Indeed such a generator function $f^*$ can be defined at first $P\otimes dt$-a.e.\ for each $z\in\Q^n$, and by Lipschitz continuous extension then $P\otimes dt$-a.e.\ for all $z\in\R^n$.
We have 
\begin{equation}\label{eq:GeneratorfPhi}
f^\phi(t,z) =-{\xi^0_t}^\textrm{tr}\phi_t+\sup_{\theta\in\Theta}\theta_t^{\text{tr}}(z-\phi_t)+ h_t\big((z-\phi_t)^\textrm{tr}A^{-1}_t(z-\phi_t)\big)^{1/2},
\end{equation}
and we can identify the robust good-deal hedging strategy $\phi^*$ by
\begin{proposition}\label{pro:RobustValandHedging}
Assume (\ref{eq:AssumptionOptimizationSimplify}) and (\ref{eq:AssumptionOnAlphaPrimeUncertainty}) hold. For $X\in L^2(P_0)$, let $(Y,Z)$ be the standard solution to the BSDE (\ref{eq:BSDEsforPiBaru}). 
Then  there exists a unique $\phi^*\in\Phi$ satisfying $f^{\phi^*}(t,Z_t) = \essinf_{\phi\in\Phi}f^\phi(t,Z_t)\ P\otimes dt$-a.e.\ such that, for $ t\in [0,T]$,
\begin{equation}\label{eq:RobustGDBoundandHedging}
    		\pi^{u,*}_t(X) = \essinf _{\phi\in \Phi}\rho_t\Big(X-\int_t^T\phi_s^{\textrm{tr}}d\widehat{W}^0_s\Big)=\rho_t\Big(X-\int_t^T{\phi^*_s}^{\textrm{tr}}d\widehat{W}^0_s\Big) = Y_t.
\end{equation}
Moreover $R^{\phi^*}(X):=\pi^{u,*}_\cdot(X)-\pi^{u,*}_0(X)  - \phi^*\cdot\widehat{W}^0$ is a $Q$-supermartingale for all $Q\in\mathcal{P}^{\textrm{ngd}}$, and a $Q^*$-martingale for 
$Q^*\in \mathcal{P}^{\textrm{ngd}}$ with $dQ^*/dP_0 = \mathcal{E}(\lambda^*\cdot W^0)$, where
\begin{equation*}
\lambda^* = h\big((Z-\phi^*)^{\textrm{tr}}A^{-1}(Z-\phi^*)\big)^{-1/2}A^{-1}\left(Z-\phi^*\right)+\theta^*,
\end{equation*}
with $\theta^*:=\theta^*(\phi^*)\in\Theta$ satisfying $ {\theta^*_t}^{\text{tr}}(Z_t-\phi^*_t)=\esssup_{\theta\in\Theta}\theta_t^{\text{tr}}(Z_t-\phi^*_t)\ P\otimes dt$-a.e.\ such that $f^*(t,Z_t)=f^{\phi^*,\theta^*}(t,Z_t),\ P\otimes dt$-a.e..
\end{proposition}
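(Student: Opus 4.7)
The plan is to combine the BSDE representation of $\rho$ already available from Proposition~\ref{pro:BSDEspiuNuandRhoNuPhi} with a pointwise minimization on the generator level and a Girsanov-based drift estimate of the tracking error. First, I identify for each $\phi\in\Phi$ the process
\[
Y^\phi_t:=\rho_t\Big(X-\int_t^T\phi^{\mathrm{tr}}_s\,d\widehat{W}^0_s\Big)
\]
as the standard solution of the classical BSDE with terminal $X$ and generator $f^\phi$ of~\eqref{eq:GeneratorfPhi}: by Lemma~\ref{lem:UnionPngdQngd-mstable}, $\rho=\esssup_{\theta\in\Theta}\rho^\theta$, and a measurable-maximum argument analogous to Proposition~\ref{pro:RhoasBSDESolution}, applied to Proposition~\ref{pro:BSDEspiuNuandRhoNuPhi}, identifies the BSDE generator as $\sup_{\theta\in\Theta}f^{\phi,\theta}=f^\phi$.

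Second, I carry out the pointwise optimization $\phi\mapsto f^\phi(t,z)$ on $\mathrm{Im}\,\sigma^{\mathrm{tr}}_t$ for each $(t,\omega,z)$. The map is convex and continuous, and by~\eqref{eq:AssumptionOnAlphaPrimeUncertainty} the $A^{-1}$-seminorm term with factor $h\sqrt{\alpha'}$ strictly dominates both $\lvert\xi^0\rvert$ and the bounded support function of $\Theta$, yielding coercivity. Existence, uniqueness and $\Phi$-integrability of the minimizer $\phi^*$, together with predictability of the selection, then follow along the lines of Lemma~\ref{lem:OptimizationWithPhiSolved} (the support function of $\Theta$ playing the role of an additional linear term) combined with Rockafellar's measurable-maximum theorem applied to the $Z$-component of the standard solution $(Y,Z)$ of~\eqref{eq:BSDEsforPiBaru}. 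Since $f^*=f^{\phi^*}\le f^\phi$ pointwise by construction, the comparison theorem for classical BSDEs gives $Y=\essinf_{\phi\in\Phi}Y^\phi=Y^{\phi^*}$, which yields~\eqref{eq:RobustGDBoundandHedging}.

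Third, for the (super)martingale property I rewrite, using BSDE~\eqref{eq:BSDEsforPiBaru} and $dW^0=d\widehat{W}^0-\xi^0\,dt$,
\[
-dR^{\phi^*}_t=f^*(t,Z_t)\,dt+(\phi^*_t)^{\mathrm{tr}}\xi^0_t\,dt-(Z_t-\phi^*_t)^{\mathrm{tr}}\,d\widehat{W}^0_t .
\]
For $Q\in\mathcal{P}^{\mathrm{ngd}}$, Example~\ref{explQngd} decomposes its Girsanov kernel as $\lambda=\mu+\theta$ with $\mu\in C^0$ and $\theta\in\Theta$. Passing to the $Q$-Brownian motion $W^Q=W^0-\int_0^\cdot\lambda_s\,ds$ and invoking the $A$-Cauchy--Schwarz bound $\mu^{\mathrm{tr}}(Z-\phi^*)\le h\sqrt{(Z-\phi^*)^{\mathrm{tr}}A^{-1}(Z-\phi^*)}$ yields
\[
(Z-\phi^*)^{\mathrm{tr}}\lambda-(\phi^*)^{\mathrm{tr}}\xi^0 \le f^{\phi^*,\theta}(t,Z)\le f^{\phi^*}(t,Z)=f^*(t,Z),
\]
so the $Q$-drift of $R^{\phi^*}$ is non-positive; integrability upgrading this to a true supermartingale follows as in Proposition~\ref{pro:RobustTrackingError} via H\"older's inequality and boundedness of $\lambda$. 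For the explicit $\lambda^*=\mu^*+\theta^*$ appearing in the statement, $\mu^*$ is the unique maximizer on the $A$-ellipsoid furnished by Lemma~\ref{lem:OptimizationGeneratorApplication}, so both the $A$-Cauchy--Schwarz bound and the $\sup_\theta$ are attained simultaneously, making the drift vanish and giving the $Q^*$-martingale property.

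The main technical obstacle is step two: $f^\phi$ is not strictly convex in $\phi$ (neither the support function $\sup_\theta\theta^{\mathrm{tr}}(z-\phi)$ nor the seminorm $\sqrt{(z-\phi)^{\mathrm{tr}}A^{-1}(z-\phi)}$ is strictly convex along rays), the constraint $\phi\in\mathrm{Im}\,\sigma^{\mathrm{tr}}$ complicates the pointwise analysis, and the implicit dependence of $\theta^*$ on $\phi^*$ forces one to rely on the Kuhn--Tucker-style argument of Lemma~\ref{lem:OptimizationWithPhiSolved} together with the strict inequality~\eqref{eq:AssumptionOnAlphaPrimeUncertainty} to secure both uniqueness and predictability of $\phi^*$.
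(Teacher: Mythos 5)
Your proposal is correct and follows essentially the same route as the paper's proof: identify $\rho_t\big(X-\int_t^T\phi^{\mathrm{tr}}_s d\widehat{W}^0_s\big)$ with the standard solution of the classical BSDE with generator $f^\phi=\sup_{\theta\in\Theta}f^{\phi,\theta}$, minimize the generator pointwise over $\mathrm{Im}\ \sigma^{\mathrm{tr}}$ using convexity, coercivity and strict convexity under (\ref{eq:AssumptionOnAlphaPrimeUncertainty}) together with measurable selection (the paper invokes Ekeland--Temam for existence and argues strict convexity on $\{\Pi^\bot(\phi)=0\}$ directly, which resolves the "strict convexity along rays" worry you raise), apply the comparison theorem, and control the $Q$-drift of the tracking error via the $A$-Cauchy--Schwarz bound and the decomposition $\lambda=\mu+\theta$. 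The only blemish is a bookkeeping slip in your drift display (the martingale part should be written against $dW^0$ rather than $d\widehat{W}^0$, the drift being $f^*(t,Z_t)+{\phi^*_t}^{\mathrm{tr}}\xi^0_t$), but the inequality you actually use afterwards, $(Z_t-\phi^*_t)^{\mathrm{tr}}\lambda_t-{\phi^*_t}^{\mathrm{tr}}\xi^0_t\le f^*(t,Z_t)$, is the correct one.
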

\begin{proof}
By \cite[][Thms.2.K and 1.C]{Rockafellar} one has for any $\phi\in\Phi$ and $z\in\R^n$ that there exists $\theta^*(\phi)=\theta^*(\phi,z)\in\Theta$ such that ${\theta^*_t(\phi)}^\textrm{tr}(z-\phi_t) = \sup_{\theta\in \Theta}\theta_t^\textrm{tr}(z-\phi_t)$
for all $(t,\omega)$ and hence $f^{\phi}(t,\omega,z)=f^{\phi,\theta^*(\phi)}(t,\omega,z)$ for all $(t,\omega)$. Consider the convex continuous function 
$$\R^n\ni \phi\mapsto F(\phi):=  -{\xi^0}^{\textrm{tr}}\phi+\sup_{\theta\in\Theta}\theta^{\text{tr}}(z-\phi)+h\big((z-\phi)^{\textrm{tr}}A^{-1} (z-\phi)\big)^{1/2},$$ for constant 
$h,\phi$,$z,\xi^0$,$\sigma$ and $A$ satisfying the notations of Lemma \ref{lem:OptimizationWithPhiSolved} and for a compact set $\Theta\subset \R^n$ containing the origin. The function 
$F$ is also coercive on $\mathrm{Im}\ \sigma^{\text{tr}}$, i.e.\  $F(\phi)\to +\infty$ as $\abs{\phi}\to +\infty$ for $\Pi^\bot(\phi)=0$ because $\abs{\xi^0}<h\sqrt{\alpha'}$ and 
$\sup_{\theta\in\Theta}\theta^{\text{tr}}(z-\phi)\ge0$. Hence existence of $\phi^*\in\Phi$ follows from \citep[][Ch.II, Prop.1.2]{EkelandTemam}. Uniqueness of $\phi^*$ follows from 
the fact that $F$ is strictly convex over $\left\lbrace\Pi^\bot(\phi)=0\right\rbrace$ if $\Pi^\bot(z)\neq 0$ and strictly convex at $\phi=z$ if $\Pi^\bot(z)= 0$ because (\ref{eq:AssumptionOnAlphaPrimeUncertainty}) 
holds. Finally, predictability of $\phi^*$ follows from \citep[][Thm.2.K]{Rockafellar} via Part \ref{ImAndKerCorrespondencesPredictable} of Proposition \ref{pro:ImAndKerCorrespondencesPredictableCharactOfMe}. 

From Proposition \ref{pro:BSDEspiuNuandRhoNuPhi}, for $\phi\in\Phi$ and $\theta\in \Theta$, $Y^{\phi,\theta}=\rho^\theta_\cdot\big(X-\int_\cdot^T\phi_s^{\textrm{tr}}d\widehat{W}^0_s\big)$
is the $Y$-component of the solution to the classical BSDE (\ref{eq:BSDEYPhiNu}). As a consequence for every $\phi\in \Phi$ it holds $\esssup_{\theta\in \Theta} f^{\phi,\theta}(t,Z_t) = f^{\phi,\theta^*(\phi)}(t,Z_t)=f^\phi(t,Z_t),\ P\otimes dt$-a.e..
The generators $f^\phi$ are standard, so that by the comparison theorem for classical BSDEs,  $(Y^\phi,Z^\phi)$ with $Y^\phi_t:=\esssup_{\theta\in \Theta}Y^{\phi,\theta}_t$ is the 
standard solution to the BSDEs (under $P_0$) with parameters $(f^\phi,X)$, for $\phi\in \Phi$. The generator $f^*$ is also standard because $f^*(t,z)=f^{\phi^*,\theta^*(\phi^*)}(t,z)=\essinf_{\phi\in\Phi}f^\phi(t,z)$ for all $z$, $P\otimes dt$-a.e.. 
Now the comparison theorem yields (\ref{eq:RobustGDBoundandHedging}) from (\ref{eq:InfSupNGDBoundRobust}).

The supermartingale property of $R^{\phi^*}(X)$ under any $Q\in\mathcal{P}^{\textrm{ngd}}$ follows from (\ref{eq:RobustGDBoundandHedging}) using arguments as in the proof of Proposition \ref{pro:RobustTrackingError}.
The martingale property under $Q^*$ is proved similarly, noting that the finite variation part under $Q^*$ of $R^{\phi^*}(X)$ vanishes since $f^*(t,Z_t)=f^{\phi^*,\theta^*}(t,Z_t)\ P\otimes dt$-a.e..
\end{proof}
Proposition~\ref{pro:RobustValandHedging} shows that  the tracking error of the hedging strategy $\phi^*$ with respect to valuation according to $\pi^{u,*}_\cdot(X)$ has the supermartingale property
simultaneously  under all measures in $\mathcal{P}^{\textrm{ngd}}=\bigcup_{\theta\in\Theta}\mathcal{P}^{\textrm{ngd}}(P^\theta)$.
The next theorem shows that a minimax identity holds: the sup-inf representation of $\pi^u_\cdot(\cdot)$ in (\ref{eq:SupInfNGDBoundNaive}) is equal to the inf-sup 
representation of $\pi^{u,*}_\cdot(\cdot)$ in (\ref{eq:InfSupNGDBoundRobust}); see also (\ref{eq:MinimaxGen}). 
Moreover, the good-deal hedging strategy $\bar \phi$ with respect to the worst-case model (given by $\bar \theta$) that gives the highest good-deal valuation bound  $\pi^u_\cdot(\cdot)$, is identical with the robust good-deal hedging strategy $\phi^*$
from Proposition~\ref{pro:RobustValandHedging}.

\begin{theorem}\label{thm:CharactOfSaddlePt}
  Assume (\ref{eq:AssumptionOptimizationSimplify}) and (\ref{eq:AssumptionOnAlphaPrimeUncertainty}) hold. For $X\in L^2(P_0)$, let $(Y,Z)$ be standard solution of the BSDE (\ref{eq:BSDEsforPiBaru}). Then 
  $P\otimes dt$-almost everywhere holds
  \begin{equation}\label{eq:MinimaxGen}
f^{\phi^*,\theta^*}(t,Z_t)=\essinf_{\phi\in\Phi}\ \esssup_{\theta\in\Theta}f^{\phi,\theta}(t,Z_t) = \esssup_{\theta\in\Theta}\ \essinf_{\phi\in\Phi}f^{\phi,\theta}(t,Z_t) =f^{\bar{\phi},\bar{\theta}}(t,Z_t),
\end{equation}
with $(\bar{\phi},\bar{\theta})$, $(\phi^*,\theta^*)$ from Theorem \ref{thm:SolveUncertainValuationpb} and Proposition \ref{pro:RobustValandHedging}. Moreover $(Y,Z)$ coincides 
with the standard solution to the BSDE (\ref{eq:BSDEsforPiuunc}) and 
\begin{equation}\label{eq:SaddlPtStard=RobustPhiStar=PhiBar}
\pi^u_t(X)=\pi^{u,*}_t(X)=Y_t,\ t\in[0,T]\quad \textrm{and}\quad \phi^*_t(X)=\bar{\phi}_t(X),\ P\otimes dt\text{-a.e.}.
\end{equation}
\end{theorem}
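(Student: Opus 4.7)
The strategy is to establish a pointwise minimax identity for the parametric function $(\phi,\theta)\mapsto f^{\phi,\theta}(t,\omega,z)$; once this is in hand, the generators of \eqref{eq:BSDEsforPiuunc} and \eqref{eq:BSDEsforPiBaru} coincide $P\otimes dt$-a.e., and uniqueness for classical BSDEs together with uniqueness of the strictly convex pointwise minimizer in $\phi$ yields both $\pi^u=\pi^{u,*}$ and $\phi^*=\bar\phi$.

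Fix $(t,\omega,z)$ and consider $(\phi,\theta)\mapsto f^{\phi,\theta}(t,\omega,z)$ on $\mathrm{Im}\,\sigma_t^{\mathrm{tr}}(\omega)\times\Theta_t(\omega)$. The map is linear (hence concave) in $\theta$ over the compact convex set $\Theta_t(\omega)$, and convex in $\phi$ with a strictly convex component $h_t((z-\phi)^{\mathrm{tr}}A_t^{-1}(z-\phi))^{1/2}$ thanks to positive-definiteness of $A^{-1}$ (Lemma~\ref{lem:qPrimePositiveDefinitePtwise}). By \eqref{eq:AssumptionOnAlphaPrimeUncertainty} and $\sup_{\theta\in\Theta}\theta^{\mathrm{tr}}(z-\phi)\ge 0$, the map $\phi\mapsto\sup_\theta f^{\phi,\theta}(t,\omega,z)$ is coercive on $\mathrm{Im}\,\sigma_t^{\mathrm{tr}}(\omega)$, as already exploited in the proof of Proposition~\ref{pro:RobustValandHedging}. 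This localizes the $\phi$-infimum to a compact set, and Sion's minimax theorem yields
\[
\inf_{\phi\in\mathrm{Im}\,\sigma_t^{\mathrm{tr}}}\sup_{\theta\in\Theta_t}f^{\phi,\theta}(t,\omega,z)
=\sup_{\theta\in\Theta_t}\inf_{\phi\in\mathrm{Im}\,\sigma_t^{\mathrm{tr}}}f^{\phi,\theta}(t,\omega,z),
\]
together with a saddle point $(\phi^\dagger,\theta^\dagger)$, unique in its $\phi$-component. Combining Propositions~\ref{pro:BSDEspiuNuandRhoNuPhi} and~\ref{pro:RobustValandHedging}, the pointwise identity $f^\theta(t,z)=\inf_\phi f^{\phi,\theta}(t,z)$ and $f^\phi(t,z)=\sup_\theta f^{\phi,\theta}(t,z)$ show that the left-hand side above equals $f^*(t,\omega,z)$ and the right-hand side equals $f(t,\omega,z)$.

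Since both $f$ and $f^*$ are Lipschitz in $z$ and predictable, the pointwise equality first holds along rationals $z\in\Q^n$ outside a $P\otimes dt$-null set, and then extends to all $z\in\R^n$ by continuity. Hence BSDEs \eqref{eq:BSDEsforPiuunc} and \eqref{eq:BSDEsforPiBaru} share the same standard parameters and, by uniqueness of classical BSDE solutions, have identical standard solutions $(Y,Z)$. Applying Theorem~\ref{thm:SolveUncertainValuationpb} and Proposition~\ref{pro:RobustValandHedging} then gives $\pi^u_t(X)=Y_t=\pi^{u,*}_t(X)$ and the equality \eqref{eq:MinimaxGen} evaluated at $z=Z_t$. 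With the common $Z$, the explicit formula \eqref{eq:PhiBarExpressionGeneralTheta} for $\bar\phi_t$ and the characterization of $\phi^*_t$ in Proposition~\ref{pro:RobustValandHedging} both identify the (unique, by strict convexity) inner minimizer of $\phi\mapsto f^{\phi,\theta^\dagger}(t,Z_t)$, forcing $\phi^*_t=\bar\phi_t$ $P\otimes dt$-a.e., completing \eqref{eq:SaddlPtStard=RobustPhiStar=PhiBar}.

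The main obstacle is the pointwise minimax identity: because $\mathrm{Im}\,\sigma^{\mathrm{tr}}$ is not compact, Sion's theorem cannot be invoked off-the-shelf. The standing condition \eqref{eq:AssumptionOnAlphaPrimeUncertainty} is crucial here, supplying the coercivity needed to localize the infimum to a compact set before appealing to minimax. Measurability of the selections $\phi^\dagger,\theta^\dagger$, required for the $P\otimes dt$-a.e.\ lift from the pointwise identity, is handled via the Rockafellar measurable selection arguments already used in Theorem~\ref{thm:SolveUncertainValuationpb} and Proposition~\ref{pro:RobustValandHedging}.
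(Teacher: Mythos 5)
Your proof is correct and follows essentially the same route as the paper: a pointwise minimax identity for $(\phi,\theta)\mapsto f^{\phi,\theta}$ (the paper's Lemma~\ref{lem:MinimaxPbSolved}, via \citet[Ch.VI, Prop.2.3]{EkelandTemam}, which likewise rests on compactness of $\Theta$ and coercivity in $\phi$ from (\ref{eq:AssumptionOnAlphaPrimeUncertainty})), identification of the generators $f$ and $f^*$ at the two saddle values, uniqueness of classical BSDE solutions, and strict convexity in $\phi$ to force $\phi^*=\bar{\phi}$. Your last step tacitly uses the rectangularity of the saddle-point set (so that $\bar\phi$ and $\phi^*$ both minimize $f^{\cdot,\theta^\dagger}(t,Z_t)$), which is exactly what the paper makes explicit by citing \citet[Ch.VI, Prop.1.5]{EkelandTemam}.
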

\begin{proof}
Let $X\in L^2(P_0)$. By an application of Lemma \ref{lem:MinimaxPbSolved}, the generator $f^{\phi,\theta}$ of the BSDE (\ref{eq:BSDEYPhiNu}) for $\theta\in\Theta$ and $\phi\in\Phi$ 
satisfy the minimax relation (\ref{eq:MinimaxGen}). By Theorem \ref{thm:SolveUncertainValuationpb} and Proposition \ref{pro:RobustValandHedging} it holds $f(t,Z_t) = f^{\bar{\phi},\bar{\theta}}(t,Z_t)$ 
and $f^*(t,Z_t) = f^{\phi^*,\theta^*}(t,Z_t)$, $P\otimes dt$-a.e., for $f,f^*$ respectively generators of the BSDEs (\ref{eq:BSDEsforPiuunc}), (\ref{eq:BSDEsforPiBaru}). Also, $\pi^u_t(X)=\pi^{u,*}_t(X)=Y_t,\ t\in[0,T],$ 
since by uniqueness of BSDE solutions $(Y,Z)$ also solves the BSDE (\ref{eq:BSDEsforPiuunc}). Hence $(\bar{\phi},\bar{\theta})$ and $(\phi^*,\theta^*)$ are both saddle points of the function 
$(\phi_t,\theta_t)\mapsto f^{\phi,\theta}(t,Z_t)$. Now for any $\theta\in\Theta$ and $z\in\R^n$, the function $\phi\mapsto F(\phi,\theta):= \theta^\textrm{tr}(z-\phi)-{\xi^0}^\textrm{tr}\phi + h\big((z-\phi)^\textrm{tr}A^{-1}(z-\phi)\big)^{1/2}$ 
is strictly convex over $\{\Pi^\bot(\phi)=0\}$ if $\Pi^\bot(z)\neq 0$, and strictly convex at $\phi=z$ if $\Pi^\bot(z)= 0$, since $\abs{\xi^\theta}<h\sqrt{\alpha'}$. 
\citet[][Ch.VI, Prop.1.5]{EkelandTemam} implies that the $\phi$-components of the saddle points are identical, yielding $\bar{\phi}=\phi^*$.   
\end{proof}

\subsection{The impact of model uncertainty on robust good-deal hedging}\label{subsec:ImpactUncertaintyGDHedging}
In the framework of Section~\ref{sec:GDValandHedgingUnderUncertaintyEllipsoids}, results have so far been stated for an arbitrary standard correspondence $\Theta$ without further structural 
assumptions, and ellipsoidal correspondences were only assumed for the no-good-deal restrictions $C^\theta$, $\theta\in\Theta$. Recall (cf.\ Theorem \ref{thm:FullHedgingProblemSolved} 
and subsequent remarks) that in the absence of uncertainty the good-deal hedging strategy contains a speculative component in the direction of the market price of risk. This already 
indicates that under uncertainty one should expect to see relevant differences by a robust approach to hedging. 
We investigate the effect of uncertainty  (solely) about the market 
prices of risk $\xi^\theta$ on robust good-deal hedging,  assuming in addition (note that $\xi^\theta \in \mathrm{Im }\,\sigma^{\text{tr}}$ is natural) that for all $(t,\omega)\in[0,T]\times \Omega$, 
the set  $\Theta_t(\omega)$ is a subset of $\mathrm{Im\ }\sigma^{\text{tr}}_t(\omega)$, i.e.\ that 

\begin{equation}\label{eq:ThetaEllipsoidal}
\Theta_t(\omega) =\Theta^0_t(\omega)\cap \mathrm{Im\ }\sigma^{\text{tr}}_t(\omega)
\end{equation}
holds for some standard correspondence $\Theta^0$ with $0\in\Theta^0$ satisfying the uniform boundedness Assumption~\ref{asp:BddCorr}. With (\ref{eq:ThetaEllipsoidal}), one clearly has 
$\Pi^\bot(\theta)=0$ for all $\theta\in\Theta$, which implies that $\xi^\theta=\xi^0+\theta$ for all $\theta\in\Theta$. This leads to the following simplified expressions of the BSDE generators $f^{\phi,\theta}$, $f^\theta$: 
\begin{align*}
	f^{\phi,\theta}(t,z)&=\theta_t^\textrm{tr}(\Pi_t(z)-\phi_t)-{\xi^0_t}^\textrm{tr}\phi_t + h_t\big((z-\phi_t)^\textrm{tr}A^{-1}_t(z-\phi_t)\big)^{1/2},\\
f^\theta(t,z)&=-{\xi^0_t}^{\textrm{tr}}\Pi_t(z) + \big(h_t^2-{\xi_t^{\theta}}^{\textrm{tr}}A_t\xi_t^{\theta}\big)^{1/2}\big({\Pi_t^\bot(z)}^{\textrm{tr}}A_t^{-1} \Pi_t^\bot(z)\big)^{1/2}.
\end{align*}
As a consequence, the process $\bar{\theta}=\bar{\theta}(X)$ does actually not depend on the contingent claim $X\in L^2(P_0)$ under consideration, and solves the minimization problem 
\begin{equation}\label{eq:ProjectionMinNormnu}
         	{\xi^{\bar{\theta}}_t}^{\textrm{ tr}}A_t\xi^{\bar{\theta}}_t=  \min_{\theta\in \Theta}{\xi^\theta_t}^{\textrm{tr}}A_t\xi^\theta_t,\quad t\in[0,T].
\end{equation}
In addition in this case, one has $\mathcal{Q}^{\text{ngd}}(P^{\bar{\theta}}) = \bigcup_{\theta\in\Theta}\mathcal{Q}^{\text{ngd}}(P^\theta)=\mathcal{Q}^{\text{ngd}}.$ To obtain even more 
explicit results one may assume e.g.\  ellipsoidal uncertainty  
\begin{equation}\label{eq:ellipsoidalTheta0}
\Theta^0_t(\omega):=\left\lbrace x\in \R^n\ \lvert\ x^{\textrm{tr}}B_t(\omega)x\le \delta^2_t(\omega)\right\rbrace \quad \text{for all } (t,\omega)\in[0,T]\times \Omega, 
\end{equation}
with $\delta$ being a positive bounded and predictable process, and $B$ being a uniformly elliptic and predictable matrix-valued process, satisfying the separability condition (\ref{eq:AssumptionOptimizationSimplify}) 
with respect to $\sigma$. Clearly $f^\phi(t,Z_t)$ from (\ref{eq:GeneratorfPhi}) in this case is $P\otimes dt$-a.e.\ equal to
\begin{equation*}
-{\xi^0_t}^{\textrm{tr}}\phi_t+\delta_t\big((\Pi_t(Z_t)-\phi_t)^{\textrm{tr}}B_t^{-1} (\Pi_t(Z_t)-\phi_t)\big)^{1/2}+h_t\big((Z_t-\phi_t)^{\textrm{tr}}A_t^{-1} (Z_t-\phi_t)\big)^{1/2}.	
\end{equation*}
In terms of $\phi^*$ and the solution $(Y,Z)$ to the BSDE (\ref{eq:BSDEsforPiBaru}), the process $\theta^*=\theta^*(\phi^*)$ of Proposition \ref{pro:RobustValandHedging} is given by 
\begin{equation}\label{eq:ThetaStar}
 \theta^*_t(X)=\delta_t\big((\Pi_t(Z_t)-\phi^*_t)^{\textrm{tr}}B^{-1}_t(\Pi_t(Z_t)-\phi^*_t)\big)^{-1/2}B^{-1}_t(\Pi_t(Z_t)-\phi^*_t).
\end{equation}

\begin{remark}\label{Rem:ellipUnc}
Let us recall Remark~\ref{Rem:C0plusTheta}~b). In the present context of Section~\ref{subsec:ImpactUncertaintyGDHedging}
with  constraints of ellipsoidal type for good-deals (\ref{eq:ellipsoidalC0}) and for model uncertainty (\ref{eq:ellipsoidalTheta0}),
results as explicit as in Section~\ref{subsec:Example1withEllipsoids} can be obtained in particular cases, as elaborated subsequently, but not in general.
Indeed, using $\xi^{\theta}=\xi^0+\theta$ for $\theta\in \Theta$,  to find the minimizer $\bar{\theta}$ (the worst-case) in (\ref{eq:ProjectionMinNormnu}) requires to compute the projection of $-\xi_t^0$ onto the ellipsoid  $\Theta_t$
 with respect to the norm induced by the
matrix $A_t$. In the radial case $A\equiv Id_{\R^n}$ the projection is Euclidian. 
While there is no closed formula for the projection in general,  the solution is described by a parametric formula  in terms of a
Lagrangian multiplier that solves a 1-dimensional equation, and it can be  computed by efficient algorithms \citep[see][]{Kiseliov94}, 
which is relevant if it may be required frequently (as in Monte Carlo simulation, cf.\ Section~\ref{sec:MC}).
\end{remark}

It is instructive to look at the special case where in addition the matrices $A$ and $B$ are related through $B=A/r$ for some scalar $r>0$; in other words, $B$ basically equals $A$  up to a change of $\delta$
to $\sqrt{r}\delta$. In this case (\ref{eq:ProjectionMinNormnu}) is solved by
\begin{equation}\label{eq:PhistatAequalBCaseA}
\bar{\theta}_t=-\xi^0_tI_{\left\lbrace{\xi^0_t}^{\textrm{tr}}A_t\xi^0_t\le r\delta^2_t\right\rbrace}-\frac{\sqrt{r}\delta_t}{({\xi^0_t}^{\textrm{tr}}A_t\xi^0_t)^{1/2}}\xi^0_tI_{\left\lbrace{\xi^0_t}^{\textrm{tr}}A_t\xi^0_t> r\delta^2_t\right\rbrace},\quad t\in[0,T], 
\end{equation}
and replacing $\phi^*=\bar{\phi}$ in the formula of $\theta^*$ in (\ref{eq:ThetaStar}) by its expression from (\ref{eq:PhiBarExpressionGeneralTheta}) in terms of $\bar{\theta}$ one obtains $\theta^*=\bar{\theta}$.
Note that (\ref{eq:PhistatAequalBCaseA}) implies that $\bar{\theta}^{\text{tr}}A\bar{\theta}$ is equal to ${\xi^0}^{\textrm{tr}}A\xi^0$ on $\{{\xi^0}^{\textrm{tr}}A\xi^0\le r\delta^2\}$ and equal  to 
$r\delta^2$ on $\{{\xi^0}^{\textrm{tr}}A\xi^0> r\delta^2\}$. In other words, the worst-case Girsanov kernel $-\bar{\theta}$ is equal to the market price of risk $\xi^0$ of the center $P_0$ of 
the confidence set $\mathcal{R}$ of reference measures, being truncated outside a suitable neighborhood $\{{\xi^0}^{\textrm{tr}}A\xi^0\le r\delta^2\}$.

To obtain an intuition about the impact that model uncertainty may have on robust good-deal hedging, let us look at the behavior of the worst-case Girsanov 
kernel $\bar{\theta}=\theta^*$ obtained in (\ref{eq:PhistatAequalBCaseA}) and the hedging strategy $\bar{\phi}=\phi^*$ in (\ref{eq:PhiBarExpressionGeneralTheta}) for varying scaling 
constant $r$: As $r$ becomes large, the worst-case Girsanov kernel $-\bar{\theta}$ becomes close to the market price of risk $\xi^0$ and $\phi^*=\bar{\phi}$ close to $\Pi(Z)$. This 
shows that as uncertainty becomes overwhelming, the robust good-deal hedging strategy ceases to comprise a speculative component in the direction of the market price of risk. In such a 
situation one can show that the hedging strategy is the risk-minimizing  strategy under the worst-case no-good-deal measure in the worst-case model $P^{\bar{\theta}}$. More precisely, 
for an arbitrary shape of the correspondence $\Theta^0$, if uncertainty is big enough for the confidence set $\mathcal{R}$ of reference measures to contain some risk neutral pricing measure 
from $\mathcal{M}^e$, then robust good-deal hedging does not comprise a speculative component and the holdings  $\phi^*$ of a hedging strategy in risky assets coincide with those of the globally risk-minimizing  strategy by 
\citet{FollmerSondermann} \citep[cf.][Sect.2]{Schweizer} under the reference measure for the worst-case valuation (of the claim). In this sense, the  eventually non-speculative nature of the robust 
good-deal hedging strategy under (large) uncertainty offers new theoretical support for the quadratic hedging objective
of risk minimization, which may be criticized for giving equal weighting to upside and downside risk. 
More broadly, it gives support to a common perception \citep[cf.\ e.g.][]{LiouiPoncet} that speculative objectives should be avoided in 
hedging, in addition to practical considerations like simplification of marking-to-market (uses risk neutral valuation). 
To make this precise, let us consider the classical BSDE
\begin{equation}\label{eq:BSDERiskMin}
 -dY_t= \big(-{\xi^0_t}^{\textrm{tr}}\Pi_t(Z_t) + h_t\big({\Pi_t^\bot(Z_t)}^{\textrm{tr}}A_t^{-1} \Pi_t^\bot(Z_t)\big)^{1/2}\big)dt - Z^{\textrm{tr}}_t dW^{0}_t,
\end{equation}
for $t\in [0,T]$ with $Y_T=X$. First we prove the following 
\begin{proposition}\label{pro:SaddPtCondEquivalence}
 Assume (\ref{eq:AssumptionOptimizationSimplify}) and (\ref{eq:AssumptionOnAlphaPrimeUncertainty}) hold and that $\Theta$ satisfies (\ref{eq:ThetaEllipsoidal}). For any $X\in L^2(P_0)$, 
 let $(Y^X,Z^X)$ denote the standard solution of the BSDE (\ref{eq:BSDERiskMin}). Then 
\begin{align}
 & \phantom{\text{xxxxx}\quad}
\label{eq:equalityPiPhi}
 \pi^{u}_\cdot(X)=Y^X\text{ and } \phi^*(X)=\Pi(Z^X)\text{ for all } X\in L^2(P_0)\\
&\label{eq:UncertaintyBigEnough}
 \text{ holds, if and only if }\qquad  \mathcal{R}\cap\mathcal{M}^e(S)\neq\emptyset.
\end{align}
\end{proposition}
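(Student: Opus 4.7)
The plan is to reduce the equivalence to a statement about whether the unique worst-case selection $\bar\theta$ from Theorem~\ref{thm:SolveUncertainValuationpb} satisfies $\xi^{\bar\theta}=0$. First, I would observe the reformulation of the right-hand condition~(\ref{eq:UncertaintyBigEnough}): by Proposition~\ref{pro:EMMsetsCoincide} and the characterization of $\mathcal{M}^e$, a measure $P^\theta\in\mathcal{R}$ lies in $\mathcal{M}^e$ iff $\xi^\theta=0$, which, using $\xi^\theta=\xi^0+\theta$ from (\ref{eq:NiceConditions})--(\ref{eq:ThetaEllipsoidal}), is equivalent to $-\xi^0$ being a predictable selection of $\Theta$. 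Second, under (\ref{eq:ThetaEllipsoidal}) the generators $f^\theta$ in (\ref{eq:generatorfTheta}) simplify (since $\Pi^\bot(\theta)=0$), so the worst-case $\bar\theta$ solves the $X$-independent problem (\ref{eq:ProjectionMinNormnu}); the resulting generator of the BSDE (\ref{eq:BSDEsforPiuunc}) from Theorem~\ref{thm:SolveUncertainValuationpb} reads
\[
f(t,z)= -{\xi^0_t}^{\text{tr}}\Pi_t(z) + \sqrt{h_t^2-{\xi^{\bar\theta}_t}^{\text{tr}}A_t\xi^{\bar\theta}_t}\,\sqrt{\Pi_t^\bot(z)^{\text{tr}}A_t^{-1}\Pi_t^\bot(z)},
\]
while the speculative component of $\bar\phi$ in (\ref{eq:PhiBarExpressionGeneralTheta}) carries the factor $A_t\xi^{\bar\theta}_t$.

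For sufficiency, assume $-\xi^0\in\Theta$. Then $\bar\theta:=-\xi^0$ attains the minimum $0$ in (\ref{eq:ProjectionMinNormnu}), so $\xi^{\bar\theta}=0$. Substituting this into $f$ yields exactly the generator of (\ref{eq:BSDERiskMin}), hence $\pi^{u,\bar\theta}_\cdot(X)=Y^X$ by uniqueness of classical BSDE solutions; and $A_t\xi^{\bar\theta}_t=0$ makes the speculative term in (\ref{eq:PhiBarExpressionGeneralTheta}) vanish, so $\bar\phi(X)=\Pi(Z^X)$. Theorem~\ref{thm:CharactOfSaddlePt} then gives $\pi^u_\cdot(X)=\pi^{u,\bar\theta}_\cdot(X)=Y^X$ and $\phi^*(X)=\bar\phi(X)=\Pi(Z^X)$ for every $X\in L^2(P_0)$.

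For necessity, suppose (\ref{eq:equalityPiPhi}) holds for all $X\in L^2(P_0)$. Since both $(Y^X,Z^X)$ (for BSDE (\ref{eq:BSDERiskMin})) and the BSDE~(\ref{eq:BSDEsforPiuunc})-solution $(\pi^u_\cdot(X),\widetilde{Z})$ are continuous semimartingales equal in their $Y$-components, uniqueness of the Doob--Meyer decomposition forces $\widetilde{Z}=Z^X$ and $f(t,Z^X_t)=g^0(t,Z^X_t)$ $P\otimes dt$-a.e., where $g^0$ denotes the generator of (\ref{eq:BSDERiskMin}). This reduces to
\[
\Bigl(\sqrt{h_t^2-{\xi^{\bar\theta}_t}^{\text{tr}}A_t\xi^{\bar\theta}_t}-h_t\Bigr)\sqrt{\Pi_t^\bot(Z^X_t)^{\text{tr}}A_t^{-1}\Pi_t^\bot(Z^X_t)}=0,
\]
so $\xi^{\bar\theta}_t=0$ on $\{\Pi^\bot_t(Z^X_t)\neq 0\}$ (equivalently, using the alternative hedging equality $\phi^*=\Pi(Z^X)$ and uniform ellipticity of $A$ in (\ref{eq:PhiBarExpressionGeneralTheta})). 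Varying $X$ over $L^2(P_0)$ and invoking the predictable representation property of $W^0$ under $P_0$ --- concretely, for any bounded predictable $\eta$ with values in $\mathrm{Ker}\,\sigma$ the claim $X=\int_0^T\eta_s^{\text{tr}}dW^0_s\in L^2(P_0)$ yields $Z^X=\eta$ --- one covers all of $\{\mathrm{Ker}\,\sigma\neq\{0\}\}$ up to null sets, forcing $\xi^{\bar\theta}=0$ $P\otimes dt$-a.e. Hence $\bar\theta=-\xi^0$ is a predictable selection of $\Theta$, so $\mathcal{R}\cap\mathcal{M}^e\neq\emptyset$ by the first paragraph.

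The main obstacle is the necessity direction, specifically the step of extracting pointwise equality of the generators (or vanishing of the speculative component) by varying $X$: one must ensure that $\{Z^X:X\in L^2(P_0)\}$ realizes every non-trivial direction of $\mathrm{Ker}\,\sigma$ on a set of full $P\otimes dt$-measure, which is where the Brownian representation property is essential and where a measurable-selection argument on $\mathrm{Ker}\,\sigma$ is required to ensure the family of test claims is rich enough.
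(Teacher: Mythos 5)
Your sufficiency direction is correct and is essentially the paper's argument: $\mathcal{R}\cap\mathcal{M}^e(S)\neq\emptyset$ yields some $\theta\in\Theta$ with $\xi^\theta=0$, hence $-\xi^0\in\Theta$ attains the minimum $0$ in (\ref{eq:ProjectionMinNormnu}), so $\bar\theta=-\xi^0$ and $\xi^{\bar\theta}=0$; the generator of (\ref{eq:BSDEsforPiuunc}) then collapses to that of (\ref{eq:BSDERiskMin}), and the speculative summand in (\ref{eq:PhiBarExpressionGeneralTheta}) vanishes, giving (\ref{eq:equalityPiPhi}) via Theorems~\ref{thm:SolveUncertainValuationpb} and~\ref{thm:CharactOfSaddlePt}.

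The necessity direction, however, has a genuine gap, and it sits exactly where you suspected. Your family of test claims does not deliver what you assert: for $X=\int_0^T\eta_s^{\text{tr}}dW^0_s$ with $\eta$ a bounded predictable selection of $\mathrm{Ker}\,\sigma$, the standard solution of (\ref{eq:BSDERiskMin}) does \emph{not} satisfy $Z^X=\eta$, because the generator does not vanish along $\eta$: writing $g^0$ for that generator, one has $g^0(t,\eta_t)=h_t\big(\eta_t^{\text{tr}}A_t^{-1}\eta_t\big)^{1/2}\ge 0$, and the candidate pair $\big(\int_0^\cdot\eta_s^{\text{tr}}dW^0_s+\int_\cdot^T g^0(s,\eta_s)ds,\;\eta\big)$ fails to be adapted unless the coefficients and $\eta$ are deterministic. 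Consequently the identity $f(t,Z^X_t)=g^0(t,Z^X_t)$ is only available along whichever processes $Z^X$ actually arise, and you have not shown that these realize every nontrivial direction of $\mathrm{Ker}\,\sigma$ on a set of full $P\otimes dt$-measure; the same objection applies to the variant via $\phi^*=\Pi(Z^X)$ in (\ref{eq:PhiBarExpressionGeneralTheta}). The paper closes precisely this gap by invoking the converse comparison theorem for BSDEs \citep[Thm.7.1 and Rmk.4.1]{Coquetetal}: if two BSDEs with standard Lipschitz generators produce the same $Y$-component for \emph{all} terminal conditions in $L^2$, then the generators agree everywhere. Its proof is a localization and limiting argument (terminal data of the form $z^{\text{tr}}(W_{t+\varepsilon}-W_t)$ with $\varepsilon\downarrow 0$), not an exact identification of $Z^X$. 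Once $f\equiv g^0$ is known, $\big(h_t^2-{\xi^{\bar\theta}_t}^{\text{tr}}A_t\xi^{\bar\theta}_t\big)^{1/2}=h_t$ forces $\xi^{\bar\theta}=0$, and your concluding step ($\bar\theta=-\xi^0\in\Theta$, hence $\mathcal{R}\cap\mathcal{M}^e(S)\neq\emptyset$) is then fine.
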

\begin{proof}
 Let $X\in L^2(P_0)$. Recall that for $\Theta$ defined in (\ref{eq:ThetaEllipsoidal}), $\bar{\theta}$ from Theorem \ref{thm:SolveUncertainValuationpb} does not vary with $X$ and solves 
 the minimization  problem (\ref{eq:ProjectionMinNormnu}). Now if (\ref{eq:UncertaintyBigEnough}) holds, then there exists $\theta\in\Theta$ such that $P^\theta\in\mathcal{R}\cap \mathcal{M}^e(S)\neq\emptyset$, 
 i.e.\  $\widehat{Q}^\theta = P^\theta$, and therefore $\xi^\theta=0$.  This implies that $\theta=\bar{\theta}=-\xi^0$ and hence $\xi^0\in \Theta$. As a consequence, the generator $f=f^{\bar{\theta}}$ 
 of the BSDE (\ref{eq:BSDEsforPiuunc}) coincides  with that of the BSDE (\ref{eq:BSDERiskMin}). By uniqueness of standard BSDE solutions follows $\pi^{u}_\cdot(X)=Y^X$. Now from 
 Theorem \ref{thm:SolveUncertainValuationpb} and Theorem \ref{thm:CharactOfSaddlePt} one obtains that $\phi^*=\bar{\phi}=\Pi(Z^X)$.
Conversely, suppose that (\ref{eq:equalityPiPhi}) holds. Then the generator $f=f^{\bar{\theta}}$ for the BSDE (\ref{eq:BSDEsforPiuunc}) and the one for (\ref{eq:BSDERiskMin}) are equal 
everywhere by \citet[Thm.7.1 and Rmk.4.1]{Coquetetal}. This implies (since $\Pi^\bot(\theta)=0$ for all $\theta\in\Theta$) that $\xi^{\bar{\theta}}=0$, i.e.\  
$\widehat{Q}^{\bar{\theta}}=P^{\bar{\theta}}$, and hence $\mathcal{R}\cap \mathcal{M}^e(S)\neq \emptyset$.
\end{proof}
Now we can make the previously described relation between risk minimization and good-deal hedging under large uncertainty  precise. Using Section~\ref{sec:Parameterization} and
$dS/S =\sigma d\widehat{W}^0$,  any Galtchouk-Kunita-Watanabe (GKW) decomposition \citep[cf.][]{Schweizer} of a continuous local $Q$-martingale ($Q$ in $ \mathcal{Q}^{\textrm{ngd}}$) wrt.\  $\sigma\cdot \widehat{W}^0$ gives a GKW decomposition wrt.\  $S$ and vice versa.  Note that $\bar{Q}=\bar{Q}(X)$ depends on $X$ in the
\begin{theorem}\label{thm:RobustGDHedgingRiskMin}
Let the assumptions of Proposition \ref{pro:SaddPtCondEquivalence} and (\ref{eq:UncertaintyBigEnough}) hold. For $X$ in  $L^2(P_0)$, let $(Y,Z)$ be the standard solution of the BSDE (\ref{eq:BSDERiskMin}). 
Then $\pi^{u}_\cdot(X)=Y$  has the GKW decomposition w.r.t.\ $\sigma\cdot \widehat{W}^0$ (and $S=\mathrm{diag}(S) \sigma\cdot \widehat{W}^0$, cf.\ Sect.\ref{sec:Parameterization})
    \begin{equation}\label{eq:LocalRiskMin}
    \pi^{u}_t(X) = \pi^{u}_0(X) + \phi^*\cdot \widehat{W}^0_t+R^{\phi^*}_t,\quad  t\in[0,T],
    \end{equation}
with $\phi^*=\Pi(Z)$. The tracking error $R^{\phi^*}(X)=\Pi^\bot_\cdot(Z)\cdot W^{\bar{Q}}$ is a $\bar{Q}$-martingale orthogonal to $\sigma\cdot \widehat{W}$, for $\bar{Q}(X)\in \mathcal{Q}^{\textrm{ngd}}$ 
given by $d\bar{Q}/dP_0=\mathcal{E}\left((-\xi^0+\bar{\eta})\cdot W^0\right)$ with 
\begin{equation*}
\bar{\eta}_t = h_t\big({\Pi_t^\bot(Z_t)}^{\textrm{tr}}A^{-1}_t\Pi_t^\bot(Z_t)\big)^{-1/2}A^{-1}_t\Pi_t^\bot(Z_t),\quad t\in [0,T].
\end{equation*}
\end{theorem}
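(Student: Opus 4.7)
The plan is to combine Proposition~\ref{pro:SaddPtCondEquivalence} with an explicit change of measure from $P_0$ to $\bar{Q}$, which will turn the BSDE (\ref{eq:BSDERiskMin}) into a pure $\bar{Q}$-martingale equation whose orthogonal decomposition is the asserted GKW decomposition.

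First, under (\ref{eq:UncertaintyBigEnough}), Proposition~\ref{pro:SaddPtCondEquivalence} already yields $\pi^u_\cdot(X)=Y$ and the robust good-deal hedging strategy $\phi^{*}(X)=\Pi(Z)$. Moreover, its proof gives that the worst-case selector $\bar{\theta}=-\xi^{0}$ belongs to $\Theta$, so $\xi^{\bar{\theta}}=0$ and $P^{\bar{\theta}}\in\mathcal{M}^{e}(S)$. In other words, under the worst-case reference model the asset prices are already martingales, and the ellipsoidal good-deal constraint in model $P^{\bar{\theta}}$ reduces to the radial-type problem $\eta^{\text{tr}}A\eta\le h^{2}$ on $\mathrm{Ker}\,\sigma$. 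Applying Lemma~\ref{lem:OptimizationGeneratorApplication} to $\Pi^{\bot}(Z)$, this problem is solved by the explicit $\bar{\eta}$ of the statement, so $\bar{Q}$ is a well-defined element of $\mathcal{Q}^{\mathrm{ngd}}(P^{\bar{\theta}})\subseteq\mathcal{Q}^{\mathrm{ngd}}$ (the Girsanov kernel $-\xi^{0}+\bar{\eta}$ is bounded under (\ref{eq:AssumptionOnAlphaPrimeUncertainty}) and $h$ bounded, so $\mathcal{E}((-\xi^{0}+\bar{\eta})\cdot W^{0})$ is a genuine uniformly integrable martingale).

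The key computation is then the following. Writing $W^{\bar{Q}}_{t}=W^{0}_{t}-\int_{0}^{t}(-\xi^{0}_{s}+\bar{\eta}_{s})ds$ and substituting into (\ref{eq:BSDERiskMin}) gives
\begin{equation*}
 dY_{t}= {\xi^{0}_{t}}^{\text{tr}}\Pi_{t}(Z_{t})\,dt-h_{t}\sqrt{\Pi^{\bot}_{t}(Z_{t})^{\text{tr}}A^{-1}_{t}\Pi^{\bot}_{t}(Z_{t})}\,dt+ Z_{t}^{\text{tr}}(-\xi^{0}_{t}+\bar{\eta}_{t})\,dt+ Z_{t}^{\text{tr}}dW^{\bar{Q}}_{t}.
\end{equation*}
Using $-\xi^{0}\in\mathrm{Im}\,\sigma^{\text{tr}}$, $\bar{\eta}\in\mathrm{Ker}\,\sigma$ and the orthogonality of $\Pi(Z)$ and $\Pi^{\bot}(Z)$, one has $Z^{\text{tr}}(-\xi^{0})=-\Pi(Z)^{\text{tr}}\xi^{0}$ and $Z^{\text{tr}}\bar{\eta}=\Pi^{\bot}(Z)^{\text{tr}}\bar{\eta}$, while the explicit form of $\bar{\eta}$ yields the identity $\Pi^{\bot}_{t}(Z_{t})^{\text{tr}}\bar{\eta}_{t}=h_{t}\sqrt{\Pi^{\bot}_{t}(Z_{t})^{\text{tr}}A^{-1}_{t}\Pi^{\bot}_{t}(Z_{t})}$. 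All drift terms therefore cancel and $dY_{t}=Z^{\text{tr}}_{t}dW^{\bar{Q}}_{t}$, so $Y=\pi^{u}_\cdot(X)$ is a $\bar{Q}$-martingale (boundedness of the Girsanov kernel gives $d\bar{Q}/dP_{0}\in L^{p}(P_{0})$ for every $p<\infty$, so H\"older combined with $Z\in\mathcal{H}^{2}(P_{0})$ provides the integrability needed for a true martingale, as in the proof of Proposition~\ref{pro:RobustTrackingError}).

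To obtain (\ref{eq:LocalRiskMin}), split $Z=\Pi(Z)+\Pi^{\bot}(Z)$ and note that, because $\Pi(Z)\in\mathrm{Im}\,\sigma^{\text{tr}}$ is orthogonal to $\bar{\eta}\in\mathrm{Ker}\,\sigma$, one has $\Pi(Z_{t})^{\text{tr}}dW^{\bar{Q}}_{t}=\Pi(Z_{t})^{\text{tr}}(d\widehat{W}^{0}_{t}-\bar{\eta}_{t}\,dt)=\Pi(Z_{t})^{\text{tr}}d\widehat{W}^{0}_{t}$, yielding $\pi^{u}_{t}(X)=\pi^{u}_{0}(X)+\phi^{*}\cdot\widehat{W}^{0}_{t}+R^{\phi^{*}}_{t}$ with $\phi^{*}=\Pi(Z)$ and $R^{\phi^{*}}=\Pi^{\bot}(Z)\cdot W^{\bar{Q}}$. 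Finally, since $\sigma\Pi^{\bot}(Z)\equiv 0$, the quadratic covariation of $R^{\phi^{*}}$ with $\sigma\cdot\widehat{W}^{0}$ (equivalently, with $S=\mathrm{diag}(S)\sigma\cdot\widehat{W}^{0}$) vanishes, which gives the orthogonality and identifies the decomposition as the GKW decomposition of $\pi^{u}_\cdot(X)$ with respect to $\sigma\cdot\widehat{W}^{0}$ under $\bar{Q}$. The only delicate steps are (i) verifying that $\bar{Q}$ lies in $\mathcal{Q}^{\mathrm{ngd}}$ and (ii) the algebraic cancellation ensured by the explicit formula for $\bar{\eta}$; everything else is bookkeeping with the orthogonal decomposition $\R^{n}=\mathrm{Im}\,\sigma^{\text{tr}}\oplus\mathrm{Ker}\,\sigma$.
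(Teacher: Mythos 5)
Your proposal is correct and follows essentially the same route as the paper's proof: invoke Proposition~\ref{pro:SaddPtCondEquivalence} for $\pi^u_\cdot(X)=Y$ and $\phi^*=\Pi(Z)$, observe that the change of measure to $\bar{Q}$ cancels the drift of (\ref{eq:BSDERiskMin}) so that $Y=Y_0+Z\cdot W^{\bar Q}$, then split $Z=\Pi(Z)+\Pi^\bot(Z)$ and use $\sigma\Pi^\bot(Z)=0$ for the orthogonality. You merely spell out in more detail the steps the paper leaves implicit (the algebraic cancellation via $\Pi^\bot(Z)^{\textrm{tr}}\bar\eta=h(\Pi^\bot(Z)^{\textrm{tr}}A^{-1}\Pi^\bot(Z))^{1/2}$ and the verification that $\bar Q\in\mathcal{Q}^{\mathrm{ngd}}$), which is fine.
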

\begin{proof}
 From Proposition \ref{pro:SaddPtCondEquivalence} we have $\phi^*=\Pi(Z)$ and $\pi^u_\cdot(X)=Y$. By the definitions of $\bar{\eta}$ and $\bar{Q}$, $Y_t = Y_0+Z\cdot W^{\bar{Q}}_t$ 
 holds for all $t\in[0,T]$. As a consequence, one obtains $ \pi^u_t(X)= \pi^u_0(X)+\phi^*\cdot \widehat{W}^{0}_t+ \Pi_\cdot^\bot(Z_\cdot) \cdot W^{\bar{Q}}_t, t\in [0,T]$. 
 Thus (\ref{eq:LocalRiskMin}) holds with $R^{\phi^*}(X)=\Pi^\bot_\cdot(Z)\cdot W^{\bar{Q}}$ being a $\bar{Q}$-martingale orthogonal to $S=S_0+\int_0^\cdot\mathrm{diag}(S_t)\sigma_tdW^{\bar{Q}}_t$ since $\sigma(\Pi^\bot_\cdot(Z))=0$. 
Furthermore since $\phi^*_t\bot \Pi^\bot_t(Z_t)$, and $\phi^*\cdot \widehat{W}^{0}=\phi^*\cdot W^{\bar{Q}}$, then $R^{\phi*}(X)$ is also orthogonal to 
$\phi^*\cdot \widehat{W}^{0}$ under $\bar{Q}$. Therefore (\ref{eq:LocalRiskMin}) is the GKW decomposition of $\bar{\pi}^u_\cdot(X)$ under $\bar{Q}$. 
\end{proof}

A seminal no-trade result by \citet{DowWerlang} shows that a utility optimizing agent  abstains from taking any position in a risky asset if uncertainty is too large. In comparison, the above 
theorem shows that a good-deal hedger keeps dynamically trading according to the risk minimizing component $\Pi(Z)$ but ceases to accept any speculative component.  
 In a  setting quite different to ours, \citet[][]{BoyarchenkoCerratoCrosbyHodges} demonstrate by numerical computation in an  example that relative benefits of dynamic 
hedging compared to static hedging could decrease if uncertainty increases. This is intuitive, as \citep[cf.\ e.g.][]{Cont}
 static hedges are less exposed to model risk.
Proposition~\ref{pro:SaddPtCondEquivalence} similarly addresses how increasing uncertainty affects dynamic hedging, but is different in that it  offers  theoretical conditions under which dynamic good-deal 
hedging $\phi^*$ ceases to comprise a speculative components in order to compensate for exposures to non-spanned risk. 

\subsection{Example with closed-form solutions under model uncertainty}\label{subsec:OptionOnNonTradedAssetUncertainty}
\label{sec:explUC}
The usual filtration is generated by a $2$-dimensional $P_0$-Brownian motion $W^0:=(W^{0,S},W^{0,H})^{\text{tr}}$. We consider a single traded risky asset with price $S$ and a non-traded asset 
with value $H$ modelled under $P_0$ for $t\in [0,T]$ by
\begin{equation*}
	dS_t=S_t\sigma^S\big(\xi^{0,S}dt+ dW^{0,S}_t\big),\quad dH_t=H_t\big(\gamma dt + \beta (\rho dW^{0,S}_t+\sqrt{1-\rho^2} dW^{0,H}_t)\big)
\end{equation*}
with $S_0,H_0>0$, scalars $\sigma^S,\beta>0$, $\gamma,\xi^{0,S}\in \R$ and correlation coefficient $\rho\in [-1,1]$. We derive robust good-deal bounds and hedging strategies in 
closed-form, for European call options on the non-traded asset and for no-good-deal restriction and uncertainty modelled (as in Section \ref{subsec:ImpactUncertaintyGDHedging}) 
using the radial sets  $C^0 = \{x\in\R^2:\lvert x\rvert\le h\}$ and $\Theta^0 = \{x\in\R^2:\abs{x}\le \delta\}$ for scalars $h,\delta\ge0$. Here one has $\Theta=\Theta^0\cap\mathrm{Im}\ \sigma = [-\delta,\delta]\times\{0\}$, 
for $\sigma=(\sigma^S,0)$, and hence $\xi^\theta=(\xi^{\theta,S},0)^{\text{tr}}:=(\xi^{0,S}+\theta^{S},0)^{\text{tr}}\in\mathrm{Im}\ \sigma,$ for models $P^{\theta}$ with $\theta=(\theta^S,0)^{\text{tr}}$, 
where $\theta^S\in[-\delta,\delta]$. From \eqref{eq:PhistatAequalBCaseA}, with $A= B\equiv \mathrm{Id}_{\R^2}$ and $r=1$, the worst-case model $P^{\bar{\theta}}$ corresponds to
\begin{equation}\label{eq:WorstCaseThetaExa}
\bar{\theta}^S = -\xi^{0,S}I_{\{\abs{\xi^{0,S}}\le \delta\}} -\delta\frac{\xi^{0,S}}{\abs{\xi^{0,S}}}I_{\{\abs{\xi^{0,S}}> \delta\}}	.
\end{equation}
By Theorems \ref{thm:SolveUncertainValuationpb}-\ref{thm:CharactOfSaddlePt}, the robust good-deal bound and hedging strategy for call option $X:=(H_T-K)^+$ are given by $\pi^u_\cdot(X)=Y$ and  
$\bar{\phi}(X)=(Z^1+\frac{\lvert Z^2\rvert}{\sqrt{h^2-\lvert\xi^{\bar{\theta},S}\rvert^2}}\xi^{\bar{\theta},S},0)^{\text{tr}}$, for $(Y,Z:=(Z^1,Z^2)^{\text{tr}})$ solving the BSDE (\ref{eq:BSDEsforPiuunc}), equaling the BSDE (\ref{eq:BSDEPiuNu}) for $\theta=\bar{\theta}$:
\begin{equation}\label{eq:BSDEvalExCloseFormSolUncertainty}
 -dY_t= \big(-\xi^{\bar{\theta},S}Z^1_t+(h^2-\lvert \xi^{\bar{\theta},S}\rvert^2)^{1/2}\lvert Z^2_t\rvert \big)dt - Z^{\textrm{tr}}_t dW^{\bar{\theta}}_t\quad\text{and}\quad Y_T=X,
\end{equation}
with $W^{\bar{\theta}}_t := (W^{0,S}_t - \bar{\theta}^St,W^{0,H}_t)^{\text{tr}},\ t\in [0,T]$. Writing (\ref{eq:BSDEvalExCloseFormSolUncertainty}) under $\widehat{Q}^{\bar{\theta}}$ and using 
(\ref{eq:WorstCaseThetaExa}), arguments analogous to those in the derivation of (\ref{eq:CloFormUpperGDBCall}) yield
\begin{align*}
	\pi^u_t(X) &=N(d_{+})H_te^{\tilde{\alpha}_+ (T-t)}-KN(d_{-})\\
		    &=:\text{B/S-call-price}\big(\text{time: } t,\ \text{spot: }H_te^{\tilde{\alpha}_+ (T-t)},\ \text{strike: } K, \text{vol: } \beta\big),\\
		   \pi^l_t(X) &=\text{B/S-call-price}\big(\text{time: } t,\ \text{spot: }H_te^{\tilde{\alpha}_- (T-t)},\ \text{strike: } K, \text{vol: } \beta\big) ,
\end{align*}
with $d_{\pm}:=\big(\ln \big(H_t/K\big)+\big(\tilde{\alpha}_+\pm \frac{1}{2}\beta ^2\big)(T-t)\big)/\big(\beta\sqrt{T-t}\big)$, $\tilde{\alpha}_{\pm} := \gamma+\beta\big(-\rho\xi^{0,S}\pm \tilde{h}\sqrt{1-\rho^2}\big)$ 
and $\tilde{h}:=hI_{\{\abs{\xi^{0,S}}\le \delta\}}+\big(h^2-\big\lvert \xi^{0,S}-\delta \xi^{0,S}/\lvert \xi^{0,S}\rvert\big\rvert^2\big)^{1/2}I_{\{\abs{\xi^{0,S}}> \delta\}}$.
Analogously to the derivation of (\ref{eq:CloFormHedgStraUpperGDBCall}), note that $Z=e^{\tilde{\alpha}_+ (T-t)}N(d_+)H_t\beta(\rho,\sqrt{1-\rho^2})^{\text{tr}}$. Hence the (seller's) robust good-deal hedging strategy is obtained $P\otimes dt$-a.e.\  as 
\begin{equation*}
\bar{\phi}_t(X) = e^{\tilde{\alpha}_+ (T-t)}N(d_+)H_t \beta\ \Big(\rho+\frac{\sqrt{1-\rho^2}\xi^{0,S}}{\widetilde{h}\xi^{0,S}}\big(\lvert\xi^{0,S}\rvert-\delta\big)\mathds{1}_{\{\lvert\xi^{0,S}\rvert>\delta\}},0\Big)^{\textrm{tr}},\ t\in [0,T].
\end{equation*}
For $\lvert \xi^{0,S}\rvert>\delta$, the speculative nature of $\bar{\phi}(X)$ is reflected by the presence of the second summand in the first component of $\bar{\phi}(X)$ above.
For $\lvert \xi^{0,S}\rvert\le \delta$, this summand vanishes and the function $\delta\mapsto\tilde{\alpha}_+$ is constant on $\delta\in[\lvert \xi^{0,S}\rvert,\infty]$. In this case
robust good-deal hedging is then risk-minimizing and non-speculative as proved in Theorem \ref{thm:RobustGDHedgingRiskMin}. Note that for $\delta= \xi^{0,S}=0$ (i.e.\  risk-neutral setting 
under $P_0=\widehat{Q}^0$ in absence of uncertainty), we recover formulas of Section \ref{subsec:OptionOnNonTradedAsset} for $n=2$ and $d=1$.
\begin{figure}
        \centering
        \begin{subfigure}[b]{0.42\textwidth}
                \includegraphics[scale = 0.45]{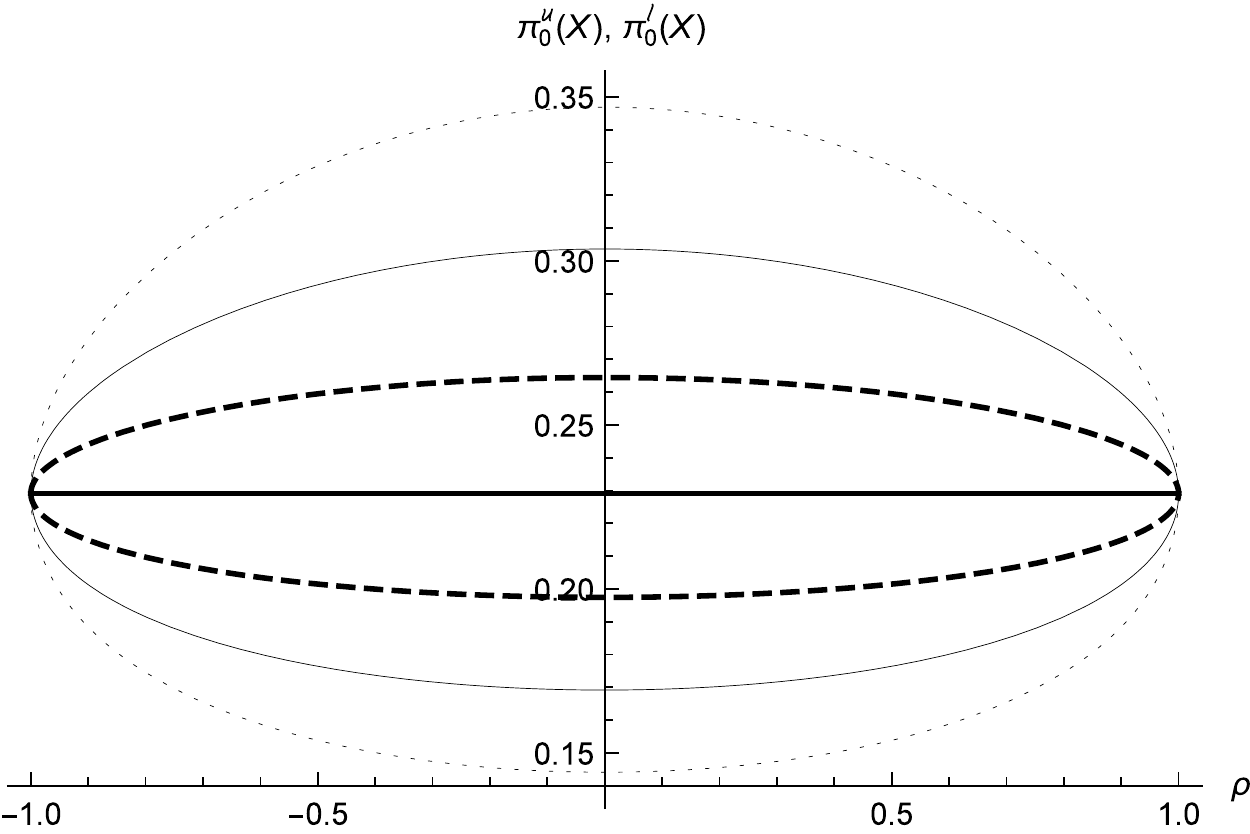}
                \caption{{\tiny $h_0=\xi^{0,S}=0$, $\ \delta=0$}}
                \label{fig:plotRhoAllZeroMPRVh}
        \end{subfigure}
        ~\ \ 
        \begin{subfigure}[b]{0.42\textwidth}
                \includegraphics[scale = 0.6]{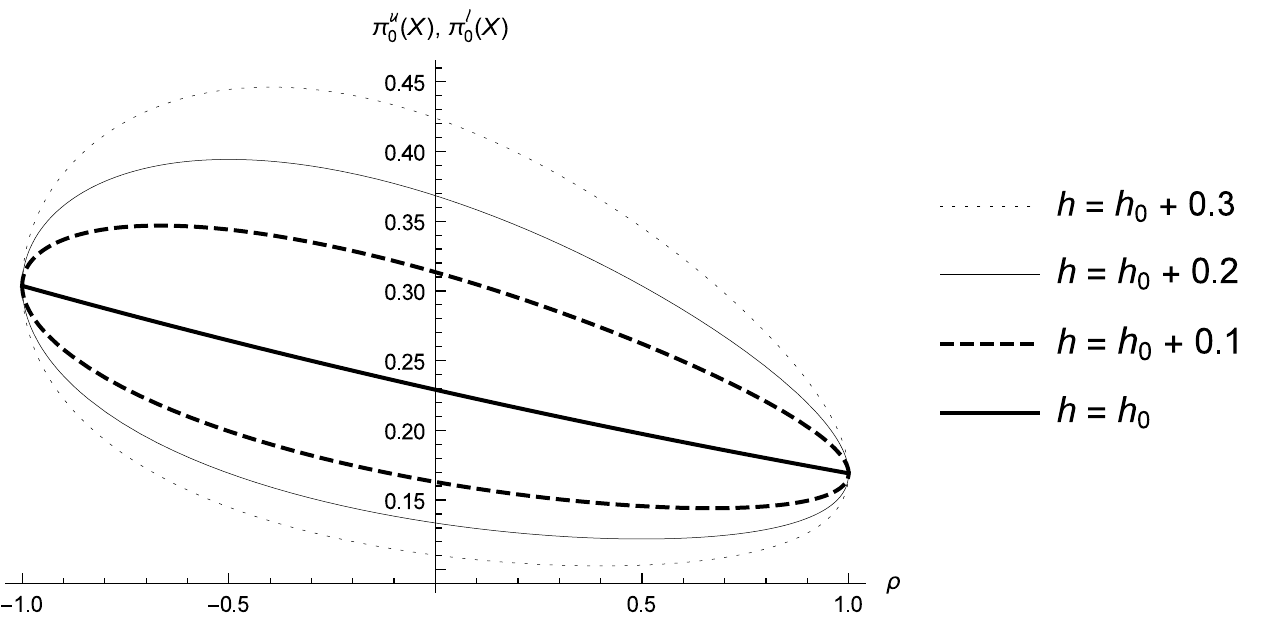}
                \caption{{\tiny $h_0=\xi^{0,S}=0.2$, $\ \delta=0$}}
                \label{fig:plotRhoAllPosMPRVh}
        \end{subfigure}
        \\
        \begin{subfigure}[b]{0.4\textwidth}
                \includegraphics[scale=0.55]{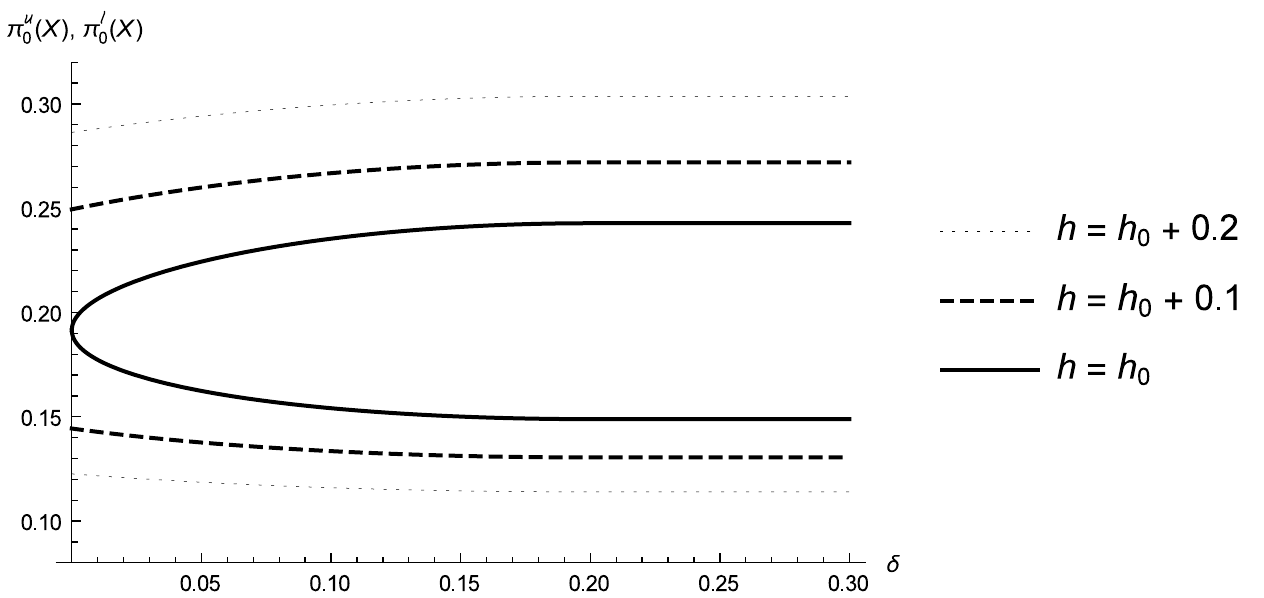}
                \caption{{\tiny $h_0=\xi^{0,S}=0.2,\ \rho=0.6$}}
                \label{fig:GDBsatZeroAgainstdeltaXiBig}
        \end{subfigure}
        ~\qquad\qquad\ \ 
        \begin{subfigure}[b]{0.4\textwidth}
                \includegraphics[scale = 0.6]{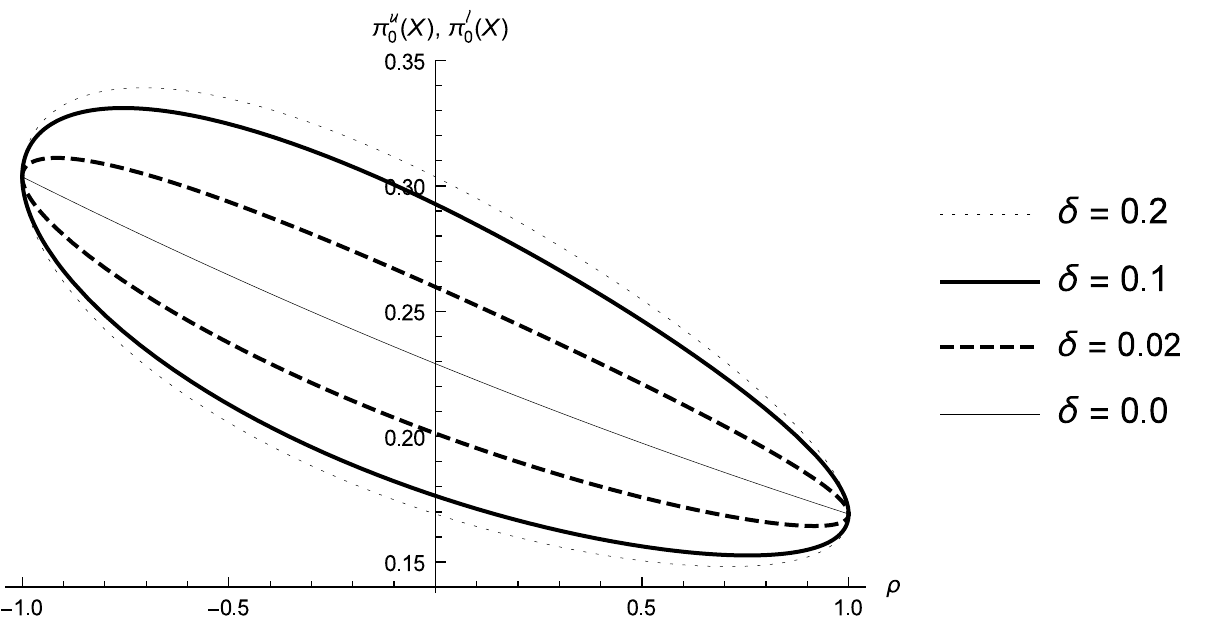}
                \caption{{\tiny $h=\xi^{0,S}=0.2$}}
                \label{fig:plotRhoAllPosMPRVDelta}
        \end{subfigure}
        \caption{Dependence of $\pi^u_0(X), \pi^l_0(X)$ on $\rho,\ h$ and/or $\delta$}
        \label{fig:GDBsatZeroAgainstRhoVhDelta}
\end{figure}
Figure \ref{fig:GDBsatZeroAgainstRhoVhDelta} illustrates the dependence of the bounds $\pi^u_0(X)$ and $\pi^l_0(X)$ in the presence of uncertainty, on the correlation coefficient $\rho$, 
uncertainty size $\delta$ and no-good-deal constraint (optimal growth rate bound) $h$, and for global parameters $\gamma=0.05$, $\beta=0.5$, $K=1$, $H_0=1$ and $T=1$. Figures \ref{fig:plotRhoAllZeroMPRVh},\ref{fig:plotRhoAllPosMPRVh} 
are plots of $\pi^u_0(X)$ and $\pi^l_0(X)$ as functions of $\rho$ for fixed $\delta=0$ (i.e.\  absence of uncertainty) and $\xi^{0,S}\in\{0,0.2\}$, showing how the good-deal bounds vary 
for different values of $h$. Figure \ref{fig:plotRhoAllPosMPRVDelta} contains a similar plot for fixed $h=\xi^{0,S}=0.2$, showing how the bounds vary with $\rho$ for different values of $\delta$. 
One can observe that the maximum of $\pi^u_0(X)$ and minimum of $\pi^l_0(X)$ are attained at $\rho=0$ only for $\xi^{0,S}=0$ (cf.\  Figure \ref{fig:plotRhoAllZeroMPRVh}). In other words, if the 
market price of risk $\xi^{0,S}$ is zero (hence $P_0=\widehat{Q}^0$), then the largest good-deal bounds are obtained when the traded and non-traded assets are uncorrelated (i.e.\ $\rho=0$). On the other 
hand if $\xi^{0,S}>0$ (as e.g.\  in Figures \ref{fig:plotRhoAllPosMPRVh},\ref{fig:plotRhoAllPosMPRVDelta}), the plots are tilted so that the maximum of $\pi^u_0(X)$ (resp.\ minimum of $\pi^l_0(X)$) is 
reached at $\rho<0$ (resp. $\rho>0$). For $\pi^u_0(X)$, this is explained by the fact that if the market price of risk $\xi^{0,S}$ is positive, the supremum in (\ref{eq:definitionPiuandPil}) is 
maximized by the no-good-deal measure $\bar{Q}=Q^{\bar{\lambda}}\in\mathrm{Q}^{\text{ngd}}(P_0)$ with Girsanov kernel $\bar{\lambda}:=\big(-\xi^{0,S},(h^2-\big\lvert \xi^{0,S}-\delta \xi^{0,S}/\lvert \xi^{0,S}\rvert\big\rvert^2)^{1/2}\big)^{\text{tr}}$ 
under which the upward drift $\tilde{\alpha}_+$ of the underlying price process $H$ is maximized, clearly at a negative correlation $\rho$. The explanation for 
$\pi^u_0(X)$ is similar, with $\tilde{\alpha}_-$ being minimal at a positive correlation, for $\xi^{0,S}>0$. For $\xi^{0,S}<0$ the tilt of the plots occurs in the other direction.
That the good-deal bounds in Figures \ref{fig:plotRhoAllZeroMPRVh},\ref{fig:plotRhoAllPosMPRVh},\ref{fig:plotRhoAllPosMPRVDelta} coincide for  perfect correlation $\rho=\pm 1$ is clear, because in this case 
derivatives $X$ on $H$ are attainable and admit unique no-arbitrage prices, implying $\pi^u_\cdot(X)=\pi^l_\cdot(X)$. Finally, Figure \ref{fig:GDBsatZeroAgainstdeltaXiBig} illustrates 
the evolution  with respect to $\delta$ of the good-deal bounds at time $t=0$ for $\rho=0.6$, $\xi^{0,S}=0.2$  and different values of $h$, with $\lvert\xi^{0,S}\rvert$ chosen as the smallest value $h_0$ of 
$h$. One observes that for each given $h$, the good-deal bound curves become flat for $\delta\ge\lvert\xi^{0,S}\rvert$ (as predicted by Proposition \ref{pro:SaddPtCondEquivalence}), and 
match (i.e.\  $\pi^u_0(X)=\pi^l_0(X)$) for $\delta=\lvert\xi^{0,S}\rvert-h_0=0$ (as might be expected in the absence of uncertainty for a degenerate expected growth rate bound $h=\lvert\xi^{0,S}\rvert$).

\section{Appendix}\label{app:AppendixB}\label{app:AppendixA}
This section includes  lemmas and derivations omitted from the paper's main body.

\begin{lemma}\label{lem:OptimizationWithPhiSolved}
 For $d< n$, let $\sigma \in \R^{d\times n}$ be of full-rank, $A\in\R^{n\times n}$ be symmetric and positive definite, and $h>0$, $Z\in\R^n$, $\xi\in\mathrm{Im}\ \sigma^{\textrm{tr}}$. 
 Let $\alpha'>0$ be a constant of ellipticity of $A^{-1}$ and assume that $\abs{\xi}<h\sqrt{\alpha'}$ and $A^{-1}(\mathrm{Ker}\ \sigma) = \mathrm{Ker}\ \sigma$. Then 
 $\bar{\phi}:= \Pi(Z)+\big(\Pi^\bot(Z)^{\textrm{tr}}A^{-1}\Pi^\bot(Z)\big)^{1/2}\big(h^2-\xi^{\textrm{tr}}A\xi\big)^{-1/2}A\xi$ is the unique minimizer of the function 
 $\phi\mapsto F(\phi):=-\xi^*\phi +h\big(\left(Z-\phi\right)^{\textrm{tr}}A^{-1}\left(Z-\phi\right)\big)^{1/2}$ on $\mathrm{Im}\ \sigma^{\textrm{tr}}$.
\end{lemma}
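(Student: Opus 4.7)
The plan is to exploit convexity of $F$ and verify that $\bar\phi$ satisfies the first-order optimality condition for unconstrained minimization on the subspace $\mathrm{Im}\,\sigma^{\textrm{tr}}$, supplemented by a direct Cauchy--Schwarz argument to handle the one point where $F$ fails to be differentiable. First I would record that the separability condition $A^{-1}(\mathrm{Ker}\,\sigma)=\mathrm{Ker}\,\sigma$, together with symmetry of $A$, implies that both $A$ and $A^{-1}$ leave each of the orthogonal subspaces $\mathrm{Im}\,\sigma^{\textrm{tr}}$ and $\mathrm{Ker}\,\sigma$ invariant; in particular $A\xi\in\mathrm{Im}\,\sigma^{\textrm{tr}}$, so that $\bar\phi\in\mathrm{Im}\,\sigma^{\textrm{tr}}$. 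Next I would observe that $\xi^{\textrm{tr}}A\xi\le\|A\||\xi|^2<\|A\|h^2\alpha'\le h^2$ by Lemma~\ref{lem:qPrimePositiveDefinitePtwise}, so that $\bar\phi$ is well-defined and the square roots in its definition are meaningful. Convexity of $F$ is then clear from writing $F(\phi)=-\xi^{\textrm{tr}}\phi+h\|Z-\phi\|_{A^{-1}}$, where $\|\cdot\|_{A^{-1}}$ is a norm.

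The main computational step is to verify that $\nabla F(\bar\phi)\in\mathrm{Ker}\,\sigma$, which for a convex $F$ is equivalent to $\bar\phi$ minimizing $F$ over $\mathrm{Im}\,\sigma^{\textrm{tr}}$. Away from $\phi=Z$ a direct differentiation gives $\nabla F(\phi)=-\xi-\mu(\phi)A^{-1}(Z-\phi)$ with $\mu(\phi):=h/\|Z-\phi\|_{A^{-1}}$. Writing $Z-\bar\phi=\Pi^\bot(Z)-\kappa A\xi$ with $\kappa:=\sqrt{\Pi^\bot(Z)^{\textrm{tr}}A^{-1}\Pi^\bot(Z)/(h^2-\xi^{\textrm{tr}}A\xi)}$, the cross terms in $(Z-\bar\phi)^{\textrm{tr}}A^{-1}(Z-\bar\phi)$ vanish by orthogonality of $\xi\in\mathrm{Im}\,\sigma^{\textrm{tr}}$ and $\Pi^\bot(Z)\in\mathrm{Ker}\,\sigma$, yielding $\|Z-\bar\phi\|_{A^{-1}}^2=\kappa^2 h^2$ and hence $\mu(\bar\phi)=1/\kappa$. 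Substituting back, the identity $\nabla F(\bar\phi)=-\kappa^{-1}A^{-1}\Pi^\bot(Z)$ drops out, and this lies in $\mathrm{Ker}\,\sigma$ by the separability condition, confirming the first-order condition.

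For uniqueness I would proceed in two complementary ways depending on whether $\Pi^\bot(Z)=0$. In the generic case $\Pi^\bot(Z)\ne 0$, any critical point $\phi^\star\in\mathrm{Im}\,\sigma^{\textrm{tr}}$ must satisfy $\mu(\phi^\star)A^{-1}(\Pi(Z)-\phi^\star)=-\xi$, obtained by projecting the optimality condition onto $\mathrm{Im}\,\sigma^{\textrm{tr}}$ via separability; this forces $\phi^\star-\Pi(Z)$ to be a positive scalar multiple of $A\xi$, and substituting this ansatz into the definition of $\mu$ yields a scalar equation with the unique positive solution $\kappa$. In the degenerate case $\Pi^\bot(Z)=0$ one has $\bar\phi=Z$ and $F$ is nondifferentiable there, so I would instead argue directly: for $\psi:=\phi-Z\in\mathrm{Im}\,\sigma^{\textrm{tr}}\setminus\{0\}$, Cauchy--Schwarz in the $A^{-1}$-inner product gives $|\xi^{\textrm{tr}}\psi|\le\sqrt{\xi^{\textrm{tr}}A\xi}\,\|\psi\|_{A^{-1}}$, which combined with $\sqrt{\xi^{\textrm{tr}}A\xi}<h$ yields $F(\phi)>F(Z)$. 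The only real subtlety is this non-differentiability at $\phi=Z$, which is why the degenerate case deserves the separate Cauchy--Schwarz argument; everything else is a transparent critical-point computation made clean by the decoupling of $\mathrm{Im}\,\sigma^{\textrm{tr}}$ and $\mathrm{Ker}\,\sigma$ components that the separability condition affords.
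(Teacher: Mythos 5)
Your proof is correct, and it takes a route that differs in presentation from the paper's. The paper disposes of this lemma by invoking the classical Kuhn--Tucker theorem (deferring details to \citealt{KentiaPhD}, Lem.~3.35), i.e.\ it sets up Lagrange multipliers for the constrained problem and verifies the Kuhn--Tucker conditions, which is where the hypothesis $\abs{\xi}<h\sqrt{\alpha'}$ enters. You instead verify directly that the gradient of the convex function $F$ at the explicit candidate $\bar{\phi}$ lies in $\mathrm{Ker}\ \sigma=(\mathrm{Im}\ \sigma^{\textrm{tr}})^\bot$, which by convexity is the exact optimality condition on the subspace; the computation $\lVert Z-\bar{\phi}\rVert_{A^{-1}}^2=\kappa^2h^2$ and the resulting identity $\nabla F(\bar{\phi})=-\kappa^{-1}A^{-1}\Pi^\bot(Z)$ are right, and both hinge (as does the paper's argument) on the separability condition $A^{-1}(\mathrm{Ker}\ \sigma)=\mathrm{Ker}\ \sigma$ decoupling the two orthogonal components. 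Your uniqueness argument is also sound: when $\Pi^\bot(Z)\neq 0$ the feasible set avoids the non-smooth point, so every minimizer is a stationary point and the stationarity equation forces $\phi^\star-\Pi(Z)=\mu(\phi^\star)^{-1}A\xi$ with the scalar pinned down uniquely; and your separate Cauchy--Schwarz estimate $\lvert\xi^{\textrm{tr}}\psi\rvert\le(\xi^{\textrm{tr}}A\xi)^{1/2}\lVert\psi\rVert_{A^{-1}}<h\lVert\psi\rVert_{A^{-1}}$ for the degenerate case $\Pi^\bot(Z)=0$ cleanly handles the one point where $F$ is not differentiable --- a point a careless multiplier argument could gloss over. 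What your approach buys is a self-contained, elementary verification with no appeal to an external optimization theorem; what the Kuhn--Tucker route buys is a template that extends directly to the genuinely inequality-constrained inner maximization over the ellipsoid $\{\lambda^{\textrm{tr}}A\lambda\le h^2\}$ appearing elsewhere in the paper (Lemma~\ref{lem:OptimizationGeneratorApplication} and the saddle-point analysis), where a bare subspace-orthogonality argument would not suffice.
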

\begin{proof}
 The proof is an application of the classical Kuhn-Tucker Theorem \citep[cf.][Sect.28]{RockafellarConvexAnalysis}. For details see \citep[][Lemma 3.35]{KentiaPhD} and proof.
\end{proof}
\begin{lemma}\label{lem:MinimaxPbSolved}
 Let $d< n$, $h>0$ be constant, $Z\in\R^n$, $A\in \R^{n\times n}$  a symmetric positive definite matrix,  $\sigma\in\R^{d\times n}$ a full ($d$)-rank matrix, and $\xi^0\in \Phi := \text{Im}\ \sigma^{\text{tr}}$. 
 Let $\Theta\subset \R^n$ be a convex-compact set, and $F: \R^n\times\R^n\ni(\phi,\theta)\mapsto \theta^\textrm{tr}(Z-\phi)-{\xi^0}^\textrm{tr}\phi + h\big((Z-\phi)^\textrm{tr}A^{-1}(Z-\phi)\big)^{1/2}$.
 Then the minmax identity 
\(\label{eq:MinmaxIdentity}
\inf_{\phi\in\Phi}\ \sup_{\theta\in\Theta}F(\phi,\theta) = \sup_{\theta\in\Theta}\ \inf_{\phi\in\Phi}F(\phi,\theta). 
\)
holds.
\end{lemma}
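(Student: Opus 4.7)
The plan is to derive the minmax identity as a direct application of Sion's minimax theorem to the function $F$ on the product space $\Theta \times \Phi$, where $\Theta$ supplies the required compactness and $\Phi = \mathrm{Im}\,\sigma^{\textrm{tr}}$ supplies only convexity (being a linear subspace of $\R^n$).

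First I would check the structural properties of $F$. For each fixed $\phi \in \Phi$, the map $\theta \mapsto F(\phi, \theta) = \theta^{\textrm{tr}}(Z - \phi) - {\xi^0}^{\textrm{tr}} \phi + h\big((Z-\phi)^{\textrm{tr}} A^{-1}(Z-\phi)\big)^{1/2}$ is affine in $\theta$, hence continuous and concave (thus quasi-concave and upper semicontinuous) on $\Theta$. For each fixed $\theta \in \Theta$, writing $F(\phi, \theta) = -(\theta + \xi^0)^{\textrm{tr}} \phi + \theta^{\textrm{tr}} Z + h\sqrt{(Z-\phi)^{\textrm{tr}} A^{-1}(Z-\phi)}$, positive-definiteness of $A^{-1}$ makes $x \mapsto (x^{\textrm{tr}} A^{-1} x)^{1/2}$ a norm on $\R^n$, so the last summand is convex and continuous as the composition of that norm with the affine map $\phi \mapsto Z - \phi$. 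Hence $F(\cdot, \theta)$ is convex (thus quasi-convex) and continuous (thus lower semicontinuous) on $\Phi$.

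With these hypotheses verified, Sion's minimax theorem applies (it requires only one of the two convex sets to be compact, here $\Theta$) and yields the desired identity $\inf_{\phi \in \Phi} \sup_{\theta \in \Theta} F(\phi, \theta) = \sup_{\theta \in \Theta} \inf_{\phi \in \Phi} F(\phi, \theta)$ at once, without any further calculation.

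I do not anticipate a serious obstacle: the argument is essentially a textbook verification of the hypotheses of Sion's theorem. The only subtle point is keeping straight which variable is minimized and which maximized in the statement of the theorem, since some formulations place the compactness assumption on the minimization side; but this is easily settled either by choosing an appropriate form of the theorem, or, equivalently, by applying it to $-F$ and then flipping signs.
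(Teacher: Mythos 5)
Your proof is correct and takes essentially the same route as the paper: one verifies that $F$ is affine (hence concave and continuous) in $\theta$ and convex and continuous in $\phi$ (the last summand being the norm $x\mapsto (x^{\textrm{tr}}A^{-1}x)^{1/2}$ composed with an affine map), and then invokes a standard convex--concave minimax theorem that requires compactness only of $\Theta$. The paper cites Ekeland--Temam (Ch.VI, Prop.2.3) instead of Sion's theorem, but this difference in reference is immaterial.
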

\begin{proof}
 For all $\phi\in\R^n$, the function $\theta\mapsto F(\phi,\theta)$ is concave, continuous. For all $\theta\in\R^n$ the function $\phi\mapsto F(\phi,\theta)$ is convex and 
 continuous. As $\Theta\subset\R^n$ is convex and compact, and  $\Phi = \text{Im}\ \sigma^{\text{tr}}$ is convex and closed,  a minimax theorem \citep[][Ch.VI, Prop.2.3]{EkelandTemam}
 applies and the minmax identity holds. 
\end{proof}
\begin{proof}[Proof of Lemma \ref{lem:PropDynamicRiskMeasMeandQngdareMstable}]
Part a) is classical (see \citealt{Delbaen} and cf.\ previously given other references). As for Part b), m-stability and convexity of $\mathcal{M}^e$ follow from \citep[][Prop.5]{Delbaen}. 
Convexity of $\mathcal{Q}^{\mathrm{ngd}}$ follows from that of $\mathcal{M}^e$ and the values of $C$. To show m-stability of $\mathcal{Q}^{\mathrm{ngd}}$, let $Z^i=\mathcal{E}(\lambda^i\cdot W)\in \mathcal{Q}^{\mathrm{ngd}},\ i=1,2$, $\tau\le T$ 
be a stopping time and $Z=I_{[0,\tau]}Z^1_\cdot+ I_{]\tau,T]}Z^1_\tau Z^2_\cdot /Z^2_\tau$. Since $\mathcal{M}^e$ is m-stable, then $Z\in \mathcal{M}^e$ and one has $Z=\mathcal{E}(\lambda\cdot W)$ 
for some predictable process $\lambda$. It remains to show that $\lambda$ is bounded and that $\lambda\in C$. From the expression of $Z$, writing the densities $Z,Z^1,Z^2$ as ordinary exponentials by distinguishing $t\le \tau$ 
and $t\ge \tau$, and taking the logarithm yields $(\lambda-I_{[0,\tau]}\lambda^1-I_{]\tau,T]}\lambda^2)\cdot W= \frac{1}{2}\int_0^\cdot\big(\abs{\lambda_s}^2-I_{[0,\tau]}(s)\abs{\lambda^1_s}^2-I_{]\tau,T]}(s)\abs{\lambda^2_s}^2\big)ds.$
Since $\filt$ is the augmented Brownian filtration, then $[0,\tau]$ and $]\tau,T]$ are predictable and so is $\lambda-I_{[0,\tau]}\lambda^1-I_{]\tau,T]}\lambda^2$. Hence $\left(\lambda-I_{[0,\tau]}\lambda^1-I_{]\tau,T]}\lambda^2\right)\cdot W$ 
is a continuous local martingale of finite variation and is thus equal to zero. As a consequence $\lambda=I_{[0,\tau]}\lambda^1+I_{]\tau,T]}\lambda^2$ is bounded since $\lambda^1,\lambda^2$ are, and satisfies $\lambda\in C$
since $C$ is convex-valued. 
\end{proof}

\begin{proof}[Proof of Theorem~\ref{thm:MinimalSupersolGDB}]
As shown in \citep[][Thm.3.7]{KentiaPhD}, $\pi^u_\cdot(X)$ admits under $\widehat{Q}$ the Doob-Meyer decomposition $\pi^{u}_\cdot(X) = \pi^{u}_0(X)+Z\cdot \widehat{W} - A=\pi^{u}_0(X)+Z\cdot W + \int_0^\cdot \xi_t^{\textrm{tr}}Z_tdt- A,$
 where $Z\in\h^2(\widehat{Q})$ and $A$ is a non-decreasing predictable process with $A_0=0$. Alternatively one rewrites $\displaystyle -d\pi^u_t(X) =g_t(Z_t)dt - Z_t^{\textrm{tr}}dW_t+dK_t,$
with $ K:= A -\int_0^\cdot \xi_t^{\textrm{tr}}Z_tdt - \int_0^\cdot\esssup_{\lambda\in \Lambda}\lambda_t^{\textrm{tr}}Z_t\ dt$ being finite-valued and predictable. For  
$(\pi^u_\cdot(X),Z,K)$ to be a supersolution to the BSDE with parameters $(g,X)$ it suffices to show that $K$ is non-decreasing. For any $\lambda=-\xi+\eta \in \Lambda$, one can construct 
the sequence of $\lambda^n=-\xi+\eta^n\in \Lambda$ Girsanov kernels of measures $Q^n\in \mathcal{Q}^{\mathrm{ngd}}$ with $\eta^n=\eta I_{\{\lvert\eta\rvert\le n\}}$ such that 
$\lambda^n\to \lambda\ P\otimes dt\text{-a.e.}$ as $n\to \infty$. For each $Q^n$ it holds  $ \pi^{u}_\cdot(X) = \pi^{u}_0(X)+Z\cdot W^{Q^n} + \int_0^\cdot {Z_t}^{\textrm{tr}}\eta^n_tdt- A.$ 
Since $\pi^u_\cdot(X)$ is a bounded $Q^n$-supermartingale, then $dA_t - \xi_t^{\textrm{tr}}Z_tdt \ge Z_t^{\textrm{tr}}\lambda^n_tdt,\ \text{for all } n\in\N.$ Taking the limit as 
$n\to \infty$ and using dominated convergence one obtains $dA_t - \xi_t^{\textrm{tr}}Z_tdt \ge Z_t^{\textrm{tr}}\lambda_tdt.$ Now taking the essential supremum over all $\lambda\in\Lambda$ 
yields $dK_t\ge0$.

To show that the supersolution $(\pi^u_\cdot(X),Z,K)$ is minimal, it suffices by the dynamic principle \citep[cf.][Lem.3.6]{KentiaPhD} to show that the $Y$-component of any other 
supersolution is a c\`adl\`ag $Q$-supermartingale for any $Q\in \mathcal{Q}^{\mathrm{ngd}}$. Let $(\bar{Y},\bar{Z},\bar{K})$ be a supersolution of the BSDE for parameters $(g,X)$, with 
$\bar{Y}\in \mathcal{S}^\infty$. By change of measure, under a $Q\in \mathcal{Q}^{\mathrm{ngd}}$ with Girsanov kernel  $\lambda^Q\in \Lambda$ we have
\begin{equation}\label{eq:ForYbarandKbar}
 -d\bar{Y}_t =\Big(\esssup_{\lambda\in \Lambda}\lambda_t^{\textrm{tr}}\bar{Z}_t - \bar{Z}_t^{\textrm{tr}}\lambda^Q_t\Big)dt - \bar{Z}_t^{\textrm{tr}}dW^Q_t+d\bar{K}_t,\quad t\in [0,T],
\end{equation}
\begin{equation}
\label{eq:KbarandPositive}
\text{and get} \quad  d\bar{K}_t+ \Big(\esssup_{\lambda\in \Lambda}\lambda_t^{\textrm{tr}}\bar{Z}_t - \bar{Z}_t^{\textrm{tr}}\lambda^Q_t\Big)dt\ge0,  \quad t\in [0,T],
\end{equation}
by using that  $\bar{K}$ is non-decreasing. 
 By (\ref{eq:ForYbarandKbar}-\ref{eq:KbarandPositive}) and $\bar{Y}\in\mathcal{S}^\infty$, the local martingale $\bar{Z}\cdot W^Q$ is bounded from below, and thus is a supermartingale. 
As $\bar{Y}\in\mathcal{S}^\infty$, the integral of (\ref{eq:KbarandPositive}) on $[0,T]$ is in $L^1(Q)$ and so $\bar{Y}$ is a $Q$-supermartingale. 
\end{proof}

\begin{proof}[Derivation of (\ref{eq:GoodDealUpperForPut}),(\ref{eq:GoodDealHedgingForPut})]
The stochastic exponential $\mathcal{E}\big((\varepsilon/\sqrt{\nu})\cdot W^{\nu}\big)$ is a uniformly integrable martingale which defines a measure $\bar{Q}\in\overline{\mathcal{Q}^{\textrm{ngd}}}\supseteq \mathcal{Q}^{\textrm{ngd}}$
(see (\ref{eq:QngdBARDefinition}) for definition of $\overline{\mathcal{Q}^{\textrm{ngd}}}\subset \mathcal{M}^e$)
with Girsanov kernel $\bar{\lambda}:=\varepsilon/\sqrt{\nu}$, i.e.\  $d\bar{Q}/dP = \mathcal{E}\big((\varepsilon/\sqrt{\nu})\cdot W^{\nu}\big)$. Indeed, applying  \citep[][Thm.2.4 and Sect.6]{CheriditoFilipovicYor} one gets that 
$\mathcal{E}\big((\varepsilon/\sqrt{\nu})\cdot W^{\nu}\big)$ and $S=S_0\mathcal{E}\left(\sqrt{\nu}\cdot W^S\right)$ are uniformly integrable $P$- respectively $Q$-martingales. 
The variance process $\nu$ under $\bar{Q}$ is again a CIR process with parameters $(\bar{a},b,\beta,\rho)$ where $\bar{a}:=a+\beta\varepsilon\sqrt{1-\rho^2}>a$ and the Feller condition 
$\beta^2\le 2\bar{a}$ still holds. For a put option $X=(K-S_T)^+ \in L^\infty$, $\bar{Y}_t:=E^{\bar{Q}}_t[X]$ are given by the Heston formula \citep[cf.][]{Heston}, applied 
under $\bar{Q}$ (instead of $P$). Since  the Heston price is  non-decreasing in the mean reversion level of the variance process \citep[][Prop.5.3.1]{OuldAly} one expects that 
$\pi^u_t\left(X\right) =\bar{Y}_t= E^{\bar{Q}}_t[X]$. Let us make this precise. For $Q\in\overline{\mathcal{Q}^{\mathrm{ngd}}}$ with Girsanov kernel $\lambda$ satisfying 
$\abs{\lambda}\le \varepsilon/\sqrt{\nu}$, one  has $Y^Q_T=\bar{Y}_T=X$ with $Y^Q_t=E^Q_t[X]$. Using Feynman-Kac, $\bar{Y}_t=u(t,S_t,\nu_t)$ for a function 
$u\in \mathcal{C}^{1,2,2}([0,T]\times\R^+\times\R^+)$ with $\frac{\partial u}{\partial \nu}\ge 0$ \citep[see][Thm.5.3.1, Cor.5.3.1]{OuldAly}. By It\^o's formula and change of measure 
follows
\begin{align}
	d\bar{Y}_t = &\beta\sqrt{1-\rho^2}\sqrt{\nu_t}\big(\lambda_t-\frac{\varepsilon}{\sqrt{\nu_t}}\big)\frac{\partial u}{\partial \nu}(t,S_t,\nu_t)dt+\beta\sqrt{1-\rho^2}\sqrt{\nu_t}\frac{\partial u}{\partial \nu}(t,S_t,\nu_t)dW^{Q,\nu}_t\notag\\
		     &+\Big(S_t\sqrt{\nu_t}\frac{\partial u}{\partial S}(t,S_t,\nu_t)+\beta\rho\sqrt{\nu_t}\frac{\partial u}{\partial \nu}(t,S_t,\nu_t)\Big)dW^{S}_t,\quad t\in[0,T].\label{eq:ItogivesdPbarunderQ}
\end{align}
Since $X$ is bounded, then $\bar{Y}$ is in $\mathcal{S}^\infty(Q)$ and a $Q$-supermartingale  by (\ref{eq:ItogivesdPbarunderQ}) . Hence  $Y^Q_t \le \bar{Y}_t$ for all $Q\in\overline{\mathcal{Q}^{\textrm{ngd}}}$, which by Part 1. of \citet[][Thm.3.7]{KentiaPhD} 
implies the claim and thus we obtain the Heston type formula (\ref{eq:GoodDealUpperForPut}). 

Since $\bar{Q}\in\overline{\mathcal{Q}^{\text{ngd}}}$ and $\pi^u_0(X)=E^{\bar{Q}}[X]$ with $X\in L^\infty$,
Corollary \ref{cor:GDBsolvesBSDEunderOptimalMeas} implies that the good-deal bound is the $Y$-component of the minimal 
solution $(\bar{Y},\bar{Z})\in\mathcal{S}^\infty\times\h^2 $ (note $P=\widehat{Q}$) of the BSDE (\ref{eq:BSDEmINsOL}) with generator $g_t(z)=\bar{\lambda}_tz^2=\varepsilon z^2/\sqrt{\nu_t},$ for $z=(z^1,z^2)$, and 
terminal condition $X$. Now consider the strategy 
\begin{equation*}
 \bar{\phi}_t=\bar{Z}^1_t=S_t\sqrt{\nu_t}\frac{\partial u}{\partial S}(t,S_t,\nu_t)+\beta\rho\sqrt{\nu_t}\frac{\partial u}{\partial \nu}(t,S_t,\nu_t)=S_t\sqrt{\nu_t}\ \Delta_t+\frac{\beta\rho}{2}\mathcal{V}_t.
\end{equation*}
Clearly $\bar{\phi}$ is in the set $\Phi = \h^2(\R)$ of permitted trading strategies since $\bar{Z}\in \h^2 (\R^2)$.  Recall that  $\mathcal{P}^{\text{ngd}}$ consists of 
$dQ/dP=\mathcal{E}\big((\lambda^S,\lambda^\nu)\cdot W\big)$ such that $\big\lvert(\lambda^S,\lambda^\nu)\big\rvert\le \varepsilon/\sqrt{\nu}$ with $(\lambda^S,\lambda^\nu)$ being bounded. 
For $Q\in \mathcal{P}^{\text{ngd}}$, any  wealth process $\phi\cdot W^S$, $\phi\in\Phi$,  is thus in $\mathcal{S}^1(Q)$. 
As $\mathcal{Q}^{\text{ngd}}\subseteq \mathcal{P}^{\text{ngd}}$ holds,  clearly $\pi^u_t(X) \le \rho_t(X-\int_t^T\phi_sdW^S_s)$ for any strategy $\phi \in \Phi$.
To prove that $\bar{\phi}$ is a good-deal hedging strategy, we show the reverse inequality $\pi^u_t(X) \ge E^Q_t\big[X-\int_t^T\bar{\phi}_sdW^S_s\big]$ for all $Q\in\mathcal{P}^{\text{ngd}}$. Let $Q\in\mathcal{P}^{\text{ngd}}$ with Girsanov kernel $(\lambda^S,\lambda^\nu)$.
Like in (\ref{eq:ItogivesdPbarunderQ}), we obtain for any stopping time $\tau$ that
\begin{align}
	\bar{Y}_{\tau\wedge T}- \int_{\tau\wedge t}^{\tau\wedge T}\bar{\phi}_sdW^S_s &= \bar{Y}_{\tau\wedge t} + \int_{\tau\wedge t}^{\tau\wedge T}\beta\sqrt{1-\rho^2}\sqrt{\nu_s}\big(\lambda^\nu_s-\frac{\varepsilon}{\sqrt{\nu_s}}\big)\frac{\partial u}{\partial \nu}(s,S_s,\nu_s)ds\notag\\
						  &\quad\ \qquad + L_{\tau\wedge T}-L_{\tau\wedge t},\label{eq:ToConcludeHedStraHeston}
\end{align}
for the local $Q$-martingale $L := \int_0^\cdot\beta\sqrt{1-\rho^2}\sqrt{\nu_s}\frac{\partial u}{\partial \nu}(s,S_s,\nu_s)dW^{Q,\nu}_s$. 
By $\frac{\partial u}{\partial \nu}\ge0$ and $\big\lvert\lambda^\nu\big\rvert\le \varepsilon/\sqrt{\nu}$
follows that
\(
	\bar{Y}_{\tau\wedge T}- \int_{\tau\wedge t}^{\tau\wedge T}\bar{\phi}_sdW^S_s 
\) 
is less than $ \bar{Y}_{\tau\wedge t} + L_{\tau\wedge T}-L_{\tau\wedge t}$.
Localizing $L$ along a sequence of stopping times $\tau_n\uparrow \infty$ and taking conditional $Q$-expectations 
yields $E^Q_t[\bar{Y}_{\tau_n\wedge T}- \int_{\tau_n\wedge t}^{\tau_n\wedge T}\bar{\phi}_sdW^S_s] \le \bar{Y}_{\tau_n\wedge t}$. Using $X\in L^\infty$ and $\bar{\phi}\cdot W^S\in S^1(Q)$, the claim then follows by 
dominated convergence. 
Hence (\ref{eq:GoodDealHedgingForPut}) holds for $\mathcal{V}_t:=\frac{\partial u}{\partial \sigma}(t,S_t,\nu_t) = 2\sigma_t\frac{\partial u}{\partial \nu}(t,S_t,\nu_t)$ and volatility $\sigma_t=\sqrt{\nu_t}$.
\end{proof}

\bibliographystyle{abbrvnat}
\bibliography{Hedging-under-generalized-good-deal-bounds-and-model-uncertainty_Becherer_Kentia}

\end{document}